\newtheorem{theorem}{Theorem}
\newtheorem{corollary}[theorem]{Corollary}
\newtheorem{definition}[theorem]{Definition}
\newtheorem{lemma}[theorem]{Lemma}
\newtheorem{proposition}[theorem]{Proposition}
\newtheorem{remark}[theorem]{Remark}
\newtheorem{theorem}{Theorem}
\newtheorem{definition}{Definition}
\newtheorem{lemma}{Lemma}
\newtheorem{remark}{Remark}
\newtheorem{corollary}{Corollary}
\newtheorem{proposition}{Proposition}
\def\QED{\mbox{\rule[0pt]{1.5ex}{1.5ex}}}
\newcommand{\qed}{\hfill \QED}
\newcommand{\tr}{\operatorname{Tr}}
\newcommand{\ket}[1]{|#1\rangle}
\newcommand{\bra}[1]{\langle#1|}
\newcommand{\beq}{\begin{equation}}
\newcommand{\eeq}{\end{equation}}
\newcommand{\rar}{\rightarrow}
\newcommand{\1}{\mathbbm{1}}
\newcommand{\supp}{\operatorname{supp}}
\newcommand{\ox}{\otimes}
\newcommand{\nb}{\nonumber}
\newcommand{\mc}{\mathcal}
\newcommand{\mb}{\mathbb}
\newenvironment{proofof}[1]{\vspace*{5mm} \par \noindent
{\it Proof of #1:\hspace{2mm}}}{\qed
}
\begin{document}
\title{Strong Converse Exponent for Entanglement-Assisted Communication}

\author{Ke~Li, Yongsheng~Yao

\thanks{The work of Ke Li was supported by the National Natural Science Foundation
of China (No. 61871156, No. 12031004). The work of Yongsheng Yao was supported by the
National Natural Science Foundation of China (No. 61871156, No. 12071099).

Ke Li is with the Institute for Advanced Study in Mathematics,
Harbin Institute of Technology, Nangang District, Harbin 150001,
China (e-mail:carl.ke.lee@gmail.com, keli@hit.edu.cn).

Yongsheng Yao is with the Institute for Advanced Study in Mathematics and
School of Mathematics, Harbin Institute of Technology, Nangang District,
Harbin 150001, China (e-mail:yongsh.yao@gmail.com).}}
\date{}

\maketitle
\begin{abstract}
We determine the exact strong converse exponent for entanglement-assisted
classical communication of a quantum channel. Our main contribution is the
derivation of an upper bound for the strong converse exponent which is
characterized by the sandwiched R{\'e}nyi divergence. It turns out that this
upper bound coincides with the lower bound of Gupta and Wilde (Commun.\ Math.\
Phys.\ 334:867--887, 2015). Thus, the strong converse exponent follows from the
combination of these two bounds. Our result has two implications. Firstly,
it implies that the exponential bound for the strong converse property of
quantum-feedback-assisted classical communication, derived by
Cooney, Mosonyi and Wilde (Commun.\ Math.\ Phys.\ 344:797--829, 2016), is optimal.
This answers their open question in the affirmative. Hence, we have determined
the exact strong converse exponent for this problem as well. Secondly, due to
an observation of Leung and Matthews, it can be easily extended to deal with the
transmission of quantum information under the assistance of entanglement or
quantum feedback, yielding similar results. The above findings provide, for
the first time, a complete operational interpretation to the channel's
sandwiched R{\'e}nyi information of order $\alpha > 1$.
\end{abstract}

\begin{IEEEkeywords}
quantum channel,
strong converse exponent,
entanglement-assisted communication,
sandwiched R\'enyi information,
quantum feedback
\end{IEEEkeywords}

\section{Introduction}
\label{sec:introduction}
Understanding the ultimate limit that the laws of quantum mechanics impose on our
capability to communicate information is a topic of great significance.
Started in the 1970s, the investigation of the quantum Shannon theory has made huge
achievements, yet lots of important problems remain unsolved~\cite{Wilde2013quantum}.
The setting of entanglement-assisted classical communication of quantum channels
turns out to be of particular interest. It is regarded as the most natural quantum
generalization of the classical channel coding problem considered by Shannon~\cite{Shannon1948mathematical}.

The entanglement-assisted classical capacity $C_E(\mc{N})$ of a quantum channel
$\mc{N}$ quantifies the maximal rate of classical information the channel
$\mc{N}$ can reliably transmit, when encoding among multiple uses of $\mc{N}$ is
permitted, and unlimited entanglement shared by the sender and the
receiver are available~\cite{BSST1999entanglement}. The formula of $C_E(\mc{N})$
was derived by Bennett, Shor, Smolin and Thapliyal more than two decades
ago~\cite{BSST2002entanglement}, and it is given by the maximal quantum mutual
information that the channel $\mc{N}$ can generate, in formal analogy to Shannon's
formula for the capacity of classical channels. Later on, Bowen showed that feedback
provides no increase to the entanglement-assisted capacity~\cite{Bowen2004quantum},
paralleling the fact that classical feedback does not change the
capacity of classical channels~\cite{CoverThomas1991elements}. The quantum reverse
Shannon theorem established by Bennett, Devetak, Harrow, Shor and Winter
in~\cite{BDHSW2014quantum} further gives that the quantity $C_E(\mc{N})$ is the
minimal rate of classical communication needed to simulate the channel $\mc{N}$
with the assistance of unlimited entanglement (see also~\cite{BCR2011the} for an
alternative proof).

As a result of the quantum reverse Shannon theorem, we know that the
entanglement-assisted classical capacity of a quantum channel satisfies the strong
converse property~\cite{BDHSW2014quantum,BCR2011the}. That is, if one transmits
classical information at a rate larger than the capacity $C_E(\mc{N})$, then the
probability of successfully decoding the messages must decay to $0$ exponentially
fast, as the number of channel uses increases. This fact implies that $C_E(\mc{N})$
is a critical changing point for entanglement-assisted communication. The exact rate
of this exponential decay that can be achieved by the best strategy is called the
\emph{strong converse exponent}.

The formula of the strong converse exponent of quantum channels for entanglement-assisted
classical communication was unknown. This is in contrast to the fact that the strong
converse exponents of classical channels~\cite{Arimoto1973converse, DueckKorner1979reliability,
CsiszarKorner1982feedback} and classical-quantum channels~\cite{MosonyiOgawa2017strong},
have been well understood. In~\cite{GuptaWilde2015multiplicativity},
Gupta and Wilde have derived a lower bound in terms of the sandwiched R\'enyi
divergence~\cite{MDSFT2013on, WWY2014strong}, exploiting the multiplicativity of the
completely bounded $p$-norms~\cite{DJKR2006multiplicativity} as well as the
``entanglement-assisted meta-converse'' property~\cite{MatthewsWehner2014finite}.
Subsequently, Cooney, Mosonyi and Wilde~\cite{CMW2016strong} have strengthened this
result, showing that the lower bound obtained in~\cite{GuptaWilde2015multiplicativity}
still holds even if additional classical or quantum feedback from the receiver to the
sender is allowed. Note that a lower bound for the strong converse exponent translates
to an upper bound for the success probability.

In this paper, we derive an upper bound for the strong converse exponent of
entanglement-assisted classical communication. Our upper bound coincides with the lower
bound of Gupta and Wilde~\cite{GuptaWilde2015multiplicativity}. Thus, the combination
of these two bounds lets us completely determine the exact strong converse exponent of  entanglement-assisted classical communication. We construct a two-step proof, following
the work of Mosonyi and Ogawa~\cite{MosonyiOgawa2017strong}. In the
first step, we prove a Dueck-K\"orner-type~\cite{DueckKorner1979reliability} upper
bound which is expressed using the quantum relative entropy, and we show that it is
equivalent to an Arimoto-type~\cite{Arimoto1973converse} bound as a transform of the
log-Euclidean R{\'e}nyi information. In the second step, we employ multiple techniques
to turn the suboptimal bound derived in the first step into the final one, which is
of the similar form, but now the log-Euclidean R{\'e}nyi information is replaced by
the sandwiched R\'enyi information. Compared to the work~\cite{MosonyiOgawa2017strong}
which deals with the classical-quantum channels, we need to develop new methods to
overcome the technical difficulties in coping with entanglement-assisted communication
over general quantum channels. On the one hand, we introduce a type of the log-Euclidean
R{\'e}nyi information of a quantum channel, which is defined with respect to an
ensemble of input states, and this quantity will play an important role. One the
other hand, we treat the eigenvalues and the eigenvectors of the input density
matrices separately and we make several subtle uses of the minimax theorem.

As a corollary, we conclude that the lower bound of Cooney, Mosonyi and
Wilde~\cite{CMW2016strong} for the strong converse exponent of quantum-feedback-assisted
classical communication is optimal. So, we have determined the exact strong converse
exponent for this problem as well. This shows that additional feedback does not affect
the strong converse exponent of entanglement-assisted classical communication.

While we have focused on the transmission of classical information, our results can
be easily extended to deal with the transmission of quantum information. When free
entanglement is available, Leung and Matthews~\cite{LeungMatthews2015on} have shown
a quantitative equivalence between noisy classical communication and noisy quantum
communication, generalizing the interchange between perfect classical communication
and perfect quantum communication via teleportation~\cite{BBCJPW1993teleporting}
and dense coding~\cite{BennettWiesner1992communication}. This lets us obtain the
strong converse exponent for quantum communication under the assistance of
entanglement or quantum feedback.

We point out that entanglement-assisted communication has drawn much attention in the
community of quantum information. In addition to the results mentioned above, we review
some recent developments. One-shot characterizations of the entanglement-assisted
capacity of a quantum channel were given, e.g., in~\cite{DattaHsieh2012one,
MatthewsWehner2014finite, AJW2018building}. The second-order asymptotics was addressed
in~\cite{DTW2016on} , where an achievability bound was obtained. Progress on the
achievability part of the direct error exponent can be seen
in~\cite{QWW2018applications, Cheng2022a}. The moderate deviation expansion in the
high-error regime has been derived in~\cite{RNTB2023moderate}. The full resolutions of
these asymptotic characterizations are interesting open problems.

The remainder of this paper is organized as follows. In Section~\ref{sec:preliminaries}
we introduce the necessary notation, definitions and some basic properties. In
Section~\ref{sec:problem-results} we present the main problems and results. In Section~\ref{sec:proof-interm} we prove an intermediate upper bound for the strong
converse exponent of entanglement-assisted classical communication. Then in
Section~\ref{sec:proof-final} we improve it to obtain the final result. At last,
in Section~\ref{sec:discussion} we conclude the paper with some discussion.

\section{Preliminaries}
\label{sec:preliminaries}
\subsection{Notation and basic properties}
For a Hilbert space $\mc{H}$, we denote by $\mc{L}(\mc{H})$ the set of linear operators
on it, and we use $\mc{L}(\mc{H})_+$ for the set of positive semidefinite operators.
Density operators, or quantum states, are positive semidefinite operators with trace $1$.
Let $\mc{H}_A$ be the Hilbert space associated with a quantum system $A$. The
set of quantum states on $\mc{H}_A$ is denoted by $\mc{S}(\mc{H}_A)$, or $\mc{S}(A)$ for
short. Pure states are denoted by $\mc{S}_1(\mc{H}_A)$ or $\mc{S}_1(A)$.
$\1_A$ and $\pi_A$ are the identity operator and the maximally mixed state on $\mc{H}_A$,
respectively. The support of a positive semidefinite operator $X$ is denoted
by $\supp(X)$. For $\rho\in\mc{S}(A)$, $\mc{S}_\rho(A)$ represents the set of quantum
states whose support are included in $\supp(\rho)$. Throughout this paper, we are restricted
to quantum systems of finite dimension. We denote as $|A|$ the dimension of $\mc{H}_A$.

Let $\mc{X}$ denote a finite alphabet set. For the sequence $x^n\in\mc{X}^{\times n}$,
the type~\cite{Csiszar1998method} $P_{x^n}$ is defined as the empirical distribution of
$x^n$, i.e.,
\begin{equation}
\label{type:1}
P_{x^n}(a)=\frac{1}{n} \sum_{i=1}^{n} \delta_{x_i,a}, \quad \forall a \in \mc{X}.
\end{equation}
The notation $\mc{T}_n$ is used for the set of all types. We can bound the number of
types as
\begin{equation}
\label{type:2}
|\mc{T}_n| \leq (n+1)^{|\mc{X}|}.
\end{equation}
For $t \in \mc{T}_n$, we denote by $T_n^t$  the set of all sequences of type $t$, i.e.,
\begin{equation}
\label{type:3}
T_n^t:=\{x^n ~|~ P_{x^n}=t \}.
\end{equation}
For a pure state $\ket{\psi}_{AA'}$ with Schmidt decomposition $\ket{\psi}_{AA'}=\sum_{x=1}^{|A|} \sqrt{p(x)}\ket{a_x}_A \ox \ket{a_x}_{A'}$, we can write $\ket{\psi}_{AA'}^{\ox n}$ as
\begin{equation}
\label{equ:typedecompose}
\ket{\psi}_{AA'}^{\ox n}=\sum_{t \in \mc{T}_n} \sqrt{p^n(t)} \ket{\Psi^t}_{A^nA'^n},
\end{equation}
where $\mc{T}_n$ is the set of types with respect to the alphabet set $\mc{X}=\{1,2,\ldots,|A|\}$,
\begin{align}
&p^n(t)=\sum_{x^n\in T_n^t}p(x_1)p(x_2)\cdots p(x_n), \quad\text{and}  \\
&\ket{\Psi^t}_{A^nA'^n}=\frac{1}{\sqrt{|T_n^t|}}
\sum_{x^n\in T_n^t}\ket{a^n_{x^n}}_{A^n}\ox \ket{a^n_{x^n}}_{A'^n}
\end{align}
with $\ket{a^n_{x^n}}=\ket{a_{x_1}}\ox\ket{a_{x_2}}\ox\cdots\ox\ket{a_{x_n}}$.

A quantum channel (or quantum operation) $\mc{N}$ is a linear, completely positive and trace-preserving~(CPTP) map. We denote by $\mc{N}_{A \rar B}$ a quantum channel
$\mc{N}: \mc{L}(\mc{H}_A) \rar \mc{L}(\mc{H}_B)$. A quantum measurement is represented by
a set of positive semidefinite operators $\{M_x\}_x$ such that $\sum_xM_x=\1$.
When making this measurement on a system in the state $\rho$, we get outcome $x$
with probability $\tr\rho M_x$.

Let $\sigma \in \mc{L}(\mc{H})$ be a self-adjoint operator. We denote the number of
distinct eigenvalues of $\sigma$ as $v(\sigma)$. Let the spectral projections of $\sigma$
be $P_1, \ldots, P_{v(\sigma)}$. The pinching map $\mc{P}_\sigma$ associated with
$\sigma$ is given by
\beq
\mc{P}_\sigma: X \mapsto \sum^{v(\sigma)}_{i=1} P_i X P_i.
\eeq
The pinching inequality~\cite{Hayashi2002optimal} says that for any positive semidefinite
operator $X$, we have
\begin{equation}
\label{eq:pinchingine}
X \leq v(\sigma) \mc{P}_\sigma(X).
\end{equation}

Let $S_n$ be the group of permutations over a set of $n$ elements. The natural representation
of $S_n$ on $\mc{H}_B^{\ox n}$ is given by the unitary transformations
\beq\label{eq:permutation}
W^{\iota}_{B^n} : \ket{\psi_1} \ox \ldots \ox \ket{\psi_n}
\mapsto  \ket{\psi_{\iota^{-1}(1)}} \ox \ldots \ox \ket{\psi_{\iota^{-1}(n)}},
\quad \ket{\psi_i} \in \mc{H}_B, \iota \in S_n.
\eeq
We denote by $\mc{S}_{\text{sym}}(B^n)$ the set of symmetric states on $\mc{H}_B^{\ox n}$.
That is,
\beq
\mc{S}_{\text{sym}}(B^n):=\{\sigma_{B^n} \in \mc{S}(B^n)~|~
W^{\iota}_{B^n} \sigma_{B^n}W^{\iota \dagger}_{B^n}=\sigma_{B^n}, \ \forall\ \iota \in S_n\},
\eeq
where $W^{\iota \dagger}_{B^n}$ is the adjoint of $W^{\iota}_{B^n}$. There exists a
symmetric state that dominates all the others, as stated in Lemma~\ref{lem:u-sym} below.
Two different constructions are given by~\cite{CKR2009postselection}
and~\cite{Hayashi2009universal}, respectively.
Detailed arguments can be found in~\cite[Lemma 1]{HayashiTomamichel2016correlation}
and~\cite[Appendix A]{MosonyiOgawa2017strong}.
\begin{lemma}
\label{lem:u-sym}
For every finite-dimensional system $B$ and every $n \in \mb{N}$, there exists a symmetric state
$\sigma_{B^n}^u \in \mc{S}_{\rm{sym}}(B^n)$ such that every symmetric state $\sigma_{B^n} \in \mc{S}_{\rm{sym}}(B^n)$ is dominated by $\sigma_{B^n}^u$ as
\begin{equation}
\sigma_{B^n} \leq v_{n,|B|}\sigma_{B^n}^u,
\end{equation}
where $v_{n,|B|} \leq (n+1)^{\frac{(|B|+2)(|B|-1)}{2}}$ is a polynomial of $n$. The number of
different eigenvalues of $\sigma_{B^n}^u$ is also upper bounded by $v_{n,|B|}$.
\end{lemma}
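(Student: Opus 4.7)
The plan is to obtain $\sigma_{B^n}^u$ through Schur-Weyl duality, following the construction of~\cite{Hayashi2009universal}. First I would decompose
\[
\mc{H}_B^{\ox n} \cong \bigoplus_{\lambda \in \Lambda_{n,|B|}} \mc{U}_\lambda \ox \mc{V}_\lambda,
\]
where $\Lambda_{n,|B|}$ is the set of Young diagrams with $n$ boxes and at most $|B|$ rows, $\mc{U}_\lambda$ carries the irreducible $S_n$-representation indexed by $\lambda$, and $\mc{V}_\lambda$ carries the corresponding irreducible representation of the unitary group on $\mc{H}_B$. Under this isomorphism the permutation unitary $W^\iota_{B^n}$ acts as $\bigoplus_\lambda \pi_\lambda(\iota) \ox \1_{\mc{V}_\lambda}$. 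By Schur's lemma, any symmetric state $\sigma_{B^n}$ lies in the commutant of the $S_n$-action and therefore has the block form
\[
\sigma_{B^n} = \bigoplus_{\lambda \in \Lambda_{n,|B|}} p_\lambda \frac{\1_{\mc{U}_\lambda}}{\dim \mc{U}_\lambda} \ox \sigma_\lambda,
\]
for some probability distribution $\{p_\lambda\}_\lambda$ on $\Lambda_{n,|B|}$ and density operators $\sigma_\lambda$ on $\mc{V}_\lambda$.

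Next I would take as the candidate universal state the uniform mixture over blocks,
\[
\sigma_{B^n}^u := \frac{1}{|\Lambda_{n,|B|}|} \bigoplus_{\lambda \in \Lambda_{n,|B|}} \frac{\1_{\mc{U}_\lambda}}{\dim \mc{U}_\lambda} \ox \frac{\1_{\mc{V}_\lambda}}{\dim \mc{V}_\lambda},
\]
which is manifestly symmetric and block-scalar. Since $p_\lambda \leq 1$ and $\sigma_\lambda \leq \1_{\mc{V}_\lambda}$ for each $\lambda$, a blockwise comparison yields
\[
\sigma_{B^n} \leq \bigl( |\Lambda_{n,|B|}| \cdot \max_{\lambda \in \Lambda_{n,|B|}} \dim \mc{V}_\lambda \bigr) \, \sigma_{B^n}^u,
\]
so the remaining work is to estimate this prefactor and the eigenvalue count.

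For the counting, each element of $\Lambda_{n,|B|}$ is a weakly decreasing sequence $(\lambda_1, \ldots, \lambda_{|B|})$ of nonnegative integers summing to $n$, fully determined by its first $|B|-1$ entries, each lying in $\{0, \ldots, n\}$; hence $|\Lambda_{n,|B|}| \leq (n+1)^{|B|-1}$. Weyl's dimension formula gives $\dim \mc{V}_\lambda = \prod_{i<j} (\lambda_i - \lambda_j + j - i)/(j - i)$. Writing each factor as $(\lambda_i - \lambda_j)/(j-i) + 1$ and using $\lambda_i - \lambda_j \leq n$, each of the $|B|(|B|-1)/2$ factors is at most $n + 1$, so $\dim \mc{V}_\lambda \leq (n+1)^{|B|(|B|-1)/2}$. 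Multiplying the two bounds gives $v_{n,|B|} \leq (n+1)^{(|B|-1)(|B|+2)/2}$. Finally, $\sigma_{B^n}^u$ is a positive scalar multiple of the identity on each of the $|\Lambda_{n,|B|}|$ blocks, so its number of distinct eigenvalues is at most $|\Lambda_{n,|B|}| \leq v_{n,|B|}$.

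The main obstacle is conceptual rather than computational: one has to bring in Schur-Weyl duality together with Weyl's dimension formula. Once these tools are available, the choice of $\sigma_{B^n}^u$ as a flat mixture across the Schur-Weyl blocks is essentially forced, and the domination inequality and eigenvalue bound both reduce to elementary bookkeeping.
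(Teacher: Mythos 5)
Your proof is correct, and it is essentially the same argument as the one the paper relies on: the paper states Lemma~\ref{lem:u-sym} without proof, citing the Schur--Weyl construction of~\cite{Hayashi2009universal} as detailed in~\cite[Lemma 1]{HayashiTomamichel2016correlation}, which is exactly the block decomposition, flat mixture over Young diagrams, and dimension counting you carry out. All the estimates ($|\Lambda_{n,|B|}|\leq (n+1)^{|B|-1}$, $\dim\mc{V}_\lambda\leq (n+1)^{|B|(|B|-1)/2}$, and the resulting exponent $(|B|+2)(|B|-1)/2$) check out.
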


\subsection{Quantum R{\'e}nyi divergences}
The classical R{\'e}nyi divergence~\cite{Renyi1961measures} is a crucial information
quantity. In the quantum domain, there can be many inequivalent generalizations of the
R{\'e}nyi divergence. In this paper, we are concerned with the \emph{sandwiched R{\'e}nyi
divergence} and the \emph{log-Euclidean R{\'e}nyi divergence}. The sandwiched R\'enyi
divergence was introduced in~\cite{MDSFT2013on} and~\cite{WWY2014strong}, and it is one
of the only two quantum R{\'e}nyi divergences that have found operational meanings (the
other one is Petz's R{\'e}nyi divergence~\cite{Petz1986quasi}).

\if0
One is referred to~\cite{MosonyiOgawa2017strong} for an overall introduction of these R{\'e}nyi divergences. Since the birth of these R{\'e}nyi quantities, people have witnessed
their increasingly important role in quantum information processing tasks.
In this paper, we are concerned with the \emph{sandwiched R{\'e}nyi divergence} and the
\emph{log-Euclidean R{\'e}nyi divergence}. The sandwiched R\'enyi divergence was introduced in~\cite{MDSFT2013on} and~\cite{WWY2014strong}.
\fi

\begin{definition}
\label{def:sandwiched-RD}
Let $\alpha \in (0,1)\cup(1,+\infty)$, and let $\rho\in\mc{S}(\mc{H})$ be a quantum
state and $\sigma\in\mc{L}(\mc{H})_+$ be positive semidefinite. When $\alpha > 1$ and $\supp(\rho)\subseteq\supp(\sigma)$ or $\alpha \in (0,1)$ and $\supp(\rho)\not\perp\supp(\sigma)$,
the sandwiched R{\'e}nyi divergence of order $\alpha $ is defined as
\beq
D_{\alpha}^*(\rho \| \sigma)
:=\frac{1}{\alpha-1} \log \tr {({\sigma}^{\frac{1-\alpha}{2\alpha}} \rho {\sigma}^{\frac{1-\alpha}{2\alpha}})}^\alpha;
\eeq
otherwise, we set $D_{\alpha}^*(\rho \| \sigma)=+\infty$.
\end{definition}

The expression of the log-Euclidean R{\'e}nyi divergence appeared already
in~\cite{HiaiPetz1993the} and~\cite{OgawaNagaoka2000strong}, for certain range of
the R{\'e}nyi parameter. However, its significance has been recognized only quite
recently in~\cite{MosonyiOgawa2017strong}, where it has been thoroughly studied and was
for the first time employed as an intermediate quantity in the derivation of the strong
converse exponent for classical-quantum channels. This name was first called
in~\cite{CHT2019quantum}.
\begin{definition}
\label{def:LogE-RD}
Let $\alpha \in (0,1)\cup(1,+\infty)$, and let $\rho\in\mc{S}(\mc{H})$ be a quantum
state and $\sigma\in\mc{L}(\mc{H})_+$ be positive semidefinite. The log-Euclidean
R{\'e}nyi divergence of order $\alpha$ is defined as
\beq\label{eq:def-logE-RD-1}
D_{\alpha}^{\flat}(\rho \| \sigma)
:=\frac{1}{\alpha-1} \log \tr 2^{\alpha\log\rho+(1-\alpha)\log\sigma}
\eeq
if $\rho$ and $\sigma$ are of full rank; otherwise, we replace the expression in
the logarithm by
\beq\label{eq:def-logE-RD-2}
\lim_{\epsilon\searrow 0}\tr 2^{\alpha\log(\rho+\epsilon\1)
+(1-\alpha)\log(\sigma+\epsilon\1)}.
\eeq
\end{definition}

\begin{remark}
The existence of the limit in Eq.~\eqref{eq:def-logE-RD-2} was shown
in~\cite[Lemma 3.1]{MosonyiOgawa2017strong}, where an alternative expression was
given. The condition for $D_{\alpha}^{\flat}(\rho \| \sigma)=+\infty$ is
$\alpha > 1$ and $\supp(\rho)\nsubseteq\supp(\sigma)$ or $\alpha \in (0,1)$ and
$\supp(\rho)\cap\supp(\sigma)=\{0\}$; see~\cite[Remark 3.4]{MosonyiOgawa2017strong}.
\end{remark}

For a bipartite state $\rho_{AB} \in \mc{S}(AB)$ and $\alpha \in (0,1)\cup(1,+\infty)$, the
sandwiched R{\'e}nyi mutual information is defined as~\cite{WWY2014strong, Beigi2013sandwiched}
\begin{equation}
\label{equ:mutualinf}
I_{\alpha}^*(A:B)_\rho=\min_{\sigma_B} D_{\alpha}^*(\rho_{AB} \| \rho_A \ox \sigma_B).
\end{equation}

When $\alpha$ goes to $1$, both $D_{\alpha}^*$ and $D_{\alpha}^{\flat}$ converge to the quantum
relative entropy~\cite{Umegaki1954conditional}
\beq
  D(\rho\|\sigma):= \begin{cases}
                      \tr(\rho(\log\rho-\log\sigma)) & \text{ if }\supp(\rho)\subseteq\supp(\sigma), \\
                      +\infty                        & \text{ otherwise.}
                    \end{cases}
\eeq
Similarly, the quantum mutual information for $\rho_{AB} \in \mc{S}({AB})$, defined as
\beq
I(A:B)_\rho :=D(\rho_{AB} \| \rho_A \ox \rho_B),
\eeq
is the limiting case of the sandwiched R{\'e}nyi mutual information. So, we extend the
definition of the R{\'e}nyi information quantities to include the case $\alpha=1$ by
taking the limits. We refer the readers to~\cite{MDSFT2013on,WWY2014strong,MosonyiOgawa2017strong}
for the computation of these limits.

An ensemble of quantum states is a set of pairs $\{p_x,\rho_x\}_{x\in\mc{X}}$, where
$\{p_x\}_{x\in\mc{X}}$ is a probability distribution and $\{\rho_x\}_x$ are quantum
states. Its Holevo information is defined as
\beq
\chi\big(\{p_x,\rho_x\}_x\big):=\sum_xp_xD\big(\rho_x\|\sum_xp_x\rho_x\big).
\eeq

In the following proposition, we collect some properties of the R{\'e}nyi information
quantities.

\begin{proposition}
\label{prop:renyid}
Let $\rho \in \mc{S}(\mc{H})$ be a quantum state and $\sigma \in \mc{L}(\mc{H})_+$ be
positive semidefinite. The sandwiched R{\'e}nyi divergence and the log-Euclidean
R{\'e}nyi divergence satisfy the following properties.
\begin{enumerate}[(i)]
  \item Monotonicity in R{\'e}nyi
        parameter~\cite{Beigi2013sandwiched,MDSFT2013on,MosonyiOgawa2017strong}:
        if $0\leq \alpha \leq \beta$, then $D_{\alpha}^{(t)}(\rho \| \sigma) \leq
        D^{(t)}_{\beta}(\rho \| \sigma)$, for $(t)=*$ and $(t)=\flat$;
  \item Monotonicity in $\sigma$~\cite{MDSFT2013on,MosonyiOgawa2017strong}:
        if $\sigma' \geq \sigma$, then $D_{\alpha}^{(t)}(\rho \| \sigma')
        \leq D_{\alpha}^{(t)}(\rho \| \sigma)$, for $(t)=*$, $\alpha \in [\frac{1}{2},+\infty)$
        and for $(t)=\flat$, $\alpha \in [0,+\infty)$;
  \item Variational representation~\cite{MosonyiOgawa2017strong}:
        the log-Euclidean R{\'e}nyi divergence has the following variational representation
        \begin{equation}
          D_{\alpha}^{\flat}(\rho \| \sigma)=s(\alpha) \max_{\tau \in \mc{S}_\rho(\mc{H})} s(\alpha)
          \{D(\tau \| \sigma)-\frac{\alpha}{\alpha-1}D(\tau \| \rho)\},
        \end{equation}
        where $s(\alpha)=1$ for $\alpha \in (1,+\infty)$ and $s(\alpha)=-1$ for $\alpha \in (0,1)$, and $\mc{S}_\rho(\mc{H})$ denotes the set of states whose supports are contained in the support of $\rho$;
  \item Additivity of sandwiched R{\'e}nyi mutual information~\cite{HayashiTomamichel2016correlation}:
        for two states $\rho_{AB}\in\mc{S}(AB)$ and $\sigma_{A'B'} \in \mc{S}(A'B')$, and for any
        $\alpha \in [\frac{1}{2}, +\infty)$, we have
        \beq
        I_{\alpha}^*(AA':BB')_{\rho \ox \sigma}=I_{\alpha}^*(A:B)_\rho+I_{\alpha}^*(A':B')_\sigma ;
        \eeq
  \item Convexity in $\sigma$~\cite{MosonyiOgawa2017strong}:
        the function $\sigma\mapsto D_{\alpha}^{(t)}(\rho \| \sigma)$ is convex
        for $(t)=*$, $\alpha \in [\frac{1}{2},+\infty)$ and for $(t)=\flat$, $\alpha \in [0,+\infty)$;
  \item Approximation by pinching~\cite{MosonyiOgawa2015quantum,HayashiTomamichel2016correlation}:
        for $\alpha\geq 0$, we have
        \beq
        D^*_\alpha(\mc{P}_\sigma(\rho)\|\sigma)
        \leq D^*_\alpha(\rho\|\sigma)
        \leq D^*_\alpha(\mc{P}_\sigma(\rho)\|\sigma)+2\log v(\sigma).
        \eeq
\end{enumerate}
\end{proposition}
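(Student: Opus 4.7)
Since the proposition compiles properties that each come with a published proof, the plan is to point to the cited references while sketching the structural idea behind each item. Items (i), (ii), and (v) are standard operator-theoretic properties. For the sandwiched family, monotonicity in $\alpha$ follows by establishing log-convexity of $\alpha\mapsto\tr(\sigma^{(1-\alpha)/2\alpha}\rho\sigma^{(1-\alpha)/2\alpha})^\alpha$ via H\"older's inequality, while monotonicity and convexity in $\sigma$ for $\alpha\geq 1/2$ rest on operator anti-monotonicity of $x\mapsto x^{(1-\alpha)/\alpha}$ combined with Lieb's concavity theorem. For the log-Euclidean family $D_\alpha^\flat$, once item (iii) is in hand these three properties reduce to standard properties of the Umegaki relative entropy in its second argument, combined with the Gibbs representation of the objective.

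The central item is the variational representation (iii). I would compute the constrained optimum of
\begin{equation*}
F(\tau) := D(\tau\|\sigma) - \tfrac{\alpha}{\alpha-1}D(\tau\|\rho)
= \tfrac{1}{1-\alpha}\tr\tau\bigl(\log\tau - \alpha\log\rho - (1-\alpha)\log\sigma\bigr)
\end{equation*}
over $\tau\in\mc{S}_\rho(\mc{H})$, first assuming $\rho$ and $\sigma$ are of full rank. Setting $K=\alpha\log\rho+(1-\alpha)\log\sigma$, a Lagrange-multiplier calculation identifies the stationary point as the Gibbs state $\tau^*= e^K/\tr e^K$, at which $\tr\tau^*(\log\tau^*-K) = -\log\tr e^K$; substituting back reproduces $\tfrac{1}{\alpha-1}\log\tr 2^{\alpha\log\rho+(1-\alpha)\log\sigma}$, which is $D_\alpha^\flat(\rho\|\sigma)$. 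The sign factor $s(\alpha)$ in the statement simply records that for $\alpha>1$ this is a maximum of $F$, whereas for $\alpha\in(0,1)$ the extremum is a minimum of $F$. The general rank-deficient case is handled by the $\epsilon$-regularization built into Definition~\ref{def:LogE-RD} together with continuity; this is the step I expect to be the main technical obstacle, but it has already been settled in~\cite[Lemma~3.1 and Remark~3.4]{MosonyiOgawa2017strong}, which I would invoke directly.

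Item (iv), additivity of $I_\alpha^*$, has an easy subadditive direction by tensoring the individual optimizers in~\eqref{equ:mutualinf}; the harder superadditive direction from~\cite{HayashiTomamichel2016correlation} proceeds by a minimax argument together with additivity of $D_\alpha^*$ on product states. Item (vi) is the most elementary: the left inequality is the data-processing inequality applied to the pinching channel (which fixes $\sigma$), while the right inequality combines the pinching operator inequality $\rho\leq v(\sigma)\mc{P}_\sigma(\rho)$ with the scaling and monotonicity properties of $D_\alpha^*$ in its first argument to extract the $2\log v(\sigma)$ gap. Since all of these steps are either purely mechanical or already carried out in the cited works, I expect the write-up to be short and to consist mostly of pointing to the relevant lemmas.
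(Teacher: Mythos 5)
The paper offers no proof of this proposition at all---it is stated purely as a compilation of known results with citations---so your plan of deferring to the same references, supplemented by sketches (in particular the correct Gibbs-variational computation behind item (iii)), is exactly what the paper does. One small caution on item (vi): combining $\rho\le v(\sigma)\mc{P}_\sigma(\rho)$ with the scaling $D_\alpha^*(c\rho\|\sigma)=\frac{\alpha}{\alpha-1}\log c+D_\alpha^*(\rho\|\sigma)$ naively yields a gap of $\frac{\alpha}{\alpha-1}\log v(\sigma)$, which exceeds $2\log v(\sigma)$ for $\alpha\in(1,2)$, so the cited proofs (cf.\ Lemma~\ref{lem:pinching-sre} in the appendix) use a more refined argument in that range rather than the direct route you describe.
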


\section{Problems and Results}
\label{sec:problem-results}
\subsection{Entanglement-assisted classical communication}
Suppose that the sender Alice and the receiver Bob are connected by a quantum channel
$\mc{N}_{A \rar B}$, and they share arbitrary entangled quantum states. Alice wants to
send classical messages to Bob. We start with a description of a code for a single use
of the channel $\mc{N}_{A \rar B}$. Let $\mc{M}=\{1,\ldots , M\}$ be the set of messages
to be transmitted. Let $\rho_{\tilde{A}\tilde{B}}$ be the entangled state shared by
Alice ($\tilde{A}$) and Bob ($\tilde{B}$). To send the message $m\in\mc{M}$, Alice
applies a CPTP map $\mc{E}^m_{\tilde{A} \rar A}$ to her half of the state
$\rho_{\tilde{A}\tilde{B}}$. Then she inputs the $A$ system to the channel
$\mc{N}_{A \rar B}$. After receiving the channel output $B$, Bob performs a decoding
measurement $\{\Lambda^m_{B\tilde{B}}\}_{m \in \mc{M}}$ on the system $B\tilde{B}$ to
recover the classical message. The collection
\beq
\mc{C}\equiv\big(\{\mc{E}_{\tilde{A}\rar A}^m\}_{m \in \mc{M}},
\{\Lambda^m_{B\tilde{B}}\}_{m \in \mc{M}}, \rho_{\tilde{A}\tilde{B}}\big)
\eeq
is called a code for $\mc{N}_{A \rar B}$. The size of the code is $|\mc{C}|:=M$, and
the success probability of $\mc{C}$ is
\begin{equation}\label{eq:Ps}
P_s(\mc{N}_{A \rar B},\mc{C}):=\frac{1}{M} \sum_{m=1}^{M} \tr \mc{N}_{A \rar B}
\circ\mc{E}^m_{\tilde{A}\rar A}(\rho_{\tilde{A}\tilde{B}})\Lambda^m_{B\tilde{B}}.
\end{equation}

The art in information theory is to make codes for multiple uses of the channel. We denote
a code for $\mc{N}_{A \rar B}^{\ox n}$ by $\mc{C}_n$. The channel capacity is the maximal
rate of information transmission that can be achieved by a sequence of codes $\{\mc{C}_n\}_
{n \in \mb{N}}$, under the condition that the success probability goes to $1$ asymptotically.
Formally, we define the entanglement-assisted classical capacity as
\beq
C_E(\mc{N}_{A \rar B})
:=\sup\left\{\liminf_{n\rar\infty} \frac{1}{n} \log |\mc{C}_n| ~\Big |~
\lim_{n \rar \infty} P_s(\mc{N}_{A \rar B}^{\ox n},\mc{C}_n)=1 \right\}.
\eeq
It has been established in~\cite{BSST2002entanglement} that
\beq\label{eq:entascap}
C_E(\mc{N}_{A \rar B})
=\max_{\psi_{A'A} \in \mc{S}_1(A'A)} I(A':B)_{\mc{N}_{A \rar B}(\psi_{A'A})}.
\eeq
Note that $\mc{S}_1(A'A)$ is the set of pure states, and without loss of generality, we
can assume that $|A'|=|A|$ in Eq.~\eqref{eq:entascap}.

\subsection{Main result}
It is known that when the transmission rate is larger than the capacity $C_E(\mc{N}_{A \rar B})$,
the success probability $P_s(\mc{N}_{A \rar B}^{\ox n}, \mc{C}_n)$ inevitably converges to $0$ as $n\rar\infty$~\cite{BDHSW2014quantum}. This is called the strong converse property.
The strong converse exponent characterizes the speed of this convergence. It is defined
as the best rate of exponential decay of the success probability, for a fixed transmission
rate $R$:
\begin{equation}
  \label{eq:def-scee}
sc(\mc{N}_{A \rar B},R)
:=\inf\left\{-\liminf_{n\rar\infty}\frac{1}{n}\log P_s(\mc{N}_{A \rar B}^{\ox n},\mc{C}_n)
~\Big|~ \liminf_{n\rar\infty} \frac{1}{n} \log |\mc{C}_n| \geq R \right\}.
\end{equation}

Gupta and Wilde~\cite{GuptaWilde2015multiplicativity} have proven that
\begin{equation}
\label{eq:gupta-wilde}
sc(\mc{N}_{A \rar B},R)
\geq \sup_{\alpha>1}\frac{\alpha-1}{\alpha}\big(R-I_{\alpha}^*(\mc{N}_{A \rar B})\big),
\end{equation}
where $I_{\alpha}^*(\mc{N}_{A \rar B})$ is the sandwiched R{\'e}nyi information of the
channel $\mc{N}_{A \rar B}$, defined as
\begin{equation}
\label{equ:channelmut}
I_{\alpha}^*(\mc{N}_{A \rar B})
:=\max_{\psi_{A'A}\in\mc{S}_1(A'A)} I_{\alpha}^*(A':B)_{\mc{N}_{A \rar B}(\psi_{A'A})}.
\end{equation}
Without loss of generality, we assume that $|A'|=|A|$ in Eq.~\eqref{equ:channelmut}.

The main contribution of the present paper is to prove the other direction. As a result,
we have established the following equality.

\begin{theorem}
\label{thm:main}
Let $\mc{N}_{A \rar B}$ be a quantum channel. For any transmission rate $R>0$, the
strong converse exponent for entanglement-assisted classical communication is
\begin{equation}
\label{eq:scexponent}
sc(\mc{N}_{A \rar B},R)
= \sup_{\alpha>1} \frac{\alpha-1}{\alpha}\big(R-I_{\alpha}^*(\mc{N}_{A \rar B})\big).
\end{equation}
\end{theorem}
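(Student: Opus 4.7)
The Gupta--Wilde bound~\eqref{eq:gupta-wilde} already gives the ``$\ge$'' direction of~\eqref{eq:scexponent}, so my plan is solely to establish the matching upper bound
\[
sc(\mc{N}_{A\rar B},R)\le \sup_{\alpha>1}\frac{\alpha-1}{\alpha}\bigl(R-I_\alpha^*(\mc{N}_{A\rar B})\bigr).
\]
Concretely, for every rate $R$ I must exhibit a sequence of codes $\mc{C}_n$ with $\liminf_n\tfrac{1}{n}\log|\mc{C}_n|\ge R$ whose success probabilities satisfy $-\liminf_n\tfrac{1}{n}\log P_s(\mc{N}^{\ox n},\mc{C}_n)\le \sup_{\alpha>1}\tfrac{\alpha-1}{\alpha}\bigl(R-I_\alpha^*(\mc{N})\bigr)$. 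Following the two-step template of Mosonyi and Ogawa~\cite{MosonyiOgawa2017strong}, the plan is to first prove a weaker form of this inequality in which $I_\alpha^*$ is replaced by its log-Euclidean counterpart $I_\alpha^\flat$, and then to sharpen it.

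\textbf{Step 1 (Dueck--K\"orner-type bound via $D_\alpha^\flat$).} Fix a pure input $\psi_{AA'}\in\mc{S}_1(AA')$ with Schmidt coefficients $p$ and use the type decomposition~\eqref{equ:typedecompose} to write $\ket{\psi}^{\ox n}$ as a superposition of the type vectors $\ket{\Psi^t}_{A^nA'^n}$ weighted by $\sqrt{p^n(t)}$. A natural code construction uses these type components as the building blocks of shared entanglement, encoding the messages by isometries compatible with the type structure and decoding with a tailored POVM. A Dueck--K\"orner-style counting argument then gives a lower bound on the average success probability whose exponent is a quantum relative entropy $D(\omega\,\|\,\rho_{A'}^{\ox n}\ox\sigma_{B^n})$ minimized over states $\omega$ compatible with the rate constraint. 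The variational formula in Proposition~\ref{prop:renyid}(iii) then converts this into the Arimoto-type form
\[
\sup_{\alpha>1}\frac{\alpha-1}{\alpha}\bigl(R-I_\alpha^\flat(\mc{N},\mc{E})\bigr),
\]
where $I_\alpha^\flat(\mc{N},\mc{E})$ is an ensemble-dependent log-Euclidean R\'enyi information of $\mc{N}$ with respect to the input ensemble $\mc{E}$ induced by $\psi$. Introducing and developing this new channel quantity is the auxiliary device flagged in the introduction.

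\textbf{Step 2 (Upgrading $D_\alpha^\flat$ to $D_\alpha^*$).} Step~1 is suboptimal because $D_\alpha^\flat\le D_\alpha^*$ in general. To sharpen it I would replace the reference state on $B^n$ by the universal symmetric state $\sigma_{B^n}^u$ of Lemma~\ref{lem:u-sym}: it dominates every symmetric candidate at the cost of only a polynomial factor $v_{n,|B|}$ and so does not affect the exponent. Applying the pinching inequality~\eqref{eq:pinchingine} with respect to $\sigma_{B^n}^u$, together with Proposition~\ref{prop:renyid}(vi), produces a pinched channel output that commutes with $\sigma_{B^n}^u$; on commuting operators $D_\alpha^\flat$ and $D_\alpha^*$ coincide, so the log-Euclidean quantity of Step~1 is upgraded to its sandwiched version up to a polynomial penalty. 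Finally, to pass from the ensemble-level $I_\alpha^\flat(\mc{N},\mc{E})$ to the channel-level $I_\alpha^*(\mc{N})$ in~\eqref{equ:channelmut}, I would split the input optimization into an outer one over eigenbases and an inner one over eigenvalues, and apply the minimax theorem---justified by convexity of $D_\alpha^*$ in its second argument (Proposition~\ref{prop:renyid}(v))---to interchange these optimizations with the minimization over $\sigma_B$ in the definition~\eqref{equ:mutualinf} of $I_\alpha^*$.

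\textbf{Main obstacle.} Step~2 is the technical heart and the main point of departure from~\cite{MosonyiOgawa2017strong}. For classical-quantum channels the codewords are mutually commuting, so the pinching and minimax manipulations are essentially classical. In the entanglement-assisted setting the inputs $\psi_{AA'}$ are fully quantum: the type decomposition~\eqref{equ:typedecompose} is attached to a single fixed Schmidt basis, whereas~\eqref{equ:channelmut} optimizes over \emph{all} pure states, and the eigenvectors of distinct candidate inputs generically fail to commute with each other or with $\sigma_{B^n}^u$. The delicate task is to absorb this additional non-commutativity into polynomial-in-$n$ factors via Lemma~\ref{lem:u-sym} and pinching, while still recovering the channel-level sandwiched R\'enyi information after the successive min/max interchanges.
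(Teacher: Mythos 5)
Your plan reproduces the paper's architecture faithfully: a Dueck--K\"orner/Arimoto-type intermediate bound in terms of an ensemble-dependent $I_\alpha^\flat$, followed by an upgrade to $I_\alpha^*$ via the universal symmetric state, pinching, and minimax interchanges. The divergence ordering you invoke ($D_\alpha^\flat\le D_\alpha^*$ for $\alpha>1$) and the role of the type decomposition are also correct. However, Step~1 as written has a genuine gap: the sentence ``a natural code construction\ldots a Dueck--K\"orner-style counting argument then gives a lower bound on the average success probability'' conceals precisely the part that does not transfer from the classical or classical-quantum setting. There, the argument rests on constant-composition codebooks and type counting over channel \emph{inputs}; here there is no classical codebook. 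The paper's actual mechanism has two components you do not supply. First, the coding problem is reduced to Holevo--Schumacher--Westmoreland coding for the ensemble $\{\frac{1}{|\mc{U}|},\,U_A(\sum_t q(t)\tau^t_{AB})U_A^\dagger\}_{U_A\in\mc{U}}$ of \emph{dummy} states $\tau^t_{AB}$, via a dense-coding scheme with Heisenberg--Weyl unitaries acting blockwise on the maximally entangled type components, using the transpose identity $U^t\ox\1\ket{\Psi^t}=\1\ox(U^t)^T\ket{\Psi^t}$ to move the encoding from the receiver's side to the sender's side. Second, the near-perfect decoding guaranteed for the dummy states is transferred to the true channel outputs $\mc{N}(\Psi^t_{AA'})$ by the Ogawa--Nagaoka estimate $\tr(\rho^{\ox n}-2^{nr}\sigma^{\ox n})_+\to0$ for $r>D(\rho\|\sigma)$; it is this change of measure that produces the exponent $\sum_t q(t)D(\tau^t_{AB}\|\mc{N}(\Psi^t_{AA'}))$ (note: the exponent is a divergence relative to the channel output, not relative to a product reference $\rho_{A'}^{\ox n}\ox\sigma_{B^n}$ as your sketch suggests --- the product-reference divergence instead controls the achievable \emph{rate}). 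Without these two ingredients, the variational formula of Proposition~\ref{prop:renyid}(iii) has nothing to attach to. You also omit the case distinction realizing the $(\cdot)_+$ in the variational expression: when the rate exceeds the dummy mutual information one must pad the codebook and pay the rate gap in addition to the change-of-measure exponent (the paper's $F_1$/$F_2$ split).

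Your Step~2 matches the paper in spirit, but to make it work you additionally need the $m$-fold blocking (apply Step~1 to the pinched channel $\mc{P}_{\sigma^u_{B^m}}\circ\mc{N}^{\ox m}$ and let $m\to\infty$), the additivity of $I_\alpha^*$ to return to a single-letter quantity, and the convexity/concavity facts in $\lambda$ and in the reduced input state $\rho$ that justify the final minimax over pure inputs; these are stated but not argued in your plan.
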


Theorem~\ref{thm:main} has two implications as illustrated in the next two subsections.
This enables us to also determine the strong converse exponents for quantum-feedback-assisted
communication and for transmitting quantum information in the same settings. The
proof of Theorem~\ref{thm:main} is given in Section~\ref{sec:proof-interm} and Section~\ref{sec:proof-final}.

\subsection{Implication to quantum-feedback-assisted communication}
For a quantum channel $\mc{N}_{A \rar B}$, quantum-feedback-assisted classical communication
is the transmission of classical information with the assistance of noiseless quantum
feedback from the receiver to the sender~\cite{Bowen2004quantum,CMW2016strong}. Since
quantum feedback can generate arbitrary bipartite entangled state, it is a stronger
resource than shared entanglement. Indeed, using quantum teleportation~\cite{BBCJPW1993teleporting},
we can easily see that quantum feedback is equivalent to shared entanglement plus
classical feedback.

The quantum-feedback-assisted classical capacity of a quantum channel
$\mc{N}_{A \rar B}$ is still $C_E(\mc{N}_{A \rar B})$~\cite{Bowen2004quantum}. When the
transmission rate is larger than $C_E(\mc{N}_{A \rar B})$, the strong converse property
was proved in~\cite{BDHSW2014quantum} via the technique of channel simulation. The exact
strong converse exponent was not known, however. We point out that, a general
quantum-feedback-assisted code can be very complicated, where the encoding can adaptively
depend on previous feedbacks. We refer to~\cite[Section 5]{CMW2016strong} for an explicit
description of the quantum-feedback-assisted codes. The strong converse exponent in
this setting is defined similar to Eq.~\eqref{eq:def-scee}, and we denote it by
$sc^{\leftarrow}(\mc{N}_{A \rar B},R)$.

Cooney, Mosonyi and Wilde have proved in~\cite{CMW2016strong} that the right hand side of
Eq.~\eqref{eq:scexponent} is a lower bound for $sc^{\leftarrow}(\mc{N}_{A \rar B},R)$ as well.
As an open question, they ask whether it is optimal. Having shown that it is already
achievable under the assistance of shared entanglement, we answer their question in the
affirmative.

\begin{corollary}
\label{cor:feedback}
Let $\mc{N}_{A \rar B}$ be a quantum channel. For any transmission rate $R>0$, the strong
converse exponent for quantum-feedback-assisted classical communication is
\begin{equation}
\label{eq:scexponentf}
sc^{\leftarrow}(\mc{N}_{A \rar B},R)
= \sup_{\alpha>1}\frac{\alpha-1}{\alpha}\big(R-I_{\alpha}^*(\mc{N}_{A \rar B})\big).
\end{equation}
\end{corollary}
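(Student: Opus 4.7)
The plan is to deduce Corollary~\ref{cor:feedback} directly from Theorem~\ref{thm:main}, combined with the lower bound already established by Cooney, Mosonyi and Wilde~\cite{CMW2016strong}. Since they proved
\[
sc^{\leftarrow}(\mc{N}_{A \rar B},R)
\geq \sup_{\alpha>1}\frac{\alpha-1}{\alpha}\big(R-I_{\alpha}^*(\mc{N}_{A \rar B})\big),
\]
only the matching upper bound on $sc^{\leftarrow}$ needs to be supplied here. I would prove it by observing that quantum-feedback assistance dominates entanglement assistance as a resource: any entanglement-assisted code can be implemented as a quantum-feedback-assisted code of the same rate and the same success probability, whence $sc^{\leftarrow}(\mc{N}_{A\rar B},R) \leq sc(\mc{N}_{A\rar B},R)$, and then Theorem~\ref{thm:main} closes the loop.

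More concretely, the key step is the following simulation. Suppose $\mc{C}_n=\big(\{\mc{E}^m_{\tilde{A}\rar A^n}\}_m,\{\Lambda^m_{B^n\tilde{B}}\}_m,\rho_{\tilde{A}\tilde{B}}\big)$ is an entanglement-assisted code for $\mc{N}_{A\rar B}^{\ox n}$. Before any forward use of the channel, Bob prepares the bipartite state $\rho_{\tilde{A}\tilde{B}}$ locally and sends the $\tilde{A}$ register to Alice through the noiseless quantum feedback channel. The parties now share precisely $\rho_{\tilde{A}\tilde{B}}$, after which Alice applies $\mc{E}^m_{\tilde{A}\rar A^n}$, feeds the output through $\mc{N}_{A\rar B}^{\ox n}$, and Bob applies the decoding $\{\Lambda^m_{B^n\tilde{B}}\}_m$, ignoring any further opportunity for feedback. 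The resulting protocol is a legitimate quantum-feedback-assisted code in the sense of~\cite[Section~5]{CMW2016strong}; it uses the channel $\mc{N}_{A\rar B}$ exactly $n$ times in the forward direction (so the rate $\frac{1}{n}\log|\mc{C}_n|$ is unchanged), and by construction its success probability coincides with $P_s(\mc{N}_{A\rar B}^{\ox n},\mc{C}_n)$ of Eq.~\eqref{eq:Ps}.

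Consequently, for every $n$ the optimal success probability over quantum-feedback-assisted codes of rate at least $R$ upper bounds the corresponding quantity for entanglement-assisted codes. Taking $-\frac{1}{n}\log$, passing to $\liminf_{n\rar\infty}$, and then to the infimum in the definition~\eqref{eq:def-scee}, yields $sc^{\leftarrow}(\mc{N}_{A\rar B},R) \leq sc(\mc{N}_{A\rar B},R)$. Substituting the value of $sc(\mc{N}_{A\rar B},R)$ from Theorem~\ref{thm:main} and combining with the Cooney--Mosonyi--Wilde lower bound yields the claimed equality. There is no substantive obstacle, since all the analytic work has been carried out inside Theorem~\ref{thm:main}; the only point that warrants care is to verify that distributing $\rho_{\tilde{A}\tilde{B}}$ via the feedback channel consumes no forward channel uses and therefore preserves the rate, which is immediate from the definition of a quantum-feedback-assisted protocol.
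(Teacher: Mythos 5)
Your proposal is correct and follows exactly the paper's route: the Cooney--Mosonyi--Wilde bound supplies the lower bound, and the inequality $sc^{\leftarrow}(\mc{N}_{A \rar B},R)\leq sc(\mc{N}_{A \rar B},R)$ (which the paper states as obvious, and which you justify by letting Bob distribute the shared state through the feedback channel) combined with Theorem~\ref{thm:main} supplies the upper bound. Your explicit simulation argument merely fills in the step the paper leaves implicit.
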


Corollary~\ref{cor:feedback} is an immediate consequence of~\cite[Theorem 4]{CMW2016strong}
and our Theorem~\ref{thm:main}, noting that by definition
$sc^{\leftarrow}(\mc{N}_{A \rar B},R)\leq sc(\mc{N}_{A \rar B},R)$ is obvious.

\subsection{Implication to quantum information transmission}
The above results can be extended to the case of quantum information transmission.
Let us recall an entanglement-assisted code for a single use of the channel
$\mc{N}_{A \rar B}$, for transmitting quantum information that is stored in a
quantum system $M$. It consists of using a shared entangled state
$\rho_{\tilde{A}\tilde{B}}$ between the sender Alice and the receiver Bob,
applying local operation $\mc{E}_{M\tilde{A} \rightarrow A}$ at Alice's side,
feeding the system $A$ into $\mc{N}_{A \rightarrow B}$, and at last applying
local operation $\mc{D}_{B\tilde{B} \rightarrow M}$ at Bob's side. We denote by
\beq
\overline{\mc{C}}\equiv\big(\mc{E}_{M\tilde{A} \rightarrow A},
\mc{D}_{B\tilde{B} \rightarrow M}, \rho_{\tilde{A}\tilde{B}}\big)
\eeq
the code for $\mc{N}_{A \rar B}$. Its size is $|\overline{\mc{C}}|:=|M|$,
and its performance is quantified by the entanglement fidelity
\begin{equation}
P_f(\mc{N}_{A \rar B},\overline{\mc{C}})
:=F\big(\mc{D}_{B\tilde{B} \rar M}\circ \mc{N}_{A \rar B}\circ \mc{E}_{M\tilde{A}
\rightarrow A}(\Psi_{M'M} \ox \rho_{\tilde{A}\tilde{B}}), \Psi_{M'M}\big),
\end{equation}
where $\Psi_{M'M}$ is a maximally entangled pure state and $F(\rho,\sigma)=
\|\sqrt{\rho}\sqrt{\sigma}\|_1$ is the fidelity. We remark that the entanglement
fidelity $P_f$ can be regarded as an analogue of the success probability of
Eq.~\eqref{eq:Ps}. Denote a code for $\mc{N}_{A \rar B}^{\ox n}$ by
$\overline{\mc{C}}_n$. Then the strong converse exponent for entanglement-assisted
quantum communication is defined similar to Eq.~\eqref{eq:def-scee}---with $\mc{C}_n$
replaced by $\overline{\mc{C}}_n$ and $P_s$ replaced by $P_f$---and we denote it
as $sc^\text{q}(\mc{N}_{A \rightarrow B},R)$.

Let $P_f^{\star}(\mc{N}_{A \rar B}, k)$ be the optimal performance among all
entanglement-assisted codes for quantum information transmission with size $k$, and
$P_s^{\star}(\mc{N}_{A \rar B}, k)$ be the optimal success probability among all
entanglement-assisted codes for classical information transmission with size $k$.
Leung and Matthews~\cite[Appendix B]{LeungMatthews2015on} have derived the exact
relation
\begin{equation}
\label{eq:pf-ps}
(P_f^{\star}(\mc{N}_{A \rar B}, k))^2=P_s^{\star}(\mc{N}_{A \rar B}, k^2).
\end{equation}
The combination of Theorem~\ref{thm:main} and Eq.~(\ref{eq:pf-ps}) directly
gives the formula for $sc^\text{q}(\mc{N}_{A \rar B},R)$.

\begin{corollary}
\label{cor:quantum}
Let $\mc{N}_{A \rar B}$ be a quantum channel. For any transmission rate $R>0$, the
strong converse exponent for entanglement-assisted quantum communication is
\begin{equation}
\label{eq:scexponentq}
sc^\text{q}(\mc{N}_{A \rar B},R)
= \sup_{\alpha>1}\frac{\alpha-1}{\alpha}\big(R-\frac{1}{2}I_{\alpha}^*(\mc{N}_{A \rar B})\big).
\end{equation}
\end{corollary}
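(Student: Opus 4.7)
The plan is to reduce the quantum-information-transmission problem to the classical one via the Leung--Matthews identity~\eqref{eq:pf-ps}, and then invoke Theorem~\ref{thm:main}. The corollary should then follow essentially by algebra, as the authors indicate.

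First, I would rewrite the definition of $sc^\text{q}(\mc{N}_{A\rar B},R)$ in terms of the optimal entanglement fidelity. Since for each fixed block-length $n$ the performance of an optimal code of size $k_n$ realizes $P_f^{\star}(\mc{N}_{A\rar B}^{\ox n}, k_n)$, one has
\begin{equation*}
sc^\text{q}(\mc{N}_{A\rar B},R)=\inf\Big\{-\liminf_{n\rar\infty} \tfrac{1}{n}\log P_f^{\star}(\mc{N}_{A\rar B}^{\ox n}, k_n) \ \Big|\ \liminf_{n\rar\infty}\tfrac{1}{n}\log k_n\geq R\Big\},
\end{equation*}
and an analogous representation holds for $sc(\mc{N}_{A\rar B},R')$ with $P_s^{\star}$ in place of $P_f^{\star}$.

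Next, I would apply~\eqref{eq:pf-ps} with $k=k_n$ to obtain
\begin{equation*}
-\liminf_n \tfrac{1}{n}\log P_f^{\star}(\mc{N}_{A\rar B}^{\ox n}, k_n)=\tfrac{1}{2}\Big(-\liminf_n \tfrac{1}{n}\log P_s^{\star}(\mc{N}_{A\rar B}^{\ox n}, k_n^{2})\Big).
\end{equation*}
Setting $M_n:=k_n^{2}$, the quantum rate bound $\liminf_n \tfrac{1}{n}\log k_n\geq R$ becomes the classical rate bound $\liminf_n \tfrac{1}{n}\log M_n\geq 2R$, except that $M_n$ is now restricted to perfect squares. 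This restriction is harmless: the map $M\mapsto P_s^{\star}(\mc{N}_{A\rar B}^{\ox n}, M)$ is non-increasing in $M$ (enlarging the message set can only hurt the decoder), so any classical code sequence $M_n$ achieving the infimum in $sc(\mc{N}_{A\rar B},2R)$ can be replaced by the perfect-square sequence $\tilde{M}_n:=\lfloor\sqrt{M_n}\rfloor^{2}$, which keeps the rate bound and satisfies $-\log P_s^{\star}(\tilde{M}_n)\leq -\log P_s^{\star}(M_n)$. Taking the infimum on both sides then yields the identity
\begin{equation*}
sc^\text{q}(\mc{N}_{A\rar B},R)=\tfrac{1}{2}\,sc(\mc{N}_{A\rar B},2R).
\end{equation*}

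Finally, I would substitute Theorem~\ref{thm:main} evaluated at rate $2R$ and simplify, giving
\begin{equation*}
sc^\text{q}(\mc{N}_{A\rar B},R)=\tfrac{1}{2}\sup_{\alpha>1}\tfrac{\alpha-1}{\alpha}\big(2R-I_{\alpha}^{*}(\mc{N}_{A\rar B})\big)=\sup_{\alpha>1}\tfrac{\alpha-1}{\alpha}\big(R-\tfrac{1}{2}I_{\alpha}^{*}(\mc{N}_{A\rar B})\big),
\end{equation*}
matching Eq.~\eqref{eq:scexponentq}. All the genuinely hard work is absorbed into Theorem~\ref{thm:main} and into the one-shot equality~\eqref{eq:pf-ps} of Leung and Matthews, so the only point deserving any care is the perfect-square reduction sketched above; modulo that, the proof is a short deduction.
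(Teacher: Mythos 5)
Your proposal is correct and takes exactly the route the paper intends: the paper's own ``proof'' is the single remark that combining Theorem~\ref{thm:main} with Eq.~\eqref{eq:pf-ps} directly gives the formula, and your derivation of $sc^\text{q}(\mc{N},R)=\tfrac{1}{2}\,sc(\mc{N},2R)$ is the natural fleshing-out of that deduction. The one point you rightly flag---removing the perfect-square restriction on the message sizes via the monotonicity of $P_s^{\star}$ in the size of the message set---is handled correctly (taking $\tilde M_n=\lfloor\sqrt{M_n}\rfloor^2\le M_n$ preserves the rate since $M_n\to\infty$ and can only increase the optimal success probability).
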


If the shared entanglement is replaced by quantum feedback, the strong converse
exponent for quantum information transmission is still given by the right hand side
of Eq.~\eqref{eq:scexponentq}, which can be verified by a similar argument.

\section{An Upper Bound via the Log-Euclidean R\'enyi Information}
\label{sec:proof-interm}
In this section, we derive an intermediate upper bound for the strong converse exponent
of entanglement-assisted classical communication. It employs a type of R\'enyi information
of quantum channels, in terms of the log-Euclidean R{\'e}nyi divergence. This upper
bound serves as a first step for the proof of Theorem~\ref{thm:main}. The full proof
will be accomplished in Section~\ref{sec:proof-final}.

For brevity of expression, from now on we change the notation a little to use $A'$
to label the input system of the channel $\mc{N}$. The symbol $A$ is instead used to
label a bipartite state $\psi_{AA'}$ whose $A'$ part is to be acted on by the channel
$\mc{N}_{A' \rar B}$, resulting in a bipartite state on systems $A$ and $B$.

\begin{definition}
For the quantum channel $\mc{N}_{A' \rar B}$ and an arbitrary ensemble of bipartite quantum
states $\{q(t),\psi_{AA'}^t\}_{t\in\mc{T}}$ with $q$ a probability distribution, we define
\begin{equation}
\label{eq:mic1-logE}
I_{\alpha}^{\flat}\left(\mc{N}_{A' \rar B},\{q(t),\psi_{AA'}^t\}_t\right)
:= \sum_{t \in \mc{T}} q(t)\min_{\sigma_B\in\mc{S}(B)}
D_{\alpha}^{\flat}\big(\mc{N}_{A' \rar B}(\psi_{AA'}^t) \| \psi_A^t\ox \sigma_B\big).
\end{equation}
When the quantum state ensemble consists of only a single state $\psi_{AA'}$, we also
write
\begin{equation}
\label{eq:mic1-logE-s}
I_{\alpha}^{\flat}\left(\mc{N}_{A' \rar B},\psi_{AA'}\right)
:=\min_{\sigma_B\in\mc{S}(B)}
D_{\alpha}^{\flat}\big(\mc{N}_{A' \rar B}(\psi_{AA'}) \| \psi_A\ox \sigma_B\big).
\end{equation}
\end{definition}

We always use $\Psi$ to denote the maximally entangled state on two isomorphic Hilbert
spaces $\mc{H}$ and $\mc{H'}$, whose vector form can be written as
\beq
\ket{\Psi}_{\mc{H}\mc{H}'}=\frac{1}{\sqrt{|\mc{H}|}}
\sum_{x=1}^{|\mc{H}|} \ket{a_x}_\mc{H} \ox \ket{a_x}_{\mc{H}'},
\eeq
where $\{\ket{a_x}\}_x$ is a pre-fixed orthonormal basis.

\medskip
We first prove the following upper bound. Then we will further improve it to obtain the
main result of this section (Theorem~\ref{thm:strong-euc}).

\begin{proposition}
\label{prop:weak-euc}
Let $\mc{H}_A\cong\mc{H}_{A'}$, and let $\mc{H}_A=\oplus_{t \in \mc{T}} \mc{H}_A^t$ and
$\mc{H}_{A'}=\oplus_{t \in \mc{T}} \mc{H}_{A'}^t$ be decompositions of $\mc{H}_A$ and
$\mc{H}_{A'}$ into orthogonal subspaces with $\mc{H}_A^t\cong\mc{H}_{A'}^t$. Let
$\{q(t),\Psi_{AA'}^t\}_{t\in\mc{T}}$ be an ensemble of quantum states with
$\Psi_{AA'}^t$ the maximally entangled state on $\mc{H}_A^t \ox \mc{H}_{A'}^t$.
For the channel $\mc{N}_{A' \rar B}$ and any $R>0$, the strong converse exponent
for entanglement-assisted classical communication satisfies
\beq\label{eq:weak-euc}
sc(\mc{N}_{A' \rar B},R)
\leq \sup_{\alpha>1} \frac{\alpha-1}{\alpha}
\left\{R-I_{\alpha}^{\flat}\left(\mc{N}_{A'\rar B},\{q(t),\Psi_{AA'}^t\}_t\right)\right\}.
\eeq
\end{proposition}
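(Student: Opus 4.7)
The plan is to prove the proposition as an achievability result: exhibit a sequence $\{\mc{C}_n\}$ of entanglement-assisted codes for $\mc{N}_{A'\to B}^{\otimes n}$ at rate $R$ whose success probability attains the claimed exponential lower bound. Fix $\alpha>1$ and let $M_n=\lceil 2^{nR}\rceil$. Set $\Psi_{AA'}:=\sum_t q(t)\Psi^t_{AA'}$; this is block-diagonal on $\mc{H}_A=\bigoplus_t \mc{H}_A^t$, and its $n$-fold tensor admits the orthogonal decomposition $\Psi_{AA'}^{\otimes n}=\sum_{\vec t\in\mc{T}^n} q^n(\vec t)\Psi^{\vec t}_{A^nA'^n}$, concentrated on type sequences $\vec t$ of empirical distribution close to $q$.

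First I would construct the code via position-based coding: Alice and Bob share $M_n$ independent copies of $\Psi_{AA'}^{\otimes n}$, and to transmit message $m$, Alice passes the $A'^n$ subsystem of her $m$-th copy through $\mc{N}^{\otimes n}$. Bob decodes by sequential binary hypothesis tests against the universal symmetric state $\sigma^u_{B^n}$ from Lemma~\ref{lem:u-sym}, combining them via the Hayashi-Nagaoka operator inequality. The polynomial dominance and polynomial-many distinct eigenvalues of $\sigma^u_{B^n}$ let me absorb any permutation asymmetry in candidate reference states at subexponential cost, via the pinching inequality and Proposition~\ref{prop:renyid}(ii), (vi).

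The core analytic step converts pairwise testing errors into a log-Euclidean R\'enyi divergence. Using additivity of $D_\alpha^{\flat}$ on tensor products, for each $\vec t$ of empirical type $\approx q$ and a position-dependent reference $\bigotimes_i \sigma_B^{(t_i),*}$ where $\sigma_B^{(t,*)}$ attains $I_\alpha^{\flat}(\mc{N},\Psi^t)$, one obtains $D_\alpha^{\flat}\bigl(\mc{N}^{\otimes n}(\Psi^{\vec t})\,\bigl\|\,\Psi_A^{\vec t}\otimes\bigotimes_i \sigma_B^{(t_i),*}\bigr) = \sum_i I_\alpha^{\flat}(\mc{N},\Psi^{t_i})\approx n\,I_\alpha^{\flat}(\mc{N},\{q,\Psi\})$ by the type property. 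Invoking the variational formula of Proposition~\ref{prop:renyid}(iii) to realign the auxiliary test state $\tau$ with this block structure, and plugging the standard hypothesis-testing-to-R\'enyi bound into Hayashi-Nagaoka with the $\log M_n = nR$ union-bound overhead, I arrive at $-\tfrac{1}{n}\log P_s(\mc{N}^{\otimes n},\mc{C}_n)\leq \tfrac{\alpha-1}{\alpha}\bigl(R-I_\alpha^{\flat}(\mc{N},\{q,\Psi\})\bigr)+o(1)$; the supremum over $\alpha>1$ then delivers \eqref{eq:weak-euc}.

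The main obstacle I anticipate is that $D_\alpha^{\flat}$ evaluated on block-diagonal states is log-convex rather than linear in the block weights, so a naive iid analysis with probe $\Psi_{AA'}^{\otimes n}$ would yield a single-state R\'enyi information instead of the ensemble average $\sum_t q(t) I_\alpha^{\flat}(\mc{N},\Psi^t)$. Extracting the averaged form requires restricting to typical type sequences so that position-dependent per-block references $\sigma_B^{(t),*}$ can be used, then invoking Lemma~\ref{lem:u-sym} together with the variational representation of $D_\alpha^{\flat}$ to reconcile these position-dependent references with the single permutation-symmetric reference $\sigma^u_{B^n}$ actually used during decoding; the cost is only $\mathrm{poly}(n)$ factors that vanish in the exponential rate.
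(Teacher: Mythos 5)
Your proposal has a fundamental gap: the step ``plugging the standard hypothesis-testing-to-R\'enyi bound into Hayashi--Nagaoka with the $\log M_n = nR$ union-bound overhead'' cannot deliver a \emph{lower} bound on the success probability in the strong converse regime. Position-based coding with the Hayashi--Nagaoka operator inequality yields an upper bound on the error probability of the form $1-P_s \leq c_1\,\varepsilon_{\mathrm{I}} + c_2\, M_n\,\varepsilon_{\mathrm{II}}$; when $R$ exceeds the (R\'enyi) mutual information, the right-hand side exceeds $1$ and the bound is vacuous, telling you nothing about how slowly $P_s$ is allowed to decay. To prove \eqref{eq:weak-euc} you must exhibit codes whose success probability is at least $2^{-n(E+o(1))}$ with $E$ the claimed exponent, and no amount of tightening the union bound produces such a statement. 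The missing idea is the Dueck--K\"orner change-of-measure argument, which is exactly what the paper uses: first rewrite the right-hand side of \eqref{eq:weak-euc}, via the variational formula for $D_\alpha^{\flat}$ and Sion's minimax theorem (Proposition~\ref{prop:varF}), as an infimum over auxiliary states $\{\tau^t_{AB}\}_t$ of $\bigl(R-\sum_t q(t)D(\tau^t_{AB}\|\pi^t_A\otimes\tau^t_B)\bigr)_+ + \sum_t q(t)D\bigl(\tau^t_{AB}\|\mc{N}(\Psi^t_{AA'})\bigr)$. Then, when $\sum_t q(t)D(\tau^t_{AB}\|\pi^t_A\otimes\tau^t_B)>R$, one builds a code (Heisenberg--Weyl encodings on each block plus an HSW decoder, Lemma~\ref{lem:HSW}) that succeeds with probability tending to $1$ \emph{on the fictitious output states} $\bigl(\sum_t q(t)\tau^t_{AB}\bigr)^{\otimes n}$, and transfers this to the true outputs $\bigl(\sum_t q(t)\mc{N}(\Psi^t_{AA'})\bigr)^{\otimes n}$ at multiplicative cost $2^{-n(\sum_t q(t)D(\tau^t\|\mc{N}(\Psi^t))+\delta)}$ using the Ogawa--Nagaoka fact that $\tr(\rho^{\otimes n}-2^{nr}\sigma^{\otimes n})_+\to 0$ for $r>D(\rho\|\sigma)$. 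The complementary case is handled by padding a lower-rate code with dummy messages, which produces the $(\,\cdot\,)_+$ truncation. None of this structure is present in your proposal.

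Two smaller points. First, you never actually establish the identity between the Arimoto-form supremum over $\alpha$ and anything your coding analysis could produce; invoking Proposition~\ref{prop:renyid}(iii) ``to realign the auxiliary test state'' gestures at the variational formula but the minimax interchange (over $\sigma_B$ versus $\tau_{AB}$, and over $\lambda$ versus $\{\tau^t\}_t$) is where the real work lies and must be justified by convexity/concavity checks as in Proposition~\ref{prop:varF}. Second, the universal symmetric state $\sigma^u_{B^n}$ and the pinching map play no role at this stage of the paper; they enter only in Section~\ref{sec:proof-final} when upgrading $I_\alpha^{\flat}$ to $I_\alpha^{*}$, so importing them here adds machinery without addressing the actual difficulty.
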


Before proving Proposition~\ref{prop:weak-euc}, we derive a variational expression for
the right hand side of Eq.~\eqref{eq:weak-euc}. This is given in Proposition~\ref{prop:varF}
in a more general form. For brevity, we introduce the shorthand
\beq
F\left(\mc{N}_{A' \rar B},R,\{q(t),\psi_{AA'}^t\}_t\right)
\equiv\sup_{\alpha>1} \frac{\alpha-1}{\alpha}
\left\{R-I_{\alpha}^{\flat}\left(\mc{N}_{A'\rar B},\{q(t),\psi_{AA'}^t\}_t\right)\right\}.
\eeq

\begin{proposition}
\label{prop:varF}
For a channel $\mc{N}_{A'\rar B}$, any ensemble of quantum states $\{q(t),\psi_{AA'}^t\}_{t\in\mc{T}}$
and any $R \geq 0$, it holds that
\begin{align}
 &F\left(\mc{N}_{A' \rar B},R,\{q(t),\psi_{AA'}^t\}_t\right) \nb\\
=&\inf_{\{\tau_{AB}^t\}_t\in\mc{F}} \left\{\Big(R-\sum_{t \in \mc{T}} q(t)D(\tau_{AB}^t \| \psi_A^t \ox \tau_B^t)\Big)_+ +\sum_{t \in \mc{T}} q(t)D\big(\tau_{AB}^t \| \mc{N}(\psi_{AA'}^t)\big) \right\},
\end{align}
where $\mc{F}:=\big\{\{\tau_{AB}^{t}\}_{t\in\mc{T}}~|~\tau_{AB}^{t} \in \mc{S}_{\mc{N}(\psi_{AA'}^t)}(AB),~\forall t \in \mc{T}\big\}$, and $x_+:=\max\{0,x\}$
for a real number $x$.
\end{proposition}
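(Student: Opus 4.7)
The plan is to reduce Proposition~\ref{prop:varF} to two successive applications of Sion's minimax theorem, with the variational representation of $D_\alpha^\flat$ from item~(iii) of Proposition~\ref{prop:renyid} as the starting point.

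The first step eliminates the minimisation over $\sigma_B$ hidden inside $I_\alpha^\flat(\mc{N}_{A'\to B},\psi^t_{AA'})=\min_{\sigma_B}D_\alpha^\flat(\mc{N}(\psi^t_{AA'})\|\psi^t_A\ox\sigma_B)$. Substituting the variational formula turns this into a $\min_{\sigma_B}\max_{\tau^t_{AB}}$ of $f_\alpha(\tau^t,\sigma_B):=D(\tau^t_{AB}\|\psi^t_A\ox\sigma_B)-\tfrac{\alpha}{\alpha-1}D(\tau^t_{AB}\|\mc{N}(\psi^t_{AA'}))$. For $\alpha>1$, $f_\alpha$ is convex in $\sigma_B$ (only $-\tr\tau^t_B\log\sigma_B$ depends on $\sigma_B$) and concave in $\tau^t_{AB}$ (after cancellation the $\tau^t$-entropy carries the positive coefficient $1/(\alpha-1)$, and the remaining $\tau^t$-dependence is linear), so Sion's theorem exchanges the order. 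The pointwise minimiser $\sigma_B=\tau^t_B$ then yields
\[
I_\alpha^\flat(\mc{N}_{A'\to B},\psi^t_{AA'})=\max_{\tau^t_{AB}\in\mc{S}_{\mc{N}(\psi^t_{AA'})}(AB)}\Bigl\{D(\tau^t_{AB}\|\psi^t_A\ox\tau^t_B)-\tfrac{\alpha}{\alpha-1}D(\tau^t_{AB}\|\mc{N}(\psi^t_{AA'}))\Bigr\}.
\]
Averaging over $t$ with weights $q(t)$ and merging the independent maxima into a joint max over $\{\tau^t\}\in\mc{F}$ converts $F$ into $\sup_{\alpha>1}\inf_{\{\tau^t\}\in\mc{F}}\tfrac{\alpha-1}{\alpha}\bigl(R-G(\alpha,\{\tau^t\})\bigr)$, where $G$ denotes the $q$-average of the $t$-wise objectives above.

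The second step swaps this outer sup and inf. Setting $\lambda:=(\alpha-1)/\alpha\in(0,1)$, a direct rearrangement collapses the integrand to $\lambda\,a(\{\tau^t\})+b(\{\tau^t\})$ with $a:=R-\sum_tq(t)D(\tau^t_{AB}\|\psi^t_A\ox\tau^t_B)$ and $b:=\sum_tq(t)D(\tau^t_{AB}\|\mc{N}(\psi^t_{AA'}))$, which is linear in $\lambda$. Writing each summand as $(\lambda-1)H(\tau^t_{AB})-\lambda H(\tau^t_B)+\lambda\tr\tau^t_A\log\psi^t_A-\tr\tau^t_{AB}\log\mc{N}(\psi^t_{AA'})$ (with $H$ the von Neumann entropy) exhibits convexity in $\{\tau^t\}$ for every $\lambda\in[0,1]$: the first two terms are convex, since $\lambda-1\le 0$ and $\lambda\ge 0$ combine with the concavity of $H$, and the remainder is linear in $\tau^t$. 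Sion's theorem on the compact convex sets $[0,1]$ and $\mc{F}$ then interchanges the sup and the inf; a brief continuity argument (the inner infimum is concave in $\lambda$, hence continuous on $[0,1]$) shows that enlarging the sup from $(0,1)$ to $[0,1]$ costs nothing, and the elementary identity $\sup_{\lambda\in[0,1]}\{\lambda a+b\}=a_++b$ for fixed $\{\tau^t\}$ produces the positive-part expression in the statement.

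The principal obstacle is the convexity-in-$\{\tau^t\}$ check in the second step, because $\lambda a+b$ naively sums a concave-in-$\tau^t$ term $-\lambda D(\tau^t_{AB}\|\psi^t_A\ox\tau^t_B)$ and a convex-in-$\tau^t$ term $D(\tau^t_{AB}\|\mc{N}(\psi^t_{AA'}))$, so convexity is invisible at the level of divergences and only surfaces after the two $-H(\tau^t_{AB})$ contributions cancel against each other. A secondary technicality is verifying the hypotheses of Sion's minimax on possibly rank-deficient state domains, handled by the lower semicontinuity of the relative entropy and the observation that the optimiser $\sigma_B=\tau^t_B$ automatically keeps the objective finite.
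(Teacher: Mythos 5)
Your proposal is correct and follows essentially the same route as the paper's proof: the variational representation of $D_\alpha^\flat$, a first application of Sion's theorem to exchange the optimizations over $\sigma_B$ and $\tau^t_{AB}$ (with the optimizer $\sigma_B=\tau^t_B$), the reparametrization $\lambda=(\alpha-1)/\alpha$, a second Sion swap whose convexity hypothesis is verified by exactly the same entropy decomposition, and the identity $\sup_\lambda\{\lambda a+b\}=a_++b$. The only cosmetic differences are that you perform the first swap for each $t$ separately before averaging, and that you compactify $\lambda$ to $[0,1]$ (unnecessary, since Sion only requires compactness of the infimum variable's domain, here $\mc{F}$).
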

\begin{proof}
Define the following sets
\begin{align}
\mc{O}^t:=&\big\{ \sigma_B ~|~\sigma_B\in\mc{S}(B),\
\supp (\sigma_B)\supseteq\supp(\mc{N}(\psi_{A'}^t)) \big\},\\
\mc{F}^t:=&\big\{\tau_{AB}~|~\tau_{AB}\in \mc{S}_{\mc{N}(\psi_{AA'}^t)}(AB) \big\}
\end{align}
for all $t\in\mc{T}$. The minimization in Eq.~(\ref{eq:mic1-logE}) can be restricted to those
$\sigma_B$ whose support contains the support of $\mc{N}(\psi_{A'}^t)$, because
otherwise $D_{\alpha}^\flat(\mc{N}(\psi_{AA'}^t) \| \psi^t_A \ox \sigma_B)=+\infty$.
Thus we have
\begin{align}
&F\left(\mc{N}_{A' \rar B},R,\{q(t),\psi_{AA'}^t\}_t\right) \nb\\
=&\sup_{\alpha>1} \frac{\alpha-1}{\alpha} \left \{ R-\sum_{t \in \mc{T}} q(t)\inf_{\sigma_B\in\mc{O}^t}
D_{\alpha}^{\flat}(\mc{N}_{A' \rar B}(\psi_{AA'}^t) \| \psi_A^t\ox \sigma_B) \right \} \nb\\
=&\sup_{\alpha>1} \sum_{t \in \mc{T}} q(t) \sup_{\sigma_B\in\mc{O}^t} \inf_{\tau_{AB}\in\mc{F}^t}
  \left\{\frac{\alpha-1}{\alpha} \Big(R-D(\tau_{AB} \| \psi^t_A \ox \sigma_B)\Big)
  +D\big(\tau_{AB} \| \mc{N}(\psi_{AA'}^t)\big) \right\}, \label{equ:mini}
\end{align}
where the last equality follows from the variational expression of $D_{\alpha}^{\flat}(\rho \| \sigma)$~(see Proposition \ref{prop:renyid} (\romannumeral3)). Let
\begin{align}
f_t(\sigma_B, \tau_{AB})
:=&\frac{\alpha-1}{\alpha}\Big(R-D(\tau_{AB} \| \psi^t_A \ox \sigma_B)\Big)
   +D(\tau_{AB} \| \mc{N}(\psi_{AA'}^t)) \nb \\
 =&\frac{\alpha-1}{\alpha}\Big(R+\tr\left[\tau_{AB}\log(\psi^t_A \ox \sigma_B)\right]\Big)
   -\tr\left[\tau_{AB}\log\mc{N}(\psi_{AA'}^t)\right]-\frac{1}{\alpha}H(\tau_{AB}),
\end{align}
where $H(\rho):=-\tr(\rho\log\rho)$ is the von Neumann entropy. The relative entropy is
convex with respect to both arguments, and the von Neumann entropy is concave. So,
we can easily verify that, for any $\alpha>1$, $f_t$ satisfies the following properties:
\begin{enumerate}[(i)]
  \item for fixed $\tau_{AB}$, $f_t(\cdot,\tau_{AB})$ is concave and continuous on $\mc{O}^t$, and $\mc{O}^t$ is convex;
  \item for fixed $\sigma_{B}$, $f_t(\sigma_{B},\cdot)$ is convex and continuous on $\mc{F}^t$, and $\mc{F}^t$ is a compact convex set.
\end{enumerate}
So, Sion's minimax theorem (see Lemma~\ref{lemma:minimax} in the Appendix) applies. This lets us proceed as
\begin{align}
&F\left(\mc{N}_{A' \rar B},R,\{q(t),\psi_{AA'}^t\}_t\right) \nb\\
=&\sup_{\alpha>1} \sum_{t \in \mc{T}} q(t) \inf_{\tau_{AB}\in\mc{F}^t} \sup_{\sigma_B\in\mc{O}^t} \left\{\frac{\alpha-1}{\alpha} \Big(R-D(\tau_{AB} \| \psi^t_A \ox \sigma_B)\Big)
+D\big(\tau_{AB} \| \mc{N}(\psi_{AA'}^t)\big) \right\} \nb\\
=&\sup_{\alpha>1} \sum_{t \in \mc{T}} q(t) \inf_{\tau_{AB}\in\mc{F}^t} \left\{\frac{\alpha-1}{\alpha} \Big(R-D(\tau_{AB} \| \psi^t_A \ox \tau_B)\Big)
+D\big(\tau_{AB} \| \mc{N}(\psi_{AA'}^t)\big) \right\} \nb\\
=&\sup_{\lambda \in (0,1)}  \inf_{\{\tau_{AB}^t\}_t\in\mc{F}} \left \{\lambda\Big(R-\sum_{t \in \mc{T}} q(t)D(\tau_{AB}^t \| \psi^t_A \ox \tau_B^t)\Big)+\sum_{t \in \mc{T}} q(t)D\big(\tau_{AB}^t \| \mc{N}(\psi_{AA'}^t)\big) \right \} \nb\\
=&\inf_{\{\tau_{AB}^t\}_t\in\mc{F}} \sup_{\lambda \in (0,1)} \left\{\lambda\Big(R-\sum_{t \in \mc{T}} q(t)D(\tau_{AB}^t \| \psi^t_A \ox \tau_B^t)\Big)+\sum_{t \in \mc{T}} q(t)D\big(\tau_{AB}^t \| \mc{N}(\psi_{AA'}^t)\big) \right\} \nb\\
=&\inf_{\{\tau_{AB}^t\}_t\in\mc{F}} \left\{\Big(R-\sum_{t \in \mc{T}} q(t)D(\tau_{AB}^t \| \psi^t_A \ox \tau_{B}^t)\Big)_+ + \sum_{t \in \mc{T}} q(t)D\big(\tau_{AB}^t \| \mc{N}(\psi_{AA'}^t)\big) \right\},
\end{align}
where for the fourth equality, we have used Sion's minimax theorem again. It can be verified similarly that the conditions for Sion's minimax theorem are satisfied. The convexity of the expression under optimization as a function of $\{\tau_{AB}^t\}_t$ is not obvious. To see this, we write it as
\begin{align}
 & \sum_{t \in \mc{T}} q(t)\left \{\lambda R-\lambda D(\tau_{AB}^t \| \psi^t_A \ox \tau_B^t)
   +D\big(\tau_{AB}^t \| \mc{N}(\psi_{AA'}^t)\big) \right\} \nb\\
=& \sum_{t \in \mc{T}} q(t)\left \{\lambda R-(1-\lambda)H(\tau_{AB}^t) -\lambda H(\tr_A\tau_{AB}^t)
   +\tr\left[\tau_{AB}^t\left(\lambda\log\psi^t_A-\log \mc{N}(\psi_{AA'}^t) \right)\right]\right\}.
\end{align}
The convexity follows from the fact that for any $t\in\mc{T}$, $-(1-\lambda)H(\tau_{AB}^t)$ and $-\lambda H(\tr_A\tau_{AB}^t)$ are convex as functions of $\tau_{AB}^t$, and $\tr\left[\tau_{AB}^t\left(\lambda\log\psi^t_A-\log \mc{N}(\psi_{AA'}^t) \right)\right]$
is linear.
\end{proof}

\medskip
To prove Proposition~\ref{prop:weak-euc}, we introduce
\begin{align}
\label{eq:F1}
F_1\big(\mc{N},R,\{q(t),\Psi_{AA'}^t\}_t\big)
\!:=&\!\inf_{\{\tau_{AB}^t\}_t \in \mc{F}_1}
\sum_{t \in \mc{T}} q(t)D\big(\tau_{AB}^t \| \mc{N}(\Psi_{AA'}^t)\big), \\
\label{eq:F2}
F_2\big(\mc{N},R,\{q(t),\Psi_{AA'}^t\}_t\big)
\!:= &\!\inf_{\{\tau_{AB}^t\}_t \in \mc{F}_2}
\!\!\left\{\!R\!-\!\!\sum_{t \in \mc{T}}\! q(t)D\big(\tau_{AB}^t \| \pi_A^t\! \ox\! \tau_{B}^t\big)
\!+\!\!\sum_{t \in \mc{T}}\! q(t)D\big(\tau_{AB}^t \| \mc{N}(\Psi_{AA'}^t)\!\big)\!\right\}
\end{align}
with
\begin{align}
\label{eq:mcF1}
\mc{F}_1 &:=\Big\{ \{\tau_{AB}^t\}_{t \in \mc{T}}~\big|~
(\forall t\in\mc{T})\ \tau_{AB}^t\in\mc{S}_{\mc{N}(\Psi_{AA'}^t) }(AB),\
\sum_{t \in \mc{T}} q(t)D(\tau_{AB}^t \| \pi_A^t \ox \tau_{B}^t) > R \Big\}, \\
\label{eq:mcF2}
\mc{F}_2 &:=\Big\{ \{\tau_{AB}^t\}_{t \in \mc{T}}~\big|~
(\forall t\in\mc{T})\ \tau_{AB}^t\in\mc{S}_{\mc{N}(\Psi_{AA'}^t) }(AB),\
\sum_{t \in \mc{T}} q(t)D(\tau_{AB}^t \| \pi_A^t \ox \tau_{B}^t) \leq R \Big\},
\end{align}
where $\pi_A^t$ is the maximally mixed state on $\mc{H}_A^t$. It is obvious that
\begin{equation}
\label{equ:F1F2}
F\big(\mc{N},R,\{q(t),\Psi_{AA'}^t\}_t\big)
=\min\Big\{F_1\big(\mc{N},R,\{q(t),\Psi_{AA'}^t\}_t\big),
F_2\big(\mc{N},R,\{q(t),\Psi_{AA'}^t\}_t\big)\Big\}.
\end{equation}
So, it suffices to show that $sc(\mc{N}_{A' \rar B},R)$ is upper bounded by both $F_1$
and $F_2$, given in Eq.~\eqref{eq:F1} and Eq.~\eqref{eq:F2}, respectively.

\begin{proofof}{Proposition~\ref{prop:weak-euc}}
Thanks to Proposition~\ref{prop:varF}, it follows from Lemma~\ref{lemma:use1} and
Lemma~\ref{lemma:use2} below.
\end{proofof}

\medskip
\begin{lemma}
\label{lemma:use1}
Let $\{q(t),\Psi_{AA'}^t\}_{t\in\mc{T}}$ be any ensemble of quantum states as specified
in Proposition~\ref{prop:weak-euc}. For the channel $\mc{N}_{A' \rar B}$ and any $R>0$
we have
\beq
sc(\mc{N}_{A' \rar B},R) \leq F_1\big(\mc{N}_{A'\rar B},R,\{q(t),\Psi_{AA'}^t\}_t\big).
\eeq
\end{lemma}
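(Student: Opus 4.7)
The plan is to prove Lemma~\ref{lemma:use1} as an achievability-type result. Fix any family $\{\tau_{AB}^t\}_{t\in\mc{T}}\in\mc{F}_1$; it suffices to exhibit a sequence of entanglement-assisted codes $\{\mc{C}_n\}$ for $\mc{N}_{A'\rar B}^{\ox n}$ with $\liminf_n\tfrac{1}{n}\log|\mc{C}_n|\geq R$ satisfying
\[
-\liminf_{n\to\infty}\tfrac{1}{n}\log P_s(\mc{N}_{A'\rar B}^{\ox n},\mc{C}_n)
\leq \sum_{t\in\mc{T}}q(t)D\bigl(\tau_{AB}^t\,\|\,\mc{N}(\Psi_{AA'}^t)\bigr),
\]
since then taking the infimum over $\mc{F}_1$ yields $sc(\mc{N}_{A'\rar B},R)\leq F_1$.

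The driving idea is a quantum change-of-measure against a mock channel. The constraint $\sum_t q(t)D(\tau_{AB}^t\|\pi_A^t\ox\tau_B^t)>R$ is precisely the statement that the hypothetical ensemble in which the channel output on input $\Psi_{AA'}^t$ were $\tau_{AB}^t$ has quantum mutual information strictly above the target rate $R$. Standard entanglement-assisted achievability applied to this mock ensemble---via, e.g., position-based coding adapted to the block decomposition $\mc{H}_A=\oplus_t\mc{H}_A^t$, or HSW applied to the induced classical-quantum block structure---produces, for every sufficiently large $n$, a code $(\{\mc{E}^m\},\{\Lambda^m\},\rho_{\tilde{A}\tilde{B}})$ of size $\lceil 2^{nR}\rceil$ whose ``mock success probability'' (computed using the $\tau$-outputs in place of $\mc{N}^{\ox n}$) tends to $1$. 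I keep the same encoder and decoder and then run them against the true channel.

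The technical heart is quantifying the loss when the mock output is replaced by $\mc{N}^{\ox n}$. Classically this would be a straightforward Radon--Nikodym comparison, but quantum-mechanically $\tau_{AB}^t$ and $\mc{N}(\Psi_{AA'}^t)$ generally fail to commute. The remedy is to pinch the true output state by the universal symmetric state $\sigma^u_{(B\tilde{B})^n}$ from Lemma~\ref{lem:u-sym}, losing only a polynomial factor $v_{n,|B\tilde{B}|}$ that is invisible in the exponent. After pinching, both states become simultaneously diagonalizable, and a classical change-of-measure / large-deviation argument on their joint spectrum---equivalently, the variational representation and pinching-approximation of Proposition~\ref{prop:renyid}(iii),(vi) combined with the asymptotic equipartition property---produces the exponential loss $2^{-n\sum_t q(t)D(\tau_{AB}^t\|\mc{N}(\Psi_{AA'}^t))-o(n)}$. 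Since the mock success probability tends to $1$, the claim follows.

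The main obstacle will be executing this change-of-measure step cleanly in the noncommutative setting. The pinching device is essentially forced by the noncommutativity, and it is precisely the source of the log-Euclidean rather than sandwiched R\'enyi divergence appearing in Proposition~\ref{prop:weak-euc}: pinching commutes the states at the price of replacing $D^*_\alpha$ by $D^\flat_\alpha$. Upgrading this intermediate bound to the sandwiched R\'enyi divergence, and hence to the sharp strong converse exponent of Theorem~\ref{thm:main}, is the task of Section~\ref{sec:proof-final}.
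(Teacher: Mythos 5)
Your high-level plan---build a code for a ``mock'' ensemble determined by $\{\tau_{AB}^t\}_t\in\mc{F}_1$, then pay an exponential change-of-measure price to replace $\tau_{AB}^t$ by $\mc{N}(\Psi_{AA'}^t)$---is indeed the paper's strategy, but two of the steps are either missing or would fail as you describe them.

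First, the encoder transfer is the actual content of the lemma and your outline omits it. The mock states $\tau_{AB}^t$ are never physically produced, so ``keep the same encoder and run it against the true channel'' only makes sense if the mock encoder consists of operations that (a) act on the $A$ systems alone, so that via $U^t\ox\1\ket{\Psi^t}_{AA'}=\1\ox(U^t)^T\ket{\Psi^t}_{AA'}$ they can be executed by Alice on her half ${A'}^n$ of the shared state $\big(\sum_tq(t)\Psi^t_{AA'}\big)^{\ox n}$ \emph{before} the channel acts, and (b) generate an ensemble whose Holevo information is exactly $\sum_tq(t)D(\tau^t_{AB}\|\pi^t_A\ox\tau^t_B)$, the quantity that exceeds $R$. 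The paper secures both by taking the encoding unitaries to be direct sums of Heisenberg--Weyl operators on the blocks $\mc{H}^t_A$, whose uniform twirl sends $\sum_tq(t)\tau^t_{AB}$ to $\sum_tq(t)\pi^t_A\ox\tau^t_B$, and then applying HSW to that specific ensemble. Neither ``HSW on the induced classical-quantum block structure'' (which would give the wrong Holevo quantity and an encoder Alice cannot implement) nor generic position-based coding supplies this.

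Second, the change of measure needs no pinching, and your pinching device is broken as stated. Since the encoders are unitary and $0\le U_m^\dagger\Lambda^m U_m\le\1$, one gets directly $\widetilde{P_s}-2^{na}P_s\le\tr\big(\rho^{\ox n}-2^{na}\sigma^{\ox n}\big)_+$ with $\rho=\sum_tq(t)\tau^t_{AB}$ and $\sigma=\sum_tq(t)\mc{N}(\Psi^t_{AA'})$, and the Ogawa--Nagaoka strong converse gives that this vanishes whenever $a>D(\rho\|\sigma)=\sum_tq(t)D(\tau^t_{AB}\|\mc{N}(\Psi^t_{AA'}))$; the exponent here is a plain relative entropy and noncommutativity costs nothing. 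By contrast, pinching by the universal symmetric state on the outputs does \emph{not} make $\rho^{\ox n}$ and $\sigma^{\ox n}$ simultaneously diagonalizable (it only makes each commute with $\sigma^u$), so the ``classical change of measure on the joint spectrum'' is undefined; you would at least have to pinch with respect to $\sigma^{\ox n}$ itself. Relatedly, your diagnosis that pinching is the source of the log-Euclidean divergence is backwards for this paper: $D^\flat_\alpha$ enters through the variational formula of Proposition~\ref{prop:renyid}~(iii) in the Legendre-transform step of Proposition~\ref{prop:varF}, while pinching appears only in Section~\ref{sec:proof-final}, where it is used to upgrade $D^\flat_\alpha$ to $D^*_\alpha$.
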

\begin{proof}
If $\mc{F}_1=\emptyset$, the statement is trivial because the right hand side is $+\infty$.
So we suppose that $\mc{F}_1\neq\emptyset$. The definition of $F_1$ in Eq.~\eqref{eq:F1} and Eq.~\eqref{eq:mcF1} implies that for
any $\delta>0$, there exists a set of states $\{\tau_{AB}^t\}_{t\in\mc{T}}$ such that $\tau_{AB}^t\in\mc{S}_{\mc{N}(\Psi_{AA'}^t)}(AB)$ and
\begin{align}
\sum_{t \in \mc{T}} q(t)D\big(\tau_{AB}^t \| \pi_A^t \ox \tau_B^t\big) &> R, \label{eq:use1-a}\\
\sum_{t \in \mc{T}} q(t)D\big(\tau_{AB}^t \| \mc{N}(\Psi_{AA'}^t)\big)
&\leq F_1\big(\mc{N}_{A'\rar B},R,\{q(t),\Psi_{AA'}^t\}_t\big)+\delta. \label{eq:use1-b}
\end{align}
We will employ the Heisenberg-Weyl operators. Defined on a $d$-dimensional Hilbert space
$\mc{H}$ with an orthonormal basis $\{\ket{x}\}_{x=0}^{d-1}$, they are a collection of
unitary operators
\beq
V_{y,z}
=\sum_{x=0}^{d-1}\mathrm{e}^{\frac{2\pi\mathrm{i}xz}{d}}
 \ket{(x+y)\!\!\!\mod d}\bra{x},
\eeq
where $y,z\in \{0,1,\ldots,d-1\}$. Let $\mc{V}^t$ be the set of Heisenberg-Weyl
operators on $\mc{H}_A^t$, defined with respect to the basis $\{\ket{a^t_x}\}_x$
for which $\ket{\Psi^t}_{AA'}=\frac{1}{\sqrt{|\mc{H}_A^t|}}
\sum_{x} \ket{a^t_x}_A \ox \ket{a^t_x}_{A'}$. This ensures
that for any $U^t\in\mc{V}^t$,
\beq\label{eq:uut}
U^t \ox \1 \ket{\Psi^t}_{AA'} = \1 \ox (U^t)^T \ket{\Psi^t}_{AA'},
\eeq
where $(U^t)^T$ is the transpose of $U^t$ and it acts on $\mc{H}_{A'}^t$. Let
\beq
\mc{U}:=\Big\{\bigoplus_{t\in\mc{T}}U^t~\big|~(\forall t)~U^t\in\mc{V}^t\Big\}
\eeq
be a set of unitary operators on $\mc{H}_A$. We consider the ensemble of quantum states
\beq\label{eq:ensemble}
\left\{\frac{1}{|\mc{U}|}, U_A \big(\sum_{t\in\mc{T}}q(t)\tau_{AB}^t\big) U_A^\dagger \right\}_{U_A\in\mc{U}}.
\eeq
Its Holevo information is evaluated as
\begin{align}
&\sum_{{U_A}\in\mc{U}}\frac{1}{|\mc{U}|}D\Big(U_A \big(\sum_{t\in\mc{T}}q(t)\tau_{AB}^t\big)
   U_A^\dagger \big\|\sum_{U_A\in\mc{U}}\frac{1}{|\mc{U}|} U_A
   \big(\sum_{t\in\mc{T}}q(t)\tau_{AB}^t\big) U_A^\dagger \Big) \nb\\
=&\sum_{{U_A}\in\mc{U}}\frac{1}{|\mc{U}|}D\Big(U_A \big(\sum_{t\in\mc{T}}q(t)\tau_{AB}^t\big)
   U_A^\dagger \big\| \sum_{t\in\mc{T}}q(t)\pi_{A}^t\ox\tau_{B}^t \Big) \nb\\
=&D\Big(\sum_{t\in\mc{T}}q(t)\tau_{AB}^t \big\|\sum_{t\in\mc{T}}q(t)\pi_{A}^t\ox\tau_{B}^t \Big) \nb\\
=&\sum_{t \in \mc{T}} q(t)D(\tau_{AB}^t \| \pi_A^t \ox \tau_B^t).
\end{align}
So, for the ensemble given in Eq.~\eqref{eq:ensemble} and $R$ satisfying Eq.~\eqref{eq:use1-a}, we are able to apply Lemma~\ref{lem:HSW}. Thus for $n\in\mathbb{N}$ one can construct a set of signal states
\beq
\mc{O}_n\equiv\Big\{\mc{E}_{{A}^n}^m
\big(\big(\sum_{t\in\mc{T}}q(t)\tau_{AB}^t\big)^{\ox n} \big)\Big\}_{m\in\mc{M}_n}
\eeq
encoding messages $m\in\mc{M}_n$ of size $|\mc{M}_n|=\lfloor2^{nR}\rfloor$, such that there exists a decoding measurement \beq\label{eq:dcoding}
\mc{D}_n\equiv\big\{\Lambda_{A^nB^n}^m\big\}_{m \in \mc{M}_n}
\eeq
with success probability
\begin{align}
\widetilde{P_s}((\mc{O}_n,\mc{D}_n))&:= \frac{1}{|\mc{M}_n|}\sum_{m \in \mc{M}_n}
\tr \mc{E}_{{A}^n}^m\big(\big(\sum_{t\in\mc{T}}q(t)\tau_{AB}^t\big)^{\ox n} \big)
\Lambda_{A^nB^n}^m \nb\\
&\rar 1, \quad\text{as } n\rar\infty. \label{eq:sdense}
\end{align}
Moreover, $\mc{E}_{{A}^n}^m(\cdot)= U_m(\cdot)U_m^\dagger $ is a unitary operation on $A^n$ of the form
\begin{equation}
\label{eq:form1}
U_m=\Big(\bigoplus_{t \in\mc{T}}U_1^t(m)\Big) \ox\cdots\ox \Big(\bigoplus_{t \in\mc{T}} U_n^t(m)\Big),
\end{equation}
where $U_i^t(m)\in\mc{V}^t$ is a Heisenberg-Weyl operator on $\mc{H}_A^t$.

Now, given the message set $\mc{M}_n$, we design a code $\mc{C}_n$ for
$\mc{N}_{A' \rar B}^{\ox n}$ as follows.
\begin{enumerate}[1.]
  \item Alice and Bob share the entangled state
      $\rho_{A^n{A'}^n}=\big(\sum_{t \in \mc{T}} q(t) \Psi_{AA'}^t\big)^{\ox n}$.
      Alice holds the ${A'}^n$ part, and Bob holds the $A^n$ part.
  \item Alice encodes the classical message $m\in\mc{M}_n$ by applying the map $({\mc{E}}_{{A'}^n}^m)^T$ to the ${A'}^n$ part of the state $\rho_{A^n{A'}^n}$. Here $({\mc{E}}_{{A'}^n}^m)^T(\cdot)= (U_m)^T(\cdot)((U_m)^T)^\dagger $ and
      \begin{equation}
        \label{eq:form2}
        (U_m)^T=\Big(\bigoplus_{t \in\mc{T}}(U_1^t(m))^T\Big) \ox\cdots\ox
        \Big(\bigoplus_{t \in\mc{T}} (U_n^t(m))^T\Big)
      \end{equation}
      is the transpose of $U_m$. Then she sends the ${A'}^n$ part to Bob through the channel
      $\mc{N}_{A' \rar B}^{\ox n}$.
  \item Bob performs the decoding measurement $\{\Lambda_{A^nB^n}^m\}_{m \in \mc{M}_n}$ of Eq.~\eqref{eq:dcoding}, aiming to recover the message $m$.
\end{enumerate}
The success probability of the code $\mc{C}_n$ can be evaluated as
\begin{align}
&P_s(\mc{N}_{A' \rar B}^{\ox n},\mc{C}_n) \nb\\
=&\frac{1}{|\mc{M}_n|}\sum_{m \in \mc{M}_n}
\tr\mc{N}_{A'\rar B}^{\ox n}\circ({\mc{E}}_{{A'}^n}^m)^T(\rho_{A^n{A'}^n})\Lambda_{A^nB^n}^m \nb\\
=&\frac{1}{|\mc{M}_n|}\sum_{m \in \mc{M}_n}
\tr\big(\mc{E}_{{A}^n}^m\ox\mc{N}_{A'\rar B}^{\ox n}\big)(\rho_{A^n{A'}^n})\Lambda_{A^nB^n}^m \nb\\
=&\frac{1}{|\mc{M}_n|}\sum_{m \in \mc{M}_n}
\tr \mc{E}_{{A}^n}^m \big(\big(\sum_{t \in \mc{T}} q(t) \mc{N}(\Psi_{AA'}^t)\big)^{\ox n}\big) \Lambda_{A^nB^n}^m, \label{eq:sN}
\end{align}
where the second equality is by the relation of Eq.~\eqref{eq:uut}.

To proceed, we set
\beq\label{eq:a}
a=\sum_{t \in \mc{T}} q(t)D\big(\tau_{AB}^t \| \mc{N}(\Psi_{AA'}^t)\big)+\delta.
\eeq
Then Eq.~(\ref{eq:sdense}) and Eq.~(\ref{eq:sN}) together lead to
\begin{align}
&\widetilde{P_s}((\mc{O}_n,\mc{D}_n))-2^{na}P_s(\mc{N}_{A' \rar B}^{\ox n},\mc{C}_n) \nb\\
=&\frac{1}{|\mc{M}_n|}\sum_{m \in \mc{M}_n} \tr \mc{E}_{{A}^n}^m
  \Big(\big(\sum_{t \in \mc{T}} q(t)\tau_{AB}^t\big)^{\ox n}-2^{na}
  \big(\sum_{t\in\mc{T}}q(t)\mc{N}(\Psi_{AA'}^t)\big)^{\ox n}\Big)\Lambda_{A^nB^n}^m \nb\\
=&\frac{1}{|\mc{M}_n|}\sum_{m \in \mc{M}_n} \tr \Big(\big(\sum_{t \in \mc{T}} q(t)
  \tau_{AB}^t\big)^{\ox n}-2^{na}\big(\sum_{t\in\mc{T}}q(t)\mc{N}(\Psi_{AA'}^t)\big)^{\ox n}\Big)
  \big(U_m^\dagger \Lambda_{A^nB^n}^mU_m\big) \nb\\
\leq& \tr\Big(\big(\sum_{t\in\mc{T}}q(t)\tau_{AB}^t\big)^{\ox n}-2^{na}
  \big(\sum_{t\in\mc{T}}q(t)\mc{N}(\Psi_{AA'}^t)\big)^{\ox n}\Big)_+, \label{eq:compare}
\end{align}
where $X_+:=(|X|+X)/2$ is the positive part of a self-adjoint operator $X$.
It has been proved in~\cite{OgawaNagaoka2000strong} (cf. Theorem 1 and discussions in Section~\uppercase\expandafter{\romannumeral3} therein) that when $r>D(\rho \| \sigma)$,
\beq
\lim_{n \rar \infty} \tr (\rho^{\ox n} - 2^{nr} \sigma^{\ox n})_+=0.
\eeq
Thus, when $n\rar\infty$, the limit of the last line of Eq.~\eqref{eq:compare} is $0$.
Therefore, we combine Eq.~\eqref{eq:sdense} and Eq.~\eqref{eq:compare} to get
\beq\label{eq:sNbound}
\liminf_{n\rar\infty} 2^{na}P_s(\mc{N}_{A' \rar B}^{\ox n},\mc{C}_n) \geq 1.
\eeq
Since the code $\mc{C}_n$ has size $|\mc{C}_n|= |\mc{M}_n|=2^{nR}$, by the definition of
$sc(\mc{N}_{A' \rar B},R)$, we have
\begin{align}
     &sc(\mc{N}_{A' \rar B},R) \nb\\
\leq &-\liminf_{n\rar\infty} \frac{1}{n}\log P_s(\mc{N}_{A' \rar B}^{\ox n},\mc{C}_n) \nb\\
\leq &\sum_{t \in \mc{T}} q(t)D\big(\tau_{AB}^t \| \mc{N}(\Psi_{AA'}^t)\big)+\delta \nb\\
\leq &F_1\big(\mc{N}_{A'\rar B},R,\{q(t),\Psi_{AA'}^t\}_t\big)+2\delta.
\end{align}
where the second inequality is by Eq.~\eqref{eq:a} and Eq.~\eqref{eq:sNbound}, and the last
inequality is by Eq.~\eqref{eq:use1-b}. At last, because $\delta>0$ is arbitrary, we are done.
\end{proof}

\medskip
\begin{lemma}
\label{lemma:use2}
Let $\{q(t),\Psi_{AA'}^t\}_{t\in\mc{T}}$ be any ensemble of quantum states as specified
in Proposition~\ref{prop:weak-euc}. For the channel $\mc{N}_{A' \rar B}$ and any $R>0$
we have
\beq
sc(\mc{N}_{A' \rar B},R) \leq F_2\big(\mc{N}_{A'\rar B},R,\{q(t),\Psi_{AA'}^t\}_t\big).
\eeq
\end{lemma}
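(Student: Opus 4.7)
The plan is to reduce Lemma~\ref{lemma:use2} to the construction already developed in the proof of Lemma~\ref{lemma:use1}, inserting an explicit padding step to bridge the rate gap. In contrast to that lemma, any $\{\tau_{AB}^t\}_t \in \mc{F}_2$ satisfies $R^* := \sum_{t \in \mc{T}} q(t) D\big(\tau_{AB}^t \,\|\, \pi_A^t \ox \tau_B^t\big) \leq R$, so Lemma~\ref{lem:HSW} is not directly available at rate $R$ for the Heisenberg-Weyl covariantised ensemble of Eq.~\eqref{eq:ensemble}. My strategy is to build a good rate-$R'$ code for some $R' < R^*$ via Lemma~\ref{lemma:use1}'s machinery verbatim, and then inflate it to rate $R$ by grouping $\approx 2^{n(R-R')}$ messages into each codeword, which costs a multiplicative factor of $2^{-n(R-R')}$ in the success probability.

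Concretely, I would fix $\delta > 0$ and, using Eq.~\eqref{eq:F2} and Eq.~\eqref{eq:mcF2}, choose $\{\tau_{AB}^t\}_t \in \mc{F}_2$ such that
$$R - R^* + \sum_{t \in \mc{T}} q(t) D\big(\tau_{AB}^t \,\|\, \mc{N}(\Psi_{AA'}^t)\big) \leq F_2\big(\mc{N}_{A' \rar B}, R, \{q(t),\Psi_{AA'}^t\}_t\big) + \delta.$$
Choosing $\epsilon \in (0,\min\{\delta, R^*\})$ and setting $R' := R^* - \epsilon$, the same ensemble $\{\tau_{AB}^t\}_t$ lies in $\mc{F}_1$ at rate $R'$, so the full construction in the proof of Lemma~\ref{lemma:use1}---the covariantised ensemble of Eq.~\eqref{eq:ensemble}, the Heisenberg-Weyl encoding, the HSW decoder from Lemma~\ref{lem:HSW}, and the trace comparison via Eq.~\eqref{eq:compare}---applies unchanged at rate $R'$, yielding a sequence of codes $\{\mc{C}_n'\}$ for $\mc{N}_{A' \rar B}^{\ox n}$ with $|\mc{C}_n'| = \lfloor 2^{nR'}\rfloor$ and
$$\liminf_{n \to \infty}\, 2^{na}\, P_s\big(\mc{N}_{A' \rar B}^{\ox n},\mc{C}_n'\big) \geq 1, \qquad a := \sum_{t \in \mc{T}} q(t) D\big(\tau_{AB}^t \,\|\, \mc{N}(\Psi_{AA'}^t)\big) + \delta.$$

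For the padding, I would partition the target message set $\mc{M}_n$ of size $\lfloor 2^{nR}\rfloor$ into disjoint groups $\{\mc{G}_{m'}\}_{m' \in \mc{M}_n'}$, reuse the encoder of $\mc{C}_n'$ associated with $m'$ for every $m \in \mc{G}_{m'}$, and decode with the rescaled POVM $\Lambda^m_{A^nB^n} := |\mc{G}_{m'}|^{-1} \Lambda^{m'}_{A^nB^n}$ (which sums to $\1$). A short direct computation gives $P_s(\mc{N}_{A' \rar B}^{\ox n},\mc{C}_n) = (|\mc{M}_n'|/|\mc{M}_n|)\, P_s(\mc{N}_{A' \rar B}^{\ox n},\mc{C}_n') \geq 2^{-n(a + R - R^* + \epsilon)}(1 - o(1))$, whence $sc(\mc{N}_{A' \rar B},R) \leq a + R - R^* + \epsilon$. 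Sending $\epsilon \searrow 0$, then $\delta \searrow 0$, and taking the infimum over $\{\tau_{AB}^t\}_t \in \mc{F}_2$ yields the desired bound. The only delicate point is verifying that the Ogawa-Nagaoka trace-comparison in Eq.~\eqref{eq:compare} continues to vanish in the limit; this is automatic because, by joint convexity of the relative entropy, $D\big(\sum_t q(t)\tau_{AB}^t \,\|\, \sum_t q(t)\mc{N}(\Psi_{AA'}^t)\big) \leq \sum_t q(t) D\big(\tau_{AB}^t \,\|\, \mc{N}(\Psi_{AA'}^t)\big) < a$ independently of $\epsilon$, and everything else is routine bookkeeping of the $\epsilon,\delta$ parameters.
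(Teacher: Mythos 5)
Your proposal is correct and follows essentially the same route as the paper: choose a near-optimal $\{\tau_{AB}^t\}_t\in\mc{F}_2$, run the Lemma~\ref{lemma:use1} construction at a rate $R'$ slightly below $\sum_t q(t)D(\tau_{AB}^t\|\pi_A^t\ox\tau_B^t)$, and inflate the code to rate $R$ at a multiplicative cost of $|\mc{C}'_n|/|\mc{C}_n|$ in the success probability. The only cosmetic difference is the padding mechanism (you group messages and rescale the POVM uniformly, whereas the paper appends dummy messages with zero measurement operators), which yields the identical loss factor.
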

\begin{proof}
According to the definition of $F_2$ in Eq.~\eqref{eq:F2} and Eq.~\eqref{eq:mcF2},
there exists a set of states $\{\tau_{AB}^t\}_{t\in\mc{T}}$ such that $\tau_{AB}^t\in\mc{S}_{\mc{N}(\Psi_{AA'}^t)}(AB)$ and
\begin{align}
&\sum_{t \in \mc{T}} q(t)D\big(\tau_{AB}^t \| \pi_A^t \ox \tau_B^t\big) \leq R, \\
&R-\sum_{t \in \mc{T}}q(t)D\big(\tau_{AB}^t \| \pi_A^t \ox \tau_B^t\big)
+\sum_{t \in \mc{T}} q(t)D\big(\tau_{AB}^t \| \mc{N}(\Psi_{AA'}^t)\big)
=F_2\big(\mc{N}_{A'\rar B},R,\{q(t),\Psi_{AA'}^t\}_t\big).
\end{align}
Let $R':=\sum_{t \in \mc{T}} q(t)D\big(\tau_{AB}^t \|\pi_A^t \ox \tau_B^t\big)-\delta$
with an arbitrary $\delta>0$. Following the proof of Lemma~\ref{lemma:use1}, we easily
see that there exists a sequence of codes
\beq
\mc{C}'_n=
\Big(\big\{{\mc{E}'}_{{A'}^n}^m\big\}_m,\big\{{\Lambda'}_{A^nB^n}^m\big\}_m,
\big(\sum_{t \in \mc{T}}q(t)\Psi_{AA'}^t\big)^{\ox n}\Big),\quad n\in \mb{N}
\eeq
with $|\mc{C}'_n|=\lfloor2^{nR'}\rfloor$ such that
\begin{equation}
\label{equ:R1}
-\liminf_{n\rar\infty} \frac{1}{n}\log P_s(\mc{N}_{A' \rar B}^{\ox n},\mc{C}'_n)
\leq \sum_{t \in \mc{T}} q(t)D\big(\tau_{AB}^t \| \mc{N}(\Psi_{AA'}^t)\big).
\end{equation}
We point out that in the code $\mc{C}'_n$, the sender holds the $A'^n$ system and the
receiver holds the $A^n$ system of the entangled state
$(\sum_{t \in \mc{T}}q(t)\Psi_{AA'}^t)^{\ox n}$, and the encoding operation
${\mc{E}'}_{{A'}^n}^m$ is a unitary map. Now we extend $\mc{C}'_n$ to construct a new
code $\mc{C}_n$ with size $|\mc{C}_n|=\lfloor2^{nR}\rfloor$, using the simple strategy:
\begin{enumerate}[(i)]
  \item for $m \in \left\{1,2,\ldots,|\mc{C}'_n|\right\}$, we set
        $\mc{E}_{{A'}^n}^m={\mc{E}'}_{{A'}^n}^m$ and $\Lambda_{A^nB^n}^m={\Lambda'}_{A^nB^n}^m$;
  \item for $m \notin \left\{1,2,\ldots,|\mc{C}'_n|\right\}$, we set
        $\mc{E}_{{A'}^n}^m=\mc{I}_{{A'}^n}$ as the identity map and $\Lambda_{A^nB^n}^m=0$.
\end{enumerate}
The success probability of the code
\beq
\mc{C}_n=
\Big(\big\{{\mc{E}}_{{A'}^n}^m\big\}_m,\big\{{\Lambda}_{A^nB^n}^m\big\}_m,
\big(\sum_{t \in \mc{T}}q(t)\Psi_{AA'}^t\big)^{\ox n}\Big)
\eeq
is calculated as
\begin{align}
   &P_s(\mc{N}_{A'\rar B}^{\ox n},\mc{C}_n) \nb\\
 = &\sum_{m=1}^{|\mc{C}_n|}\frac{1}{|\mc{C}_n|} \tr\Big(\mc{N}^{\ox n}\circ\mc{E}_{{A'}^n }^m
\big(\big(\sum_t q(t) \Psi_{AA'}^t\big)^{\ox n}\big)\Big) \Lambda_{A^nB^n}^m \nb\\
 = &\frac{|\mc{C}'_n|}{|\mc{C}_n|}\sum_{m=1}^{|\mc{C}'_n|}\frac{1}{|\mc{C}'_n|}
    \tr\Big(\mc{N}^{\ox n}\circ{\mc{E}'}_{{A'}^n }^m\big(\big(\sum_t q(t)
    \Psi_{AA'}^t\big)^{\ox n}\big)\Big) {\Lambda'}_{A^nB^n}^m \nb\\
 = &\frac{|\mc{C}'_n|}{|\mc{C}_n|}P_s(\mc{N}_{A' \rar B}^{\ox n},\mc{C}'_n). \label{equ:R1R}
\end{align}
Thus we have
\begin{align}
sc(\mc{N}_{A' \rar B},R)
&\leq -\liminf_{n\rar\infty} \frac{1}{n}\log P_s(\mc{N}_{A' \rar B}^{\ox n},\mc{C}_n) \nb\\
&\leq R-R'+\sum_{t \in \mc{T}} q(t)D\big(\tau_{AB}^t \| \mc{N}(\Psi_{AA'}^t)\big)  \nb\\
& =   F_2\big(\mc{N}_{A'\rar B},R,\{q(t),\Psi_{AA'}^t\}_t\big)+\delta,
\end{align}
where the second inequality is by Eq.~(\ref{equ:R1}) and Eq.~(\ref{equ:R1R}), as well as
the choice of $|\mc{C}'_n|$ and $|\mc{C}_n|$. At last, because $\delta>0$ is arbitrary,
the proof is complete.
\end{proof}

\bigskip
The bound in Proposition~\ref{prop:weak-euc} depends on the quantum state set
$\{\Psi_{AA'}^t\}_t$ as well as the probability distribution $q$ over it. We optimize it
over all probability distribution $q$ to obtain the following result.
\begin{theorem}
\label{thm:strong-euc}
Let $\mc{H}_A\cong\mc{H}_{A'}$, and let $\mc{H}_A=\oplus_{t \in \mc{T}} \mc{H}_A^t$ and
$\mc{H}_{A'}=\oplus_{t \in \mc{T}} \mc{H}_{A'}^t$ be decompositions of $\mc{H}_A$ and
$\mc{H}_{A'}$ into orthogonal subspaces with $\mc{H}_A^t\cong\mc{H}_{A'}^t$. Let
$\Psi_{AA'}^t$ be the maximally entangled state on $\mc{H}_A^t \ox \mc{H}_{A'}^t$.
Then for the channel $\mc{N}_{A' \rar B}$ and any $R>0$, the strong converse exponent
for entanglement-assisted classical communication satisfies
\beq
sc(\mc{N}_{A'\rar B},R)
\leq \sup_{\alpha>1} \frac{\alpha-1}{\alpha}
\left\{ R-\max_{t\in\mc{T}}I_{\alpha}^{\flat}\left(\mc{N}_{A'\rar B},\Psi_{AA'}^t\right)\right\},
\eeq
where $I_{\alpha}^{\flat}\left(\mc{N}_{A'\rar B},\Psi_{AA'}^t\right)$ is defined in Eq.~\eqref{eq:mic1-logE-s}.
\end{theorem}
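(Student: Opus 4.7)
I will derive Theorem~\ref{thm:strong-euc} from Proposition~\ref{prop:weak-euc} by taking the infimum over the probability distribution $q$ and then applying Sion's minimax theorem. Proposition~\ref{prop:weak-euc} already gives $sc(\mc{N}_{A'\rar B},R) \leq F(\mc{N}_{A'\rar B},R,\{q(t),\Psi_{AA'}^t\}_t)$ for every probability distribution $q$ on $\mc{T}$, so it suffices to prove
\[
\inf_{q}F\big(\mc{N}_{A'\rar B},R,\{q(t),\Psi_{AA'}^t\}_t\big) \leq \sup_{\alpha>1}\frac{\alpha-1}{\alpha}\Big(R - \max_{t\in\mc{T}} I_\alpha^\flat\big(\mc{N}_{A'\rar B},\Psi_{AA'}^t\big)\Big).
\]

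To expose the minimax structure, set $\lambda := \frac{\alpha-1}{\alpha} \in (0,1)$ and
\[
h_t(\lambda) := \inf_{\tau_{AB} \in \mc{S}_{\mc{N}(\Psi_{AA'}^t)}(AB)}\Big\{D\big(\tau_{AB}\|\mc{N}(\Psi_{AA'}^t)\big) - \lambda D\big(\tau_{AB}\|\Psi_A^t \ox \tau_B\big)\Big\}.
\]
The chain of equalities inside the proof of Proposition~\ref{prop:varF}, stopped after the inner $\sigma_B$-optimization has been evaluated and the substitution $\lambda = \frac{\alpha-1}{\alpha}$ has been made, expresses $F$ as $\sup_{\lambda\in(0,1)}\inf_{\{\tau^t\}_t}\bigl\{\lambda R + \sum_t q(t)[D(\tau^t\|\mc{N}(\Psi^t))-\lambda D(\tau^t\|\Psi_A^t\ox \tau_B^t)]\bigr\}$. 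Since $q(t)\geq 0$, the inner infimum decouples across $t$, so
\[
F\big(\mc{N}_{A'\rar B},R,\{q(t),\Psi_{AA'}^t\}_t\big) = \sup_{\lambda \in (0,1)}\Big\{\lambda R + \sum_{t\in\mc{T}} q(t)\, h_t(\lambda)\Big\}.
\]
Specializing the same identity to the singleton ensemble $\{1,\Psi_{AA'}^t\}$ further gives $\frac{\alpha-1}{\alpha}\bigl(R - I_\alpha^\flat(\mc{N}_{A'\rar B},\Psi_{AA'}^t)\bigr) = \lambda R + h_t(\lambda)$.

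Since $h_t$ is an infimum of functions affine in $\lambda$ it is concave (and hence continuous) on $(0,1)$. Consequently the map $(q,\lambda) \mapsto \lambda R + \sum_t q(t) h_t(\lambda)$ is linear in $q$ on the compact probability simplex and concave in $\lambda$, so Sion's minimax theorem applies and yields
\[
\inf_q \sup_{\lambda}\Big\{\lambda R + \sum_t q(t) h_t(\lambda)\Big\} = \sup_{\lambda}\Big\{\lambda R + \min_t h_t(\lambda)\Big\} = \sup_{\alpha>1}\min_{t}\frac{\alpha-1}{\alpha}\big(R - I_\alpha^\flat(\mc{N}_{A'\rar B},\Psi_{AA'}^t)\big),
\]
and, because $\frac{\alpha-1}{\alpha}>0$ for $\alpha>1$, the last quantity coincides with the right-hand side of Theorem~\ref{thm:strong-euc}.

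The main technical obstacle I anticipate is the non-compactness of $\lambda \in (0,1)$ required by Sion's theorem. I will handle it by applying the minimax on the closed subinterval $[\epsilon, 1-\epsilon]$ and passing $\epsilon \searrow 0$, using the uniform boundedness of $h_t(\lambda)$ inherited from the dimension bound $D(\tau\|\Psi_A^t \ox \tau_B) \leq \log|\mc{H}_A^t| + \log|\mc{H}_B|$ to guarantee that no value of the sup is lost in the limit. The remaining ingredients --- continuity of $h_t$, attainment of the infimum on the compact set $\mc{S}_{\mc{N}(\Psi_{AA'}^t)}(AB)$, and the decoupling of $\inf_{\{\tau^t\}_t}$ across $t$ --- are routine and already appear implicitly in the proof of Proposition~\ref{prop:varF}.
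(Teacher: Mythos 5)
Your proposal is correct and follows essentially the same route as the paper: rewrite $F$ via the variational identity of Proposition~\ref{prop:varF} as $\sup_{\lambda}\{\lambda R+\sum_t q(t)h_t(\lambda)\}$, note linearity in $q$ and concavity in $\lambda$, and apply Sion's minimax theorem to exchange $\inf_q$ with $\sup_\lambda$, after which the inner minimization over $q$ collapses to $\min_t$. The compactness workaround you anticipate is unnecessary: in Lemma~\ref{lemma:minimax} only the set over which the infimum is taken (here the probability simplex $\mc{Q}(\mc{T})$, which is compact) needs to be compact, while the supremum variable $\lambda$ may range over the merely convex set $(0,1)$, so the theorem applies directly, exactly as in the paper.
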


\begin{proof}
Proposition~\ref{prop:weak-euc} holds for any ensemble $\{q(t),\Psi_{AA'}^t\}_{t\in\mc{T}}$.
So, we can take the infimum in Eq.~\eqref{eq:weak-euc},
over all possible $q$ in the probability simplex $\mc{Q}(\mc{T})$.
Therefore,
\begin{align}
sc(\mc{N}_{A' \rar B},R)
&\leq \inf_{q\in\mc{Q}(\mc{T})} \sup_{\alpha>1} \frac{\alpha-1}{\alpha}
      \left\{R-I_{\alpha}^{\flat}\left(\mc{N}_{A'\rar B},\{q(t),\Psi_{AA'}^t\}_t\right)\right\} \nb\\
& =   \inf_{q\in\mc{Q}(\mc{T})} \sup_{\lambda\in(0,1)} G(\lambda,q), \label{eq:strong-euc-1}
\end{align}
where
\begin{equation}
  \label{eq:G1}
G(\lambda,q)
:=\lambda\left\{R-\sum_{t \in \mc{T}} q(t)\min_{\sigma_B} D_{\frac{1}{1-\lambda}}^{\flat}\left(\mc{N}(\Psi_{AA'}^t)\|\pi_A^t\ox\sigma_B\right)\right\},
\end{equation}
and for the equality we use the definition of $I_{\alpha}^{\flat}\left(\mc{N}_{A'\rar B},\{q(t),\Psi_{AA'}^t\}_t\right)$ and set $\lambda=\frac{\alpha-1}{\alpha}$. Following
the steps in the proof of Proposition~\ref{prop:varF}, we can also write $G(\lambda,q)$ as
\begin{equation}
  \label{eq:G2}
G(\lambda,q)
=\inf_{\{\tau_{AB}^t\}_t\in\mc{F}}
 \left\{\lambda \Big(R-\sum_{t \in \mc{T}} q(t)D(\tau_{AB}^t \| \pi^t_A \ox \tau_B^t)\Big)
 +\sum_{t \in \mc{T}} q(t)D\big(\tau_{AB}^t \| \mc{N}(\Psi_{AA'}^t)\big) \right\},
\end{equation}
where $\mc{F}:=\{\{\tau_{AB}^{t}\}_{t\in\mc{T}}~|~\tau_{AB}^{t}\in\mc{S}_{\mc{N}(\Psi_{AA'}^t)}(AB),
~\forall t \in \mc{T}\}$. We claim that
\begin{enumerate}[(i)]
  \item for fixed $\lambda$, the function $q\mapsto G(\lambda,q)$ is linear and continuous,
  on the compact and convex set $\mc{Q}(\mc{T})$; and
  \item for fixed $q$, the function $\lambda\mapsto G(\lambda,q)$ is concave and upper
  semi-continuous, on the interval $(0,1)$.
\end{enumerate}
Claim (\romannumeral1) is obviously seen from Eq.~\eqref{eq:G1}. For claim (\romannumeral2), we can easily check that $x\mapsto\inf_y f(x,y)$ is concave and upper semi-continuous if the function $f(x,y)$ is linear with $x$ and continuous with both arguments, and then we apply this observation to Eq.~\eqref{eq:G2}.
Now we can invoke Sion's minimax theorem to obtain
\begin{align}
 &\inf_{q\in\mc{Q}(\mc{T})} \sup_{\lambda \in (0,1)} G(\lambda,q) \nb\\
=&\sup_{\lambda \in (0,1)} \inf_{q\in\mc{Q}(\mc{T})} G(\lambda,q) \nb\\
=&\sup_{\lambda \in (0,1)} \lambda\left\{R-\max_{t\in\mc{T}}\min_{\sigma_B} D_{\frac{1}{1-\lambda}}^{\flat}\left(\mc{N}(\Psi_{AA'}^t)\|\pi_A^t\ox\sigma_B\right)\right\} \nb\\
=&\sup_{\alpha>1} \frac{\alpha-1}{\alpha}
\left\{ R-\max_{t\in\mc{T}}I_{\alpha}^{\flat}\left(\mc{N}_{A'\rar B},\Psi_{AA'}^t\right)\right\},
\end{align}
and we are done.
\end{proof}

\medskip
\begin{remark}\label{rk:codes}
To achieve the bound of Theorem~\ref{thm:strong-euc}, we can employ a sequence of codes
\begin{equation}
\left\{\Big(\big\{{\mc{E}}_{{A'}^n}^m\big\}_m,\big\{{\Lambda}_{A^nB^n}^m\big\}_m,
\big(\sum_{t \in \mc{T}}\bar{q}(t)\Psi_{AA'}^t\big)^{\ox n}\Big)\right\}_n,
\end{equation}
where the encoding operation ${\mc{E}}_{{A'}^n}^m$ is a unitary map, and the probability
distribution $\bar{q}(t)$ in the shared entangled state is an optimizer of
Eq.~\eqref{eq:strong-euc-1}. This can be seen from the construction of codes in
the proofs of Lemma~\ref{lemma:use1} and Lemma~\ref{lemma:use2}. Note that the
infima in Eq.~(\ref{eq:strong-euc-1}) can be replaced by minima (cf. Sion's
minimax theorem in Lemma~\ref{lemma:minimax}).
\end{remark}

\section{Achieving the Strong Converse Exponent}
  \label{sec:proof-final}
In this section, we complete the proof of the achievability of the strong converse
exponent for entanglement-assisted classical communication. This is based
on the result obtained in the previous section, namely, Theorem~\ref{thm:strong-euc}.
\begin{theorem}
\label{thm:main-upper}
Let $\mc{N}_{A' \rar B}$ be a quantum channel. For any transmission rate $R>0$, the
strong converse exponent for entanglement-assisted classical communication satisfies
\begin{equation}
\label{eq:scexponent-upper}
sc(\mc{N}_{A' \rar B},R) \leq \sup_{\alpha>1}
\frac{\alpha-1}{\alpha}\big(R-I_{\alpha}^*(\mc{N}_{A' \rar B})\big).
\end{equation}
\end{theorem}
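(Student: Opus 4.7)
The strategy is to apply Theorem~\ref{thm:strong-euc} to the tensor power channel $\mc{N}_{A'\rar B}^{\ox n}$ with a type-based decomposition of the input, and then pass to the limit $n\to\infty$. Fix an arbitrary $\alpha_0>1$ and a pure state $\psi_{AA'}$ attaining $I_{\alpha_0}^*(\mc{N}_{A'\rar B})$; write its Schmidt decomposition and decompose $\ket{\psi}_{AA'}^{\ox n}$ by types as in Eq.~\eqref{equ:typedecompose}. This automatically produces orthogonal decompositions $\mc{H}_{A^n}=\oplus_{t\in\mc{T}_n}\mc{H}_{A^n}^t$ and $\mc{H}_{A'^n}=\oplus_{t\in\mc{T}_n}\mc{H}_{A'^n}^t$ with $\mc{H}_{A^n}^t\cong\mc{H}_{A'^n}^t$, together with maximally entangled states $\Psi_{A^nA'^n}^t$ on these subspaces. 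Applying Theorem~\ref{thm:strong-euc} to $\mc{N}^{\ox n}$ at rate $nR$ with this decomposition, and combining with the elementary blocking relation $sc(\mc{N}^{\ox n},nR)\geq n\cdot sc(\mc{N},R)$, yields
\beq
sc(\mc{N}_{A'\rar B},R)\leq \sup_{\alpha>1}\frac{\alpha-1}{\alpha}\left\{R-\frac{1}{n}\max_{t\in\mc{T}_n}I_\alpha^\flat(\mc{N}^{\ox n}_{A'\rar B},\Psi^t_{A^nA'^n})\right\}.
\eeq
It thus suffices to prove that for every $\alpha_0>1$,
\beq
\liminf_{n\to\infty}\frac{1}{n}\max_{t\in\mc{T}_n}I_{\alpha_0}^\flat(\mc{N}^{\ox n}_{A'\rar B},\Psi^t_{A^nA'^n})\;\geq\; I_{\alpha_0}^*(\mc{N}_{A'\rar B}),
\eeq
since then the ``log-Euclidean'' bound collapses into the desired ``sandwiched'' bound upon taking $n\to\infty$.

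To establish this key inequality, the plan is to use Hayashi's universal symmetric state from Lemma~\ref{lem:u-sym} and the pinching machinery. Since both $\Psi^t_{A^n A'^n}$ and $\mc{N}^{\ox n}(\Psi^t)$ are permutation-invariant, one may restrict the minimization over $\sigma_B$ inside $I_\alpha^\flat(\mc{N}^{\ox n},\Psi^t)$ to symmetric states on $B^n$, which are dominated by the universal symmetric state $\sigma^u_{B^n}$ up to the polynomial factor $v_{n,|B|}$. Using the pinching inequality Eq.~\eqref{eq:pinchingine} and Proposition~\ref{prop:renyid}(vi) together with the elementary fact that $D_{\alpha}^\flat$ and $D_{\alpha}^*$ coincide when the second argument commutes with the first (which is exactly what pinching against $\sigma^u_{B^n}$ arranges on the $B^n$-factor), one converts $D_{\alpha_0}^\flat(\mc{N}^{\ox n}(\Psi^t)\|\Psi_A^t\ox\sigma^u_{B^n})$ into a sandwiched Rényi expression up to an additive loss of order $\log v_{n,|B|}=O(\log n)$. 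Averaging this bound over types with weights $p^n(t)$ from Eq.~\eqref{equ:typedecompose}, noting that $\sum_t p^n(t)\Psi^t_{AA'}$ is the pinching of $\psi^{\ox n}_{AA'}$ along the type projections, and invoking the additivity of sandwiched Rényi mutual information (Proposition~\ref{prop:renyid}(iv)) $I_{\alpha_0}^*(\mc{N}^{\ox n},\psi^{\ox n})=n\,I_{\alpha_0}^*(\mc{N},\psi)=n\,I_{\alpha_0}^*(\mc{N})$, converts the weighted average into $I_{\alpha_0}^*(\mc{N})$ up to $O(\tfrac{\log n}{n})$. Since the maximum over $t$ is at least the weighted average, the required lower bound follows.

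The main obstacle is the reconciliation between $\Psi^t_{AA'}$---which is maximally entangled on a type subspace and whose $A$-marginal is \emph{not} $\psi_A^{\ox n}$---and the state $\psi^{\ox n}_{AA'}$ needed to invoke additivity of $I_\alpha^*$. This is precisely the technical subtlety flagged in the introduction, and it requires treating the eigenvalues of $\psi_A$ (encoded in the type probabilities $p^n(t)$) separately from the eigenvectors (encoded in the basis of each $\mc{H}_{A^n}^t$). Several subtle applications of Sion's minimax theorem are expected: one to swap the $\sigma_B$-minimization with a maximization over the auxiliary state $\tau_{AB}$ arising from the variational formula in Proposition~\ref{prop:renyid}(iii), and another to handle the interplay between the max over $t\in\mc{T}_n$ and the minimization defining $I_\alpha^\flat$. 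The polynomial bound $|\mc{T}_n|\leq(n+1)^{|A|}$ in Eq.~\eqref{type:2} ensures that all such per-type losses are swallowed by the $\frac{1}{n}$ normalization, so the argument closes.
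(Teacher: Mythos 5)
Your overall strategy matches the paper's (type decomposition of $\psi^{\ox n}$, Theorem~\ref{thm:strong-euc} on tensor powers, universal symmetric state, additivity of $I^*_\alpha$, polynomial type counting), but two steps do not close as written.

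First, the conversion from $D^\flat_\alpha$ to $D^*_\alpha$. After restricting to $\sigma^u_{B^n}$, the quantity you must lower-bound is $D^{\flat}_{\alpha}\big(\mc{N}^{\ox n}(\Psi^t)\,\big\|\,\Psi^t_{A^n}\ox\sigma^u_{B^n}\big)$, whose first argument has \emph{not} been pinched; the two arguments do not commute, so $D^\flat_\alpha$ and $D^*_\alpha$ need not coincide there. Your justification ("which is exactly what pinching against $\sigma^u_{B^n}$ arranges") presupposes that the pinching has been applied to the state, but nothing in your construction puts it there, and Proposition~\ref{prop:renyid}(vi) is a statement about $D^*_\alpha$ only --- the paper's toolkit contains no pinching bound for $D^\flat_\alpha$. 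The paper resolves this by composing the universal pinching into the channel \emph{before} invoking the code construction: it applies Theorem~\ref{thm:strong-euc} to $\mc{N}^{(m)}:=\mc{P}_{\sigma^u_{B^m}}\circ\mc{N}^{\ox m}$, so the resulting log-Euclidean divergence is automatically evaluated on commuting operators and literally equals $D^*_\alpha$; the price is an extra argument (absorbing $\mc{P}_{\sigma^u_{B^m}}^{\ox k}$ into the decoding POVM by self-adjointness of the pinching map) showing that codes for the pinched channel are also codes for $\mc{N}^{\ox mk}$. Your route could instead be saved by importing $D^*_\alpha\leq D^\flat_\alpha$ for $\alpha>1$ from \cite{MosonyiOgawa2017strong}, but you neither cite nor prove it. (Relatedly, undoing the type pinching on the $A^n$ side requires Lemma~\ref{lem:pinching-sre} rather than Proposition~\ref{prop:renyid}(vi), since the type projections refine the spectral projections of $\psi_A^{\ox n}$ when $p$ has repeated values.)

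Second, and more seriously, the endgame is logically broken. You fix a single $\alpha_0$ and choose $\psi$ attaining $I^*_{\alpha_0}(\mc{N})$, but the bound delivered by Theorem~\ref{thm:strong-euc} carries $\sup_{\alpha>1}$ over \emph{all} $\alpha$ with that one fixed input state. Taking $n\to\infty$ therefore yields $sc(\mc{N},R)\leq\sup_{\alpha>1}\frac{\alpha-1}{\alpha}\big(R-I^*_\alpha(\mc{N},\psi)\big)$, and since $I^*_\alpha(\mc{N},\psi)\leq I^*_\alpha(\mc{N})$ with possibly strict inequality for $\alpha\neq\alpha_0$, this is in general \emph{larger} than the right-hand side of Eq.~\eqref{eq:scexponent-upper}; the claimed "collapse" does not follow. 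What is actually needed is the interchange $\inf_{\psi}\sup_{\alpha}=\sup_{\alpha}\inf_{\psi}$, which is the one genuinely delicate minimax step of this proof: the paper parametrizes pure states by $\rho\in\mc{S}(A')$, establishes concavity of $\lambda\mapsto\lambda\big(R-I^*_{1/(1-\lambda)}(\mc{N},\psi(\rho))\big)$ via Lemma~\ref{lem:convexity} together with convexity in $\rho$ from \cite{LiYao2021reliable}, and only then applies Sion's theorem. The minimax applications you do anticipate (swapping $\sigma_B$ with $\tau_{AB}$, handling $\max_t$) are already contained in Proposition~\ref{prop:varF} and Theorem~\ref{thm:strong-euc}; the one you omit is the one Theorem~\ref{thm:main-upper} itself requires.
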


\begin{proof}
We fix $m\in\mb{N}$, and consider the channel $\mc{N}^{(m)}_{A'^m\rar B^m}:=
\mc{P}_{\sigma_{B^m}^u}\circ\mc{N}_{A' \rar B}^{\ox m}$, where $\sigma_{B^m}^u$
is the universal symmetric state described in Lemma~\ref{lem:u-sym}. Let $A$ be such
that $\mc{H}_A\cong\mc{H}_{A'}$. We further fix an arbitrary pure state ${\psi}_{AA'}$
on $\mc{H}_A\ox\mc{H}_{A'}$. The tensor product state ${\psi}_{AA'}^{\ox m}$ can be
expressed in the form of Eq.~(\ref{equ:typedecompose}), i.e.,
\begin{equation}
\ket{\psi}_{AA'}^{\ox m}=\sum_{t \in \mc{T}_m} \sqrt{p^m(t)}\ket{\Psi^t}_{A^mA'^m},
\end{equation}
where $|\mc{T}_m|\leq (m+1)^{|A|}$ and the set of states $\{\Psi^t_{A^mA'^m}\}_t$
satisfies the condition of Theorem~\ref{thm:strong-euc} for the channel
$\mc{N}^{(m)}_{A'^m\rar B^m}$. Let $\pi_{A^m}^t=\Psi_{A^m}^t$ be the maximally mixed
state on the subspace $\mc{H}_{A^m}^t:=\supp(\Psi_{A^m}^t)$. By Theorem~\ref{thm:strong-euc},
we know that there exists a sequence of codes $\{\mc{C}_k^{(m)}\}_{k \in \mb{N}}$,
such that
\begin{equation}
\label{equ:limitrelation}
\liminf_{k\rar\infty} \frac{1}{k} \log |\mc{C}_k^{(m)}| \geq mR
\end{equation}
and
\begin{align}
     &-\liminf_{k\rar\infty} \frac{1}{k} \log P_s\left((\mc{N}^{(m)}_{A'^m\rar B^m})
                             ^{\ox k},\mc{C}_k^{(m)}\right) \nb\\
\leq & \sup_{\alpha>1} \frac{\alpha-1}{\alpha}\left\{mR-\max_{t\in\mc{T}_m}I_{\alpha}^{\flat}
       \left(\mc{N}^{(m)}_{A'^m\rar B^m},\Psi^t_{A^mA'^m}\right)\right\} \nb\\
  =  & \sup_{\alpha>1} \frac{\alpha-1}{\alpha}\left\{mR-\max_{t\in\mc{T}_m}
       \min_{\sigma_{B^m}\in\mc{S}(B^m)} D_{\alpha}^{\flat}\left(\mc{N}^{(m)}(\Psi^t_{A^mA'^m})
       \big\|\pi^t_{A^m}\ox \sigma_{B^m}\right) \right\}. \label{equ:expom}
\end{align}
As explained in Remark~\ref{rk:codes}, the code $\mc{C}_k^{(m)}$ can be written as
\beq\label{eq:code-km}
\mc{C}_k^{(m)}=\left(\big\{\mc{E}_{A'^{mk}}^i\big\}_i, \big\{\Lambda_{A^{mk}B^{mk}}^i\big\}_i,
\rho_{A^{mk}A'^{mk}}\right),
\eeq
where $\mc{E}_{A'^{mk}}^i$ is a unitary operation on $A'^{mk}$, and the state
$\rho_{A^{mk}A'^{mk}}$ shared by the sender (with $A'^{mk}$) and the
receiver (with $A^{mk}$) is a $k$-fold tensor product of a mixture of the states
$\{\Psi^t_{A^mA'^m}\}_t$. With this, the success probability of $\mc{C}_k^{(m)}$ can
be written as
\begin{align}
 &P_s\left(\big(\mc{N}^{(m)}_{A'^m\rar B^m}\big)^{\ox k},\mc{C}_k^{(m)}\right) \nb\\
=&\frac{1}{|\mc{C}_k^{(m)}|} \sum_{i=1}^{|\mc{C}_k^{(m)}|}\tr\big((\mc{N}^{(m)})^{\ox k}
  \circ \mc{E}_{{A'}^{mk}}^i (\rho_{A^{mk}A'^{mk}})\big)\Lambda_{A^{mk}B^{mk}}^i \nb\\
=&\frac{1}{|\mc{C}_k^{(m)}|} \sum_{i=1}^{|\mc{C}_k^{(m)}|}
  \tr\big(\mc{P}_{\sigma_{B^m}^u}^{\ox k}\circ\mc{N}^{\ox mk}\circ
  \mc{E}_{{A'}^{mk}}^i (\rho_{A^{mk}A'^{mk}})\big)\Lambda_{A^{mk}B^{mk}}^i \nb\\
=&\frac{1}{|\mc{C}_k^{(m)}|} \sum_{i=1}^{|\mc{C}_k^{(m)}|}\tr\big(
  \mc{N}^{\ox mk}\circ \mc{E}_{{A'}^{mk}}^i (\rho_{A^{mk}A'^{mk}})\big)
  {\mc{P}_{\sigma_{B^m}^u}^{\ox k}}(\Lambda_{A^{mk}B^{mk}}^i). \label{eq:Ps-km}
\end{align}

Now, for any integer $n$, we construct for the channel $\mc{N}_{A' \rar B}^{\ox n}$ a code $\mc{C}_n$, adapted from $\{\mc{C}_k^{(m)}\}_k$. Write $n=mk+l$ with $0\leq l<m$. We divide the channel
$\mc{N}_{A' \rar B}^{\ox n}=\mc{N}_{A' \rar B}^{\ox mk}\ox\mc{N}_{A' \rar B}^{\ox l}$ into the first
$mk$ copies and the last $l$ copies. The code $\mc{C}_n$ is performed effectively on the first $mk$
copies of $\mc{N}_{A'\rar B}$, specified by
\beq
\left(\left\{\mc{E}_{A'^{mk}}^i\right\}_i,
\big\{\mc{P}_{\sigma_{B^m}^u}^{\ox k}\big(\Lambda_{A^{mk}B^{mk}}^i\big)\big\}_i,
\rho_{A^{mk}A'^{mk}}\right).
\eeq
That is, we use the same entangled state and employ the same encoding maps with $\mc{C}_k^{(m)}$,
and we replace the decoding measurement operator by $\mc{P}_{\sigma_{B^m}^u}^{\ox k}(\Lambda_{A^{mk}B^{mk}}^i)$ (cf. Eq.~\eqref{eq:code-km}). On the last $l$ copies,
we input an arbitrary state and we do not touch the output in the decoding (equivalently,
all of the decoding measurement operators are tensored with an identity operator on this part).
The codes $\mc{C}_n$ and $\mc{C}_k^{(m)}$ have the same size. So it follows from Eq.~(\ref{equ:limitrelation}) that
\begin{equation}
\label{equ:limitrel}
\liminf_{n\rar\infty} \frac{1}{n} \log|\mc{C}_n|=\frac{1}{m}\liminf_{k\rar\infty} \frac{1}{k} \log|\mc{C}_k^{(m)}| \geq R.
\end{equation}
On the other hand, it is obvious that the success probability of $\mc{C}_n$ is given by Eq.~\eqref{eq:Ps-km} as well. So,
\beq\label{eq:Ps-equal}
P_s\left(\mc{N}_{A' \rar B}^{\ox n}, \mc{C}_n\right)
=P_s\left((\mc{N}^{(m)}_{A'^m\rar B^m})^{\ox k},\mc{C}_k^{(m)}\right).
\eeq
By the definition of $sc(\mc{N}_{A' \rar B},R)$, we combine Eq.~(\ref{equ:expom}), Eq.~(\ref{eq:Ps-equal}) and Eq.~(\ref{equ:limitrel}) to get
\begin{align}
&sc(\mc{N}_{A' \rar B},R) \nb\\
\leq&-\liminf_{n\rar\infty}\frac{1}{n}\log P_s\left(\mc{N}_{A'\rar B}^{\ox n},\mc{C}_n\right) \nb\\
=&-\frac{1}{m}\liminf_{k\rar\infty} \frac{1}{k}
  \log P_s\left((\mc{N}^{(m)}_{A'^m\rar B^m})^{\ox k},\mc{C}_k^{(m)}\right) \nb\\
\leq &  \sup_{\alpha>1} \frac{\alpha-1}{\alpha}
\left\{R-\frac{1}{m}\max_{t\in\mc{T}_m}\min_{\sigma_{B^m}\in\mc{S}(B^m)} D_{\alpha}^{\flat}
\left(\mc{P}_{\sigma_{B^m}^u}\circ\mc{N}^{\ox m}\left(\Psi^t_{A^mA'^m}\right) \big\|
\pi^t_{A^m}\ox \sigma_{B^m}\right) \right\}. \label{eq:zero}
\end{align}

Next, we further bound $sc(\mc{N}_{A' \rar B},R)$ based on Eq.~(\ref{eq:zero}). We have
\begin{align}
     &\frac{1}{m}\max_{t\in\mc{T}_m}\min_{\sigma_{B^m}\in\mc{S}(B^m)} D_{\alpha}^{\flat}
      \left(\mc{P}_{\sigma_{B^m}^u}\circ\mc{N}^{\ox m}\left(\Psi^t_{A^mA'^m}\right)
      \big\|\pi^t_{A^m}\ox \sigma_{B^m}\right) \nb\\
\stackrel{(a)}{=} &\frac{1}{m}\max_{t\in\mc{T}_m}\min_{\sigma_{B^m}\in\mc{S}_{\rm{sym}}(B^m)}
      D_{\alpha}^{\flat}\left(\mc{P}_{\sigma_{B^m}^u}\circ\mc{N}^{\ox m}
      \left(\Psi^t_{A^mA'^m}\right)\big\| \pi^t_{A^m}\ox \sigma_{B^m}\right)  \nb\\
\stackrel{(b)}{\geq} &\frac{1}{m}\max_{t\in\mc{T}_m} D_{\alpha}^{\flat}
      \left(\mc{P}_{\sigma_{B^m}^u}\circ\mc{N}^{\ox m}\left(\Psi^t_{A^mA'^m}\right)
      \big\| \pi^t_{A^m}\ox \sigma_{B^m}^u\right)-\frac{\log v_{m,|B|}}{m}  \nb\\
\stackrel{(c)}{=}  &\frac{1}{m}\max_{t\in\mc{T}_m} D_{\alpha}^{*}
      \left(\mc{P}_{\sigma_{B^m}^u}\circ\mc{N}^{\ox m}\left(\Psi^t_{A^mA'^m}\right)
      \big\| \pi^t_{A^m}\ox \sigma_{B^m}^u\right)-\frac{\log v_{m,|B|}}{m} \nb\\
\stackrel{(d)}{=}  &\frac{1}{m}\sup_{q}D_{\alpha}^{*}\Big(\mc{P}_{\sigma_{B^m}^u}\circ\mc{N}^{\ox m}
      \big(\sum_{t \in \mc{T}_m}q(t)\Psi^t_{A^mA'^m}\big)\big\|
      \sum_{t\in\mc{T}_m}q(t)\pi^t_{A^m}\ox\sigma_{B^m}^u\Big)-\frac{\log v_{m,|B|}}{m} \nb\\
\geq &\frac{1}{m} D_{\alpha}^{*}\Big(\mc{P}_{\sigma_{B^m}^u}\circ\mc{N}^{\ox m}
      \big(\sum_{t \in \mc{T}_m}p^m(t)\Psi^t_{A^mA'^m}\big)\big\| \sum_{t \in \mc{T}_m}
      p^m(t) \pi^t_{A^m}\ox \sigma_{B^m}^u\Big)-\frac{\log v_{m,|B|}}{m}, \label{eq:first}
\end{align}
where $(a)$ is essentially due to Proposition~\ref{prop:renyid}~(\romannumeral5)
and is shown in Lemma~\ref{lem:symmetry}, $(b)$ is by Lemma~\ref{lem:u-sym}
and Proposition \ref{prop:renyid}~(\romannumeral2), $(c)$ is because $\mc{P}_{\sigma_{B^m}^u}\circ\mc{N}^{\ox m}(\Psi^t_{A^mA'^m})$ and $ \pi^t_{A^m} \ox
\sigma_{B^m}^u$ commute, and $(d)$ can be easily verified from the definition
of $D_{\alpha}^{*}$. To go ahead, we introduce
\beq
\mc{P}_{\mc{T}_m}(\cdot)=\sum_{t \in \mc{T}_m} \Pi_t (\cdot) \Pi_t
\eeq
with $\Pi_t$ being the projection onto the subspace $\mc{H}_{A^m}^t$, and further bound
Eq.~(\ref{eq:first}) as follows.
\begin{align}
     &\frac{1}{m} D_{\alpha}^* \Big(\mc{P}_{\sigma_{B^m}^u}\circ\mc{N}^{\ox m}
      \big(\sum_{t \in \mc{T}_m } p^m(t)\Psi^t_{A^mA'^m}\big)\big\|\sum_{t \in \mc{T}_m }
      p^m(t)\pi^t_{A^m}\ox\sigma_{B^m}^u\Big)-\frac{\log v_{m,|B|}}{m}  \nb\\
 =   &\frac{1}{m} D_{\alpha}^* \left(\big(\mc{P}_{\mc{T}_m}\ox\mc{P}_{\sigma_{B^m}^u}\big)
      \big(\mc{N}^{\ox m}(\psi_{AA'}^{\ox m})\big) \big\|
      \psi_{A}^{\ox m} \ox \sigma_{B^m}^u\right)-\frac{\log v_{m,|B|}}{m} \nb\\
\stackrel{(a)}{\geq} &\frac{1}{m} D_{\alpha}^* \left(\mc{N}^{\ox m}\left({\psi_{AA'}^{\ox m}}\right)
      \big\|{\psi_{A}}^{\ox m}\ox \sigma_{B^m}^u \right)-\frac{3\log v_{m,|B|}}{m}
      -\frac{2\log (m+1)^{|A|}}{m} \nb\\
\geq &\frac{1}{m}\min_{\sigma_{B^m}} D_{\alpha}^*
      \left((\mc{N}(\psi_{AA'}))^{\ox m} \big\|{\psi_{A}}^{\ox m}
      \ox \sigma_{B^m} \right)-\frac{3\log v_{m,|B|}}{m}-\frac{2\log (m+1)^{|A|}}{m} \nb\\
\stackrel{(b)}{=}  &I_{\alpha}^*(\mc{N}_{A' \rar B},\psi_{AA'})
      -\frac{3\log v_{m,|B|}}{m}-\frac{2\log (m+1)^{|A|}}{m}, \label{eq:second}
\end{align}
where $I_{\alpha}^*(\mc{N}_{A' \rar B},\psi_{AA'}):=I_{\alpha}^*(A:B)_{\mc{N}(\psi_{AA'})}$. In Eq.~\eqref{eq:second}, $(a)$ is a result of Lemma~\ref{lem:pinching-sre}, which generalizes Proposition~\ref{prop:renyid}~(\romannumeral6), and $(b)$ comes from the additivity property of Proposition~\ref{prop:renyid}~(\romannumeral4). Now, combining Eq.~\eqref{eq:zero},
Eq.~\eqref{eq:first} and Eq.~\eqref{eq:second} together, and letting $m\rar\infty$, we arrive at
\begin{align}
sc(\mc{N}_{A' \rar B},R)
&\leq \sup_{\alpha>1} \frac{\alpha-1}{\alpha}\big\{R-I_{\alpha}^*(\mc{N}_{A' \rar B},\psi_{AA'})\big\}
      +\frac{3\log v_{m,|B|}}{m}+\frac{2\log (m+1)^{|A|}}{m} \nb\\
&\rar \sup_{\alpha>1} \frac{\alpha-1}{\alpha}\big\{R-I_{\alpha}^*(\mc{N}_{A' \rar B},\psi_{AA'})\big\}.
      \label{eq:third}
\end{align}

At last, noticing that Eq.~\eqref{eq:third} holds for arbitrary pure state $\psi_{AA'}$,
we can optimize it over all pure states
\beq
\psi(\rho)_{AA'}
:=(\1_A\ox\sqrt{\rho_{A'}})(|A|\Psi_{AA'})(\1_A\ox\sqrt{\rho_{A'}}),
\quad\text{with } \rho\in\mc{S}(A').
\eeq
Therefore,
\begin{align}
sc(\mc{N}_{A' \rar B},R)
&\leq \inf_{\rho\in\mc{S}(A')}\sup_{\alpha>1} \frac{\alpha-1}{\alpha}
 \Big\{R-I_{\alpha}^*\big(\mc{N}_{A' \rar B},\psi(\rho)_{AA'}\big)\Big\} \nb\\
&  = \inf_{\rho\in\mc{S}(A')}\sup_{0<\lambda<1} \lambda
     \Big\{R-I_{\frac{1}{1-\lambda}}^*\big(\mc{N}_{A' \rar B},\psi(\rho)_{AA'}\big)\Big\}. \label{eq:fourth}
\end{align}
The function
\beq
f(\lambda, \rho)=
\lambda\Big\{R-I_{\frac{1}{1-\lambda}}^*\big(\mc{N}_{A' \rar B},\psi(\rho)_{AA'}\big)\Big\}
\eeq
is concave and continuous in $\lambda$ on the interval $(0,1)$ by Lemma~\ref{lem:convexity},
and as shown in~\cite{LiYao2021reliable} it is convex and continuous in $\rho$ on the compact convex set $\mc{S}(A')$ for any $\lambda\in(0,1)$. Thus Sion's minimax theorem applies again. This lets us obtain
\begin{align}
sc(\mc{N}_{A' \rar B},R)
&\leq\sup_{0<\lambda<1} \inf_{\rho\in\mc{S}(A')} \lambda
     \Big\{R-I_{\frac{1}{1-\lambda}}^*\big(\mc{N}_{A' \rar B},\psi(\rho)_{AA'}\big)\Big\} \nb\\
& =  \sup_{\alpha>1}\frac{\alpha-1}{\alpha}\Big\{R-I_{\alpha}^*(\mc{N}_{A'\rar B})\Big\}, \label{eq:fifth}
\end{align}
and we are done.
\end{proof}

\bigskip
The following Lemma~\ref{lem:symmetry} and Lemma~\ref{lem:convexity} are used in the
proof of Theorem~\ref{thm:main-upper}.

\begin{lemma}
\label{lem:symmetry}
The first equality of Eq.~\eqref{eq:first} holds: using the same notation as there, we have
\begin{align}
 \min_{\sigma_{B^m} \in \mc{S}(B^m)} & D_{\alpha}^{\flat}
  \left(\mc{P}_{\sigma_{B^m}^u}\circ\mc{N}^{\ox m}\left(\Psi^t_{A^mA'^m}\right)
      \big\|\pi^t_{A^m}\ox \sigma_{B^m}\right) \nb\\
=\min_{\sigma_{B^m}\in\mc{S}_{\rm{sym}}(B^m)} & D_{\alpha}^{\flat}
  \left(\mc{P}_{\sigma_{B^m}^u}\circ\mc{N}^{\ox m}\left(\Psi^t_{A^mA'^m}\right)
      \big\| \pi^t_{A^m}\ox \sigma_{B^m}\right). \label{equ:symmin}
\end{align}
\end{lemma}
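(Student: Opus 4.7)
The plan is a standard symmetrization argument. The ``$\geq$'' direction in Eq.~\eqref{equ:symmin} is trivial since $\mc{S}_{\rm{sym}}(B^m)\subseteq\mc{S}(B^m)$. For the reverse ``$\leq$'' direction, given any $\sigma_{B^m}\in\mc{S}(B^m)$ I would pass to its group average $\bar\sigma_{B^m}:=\frac{1}{|S_m|}\sum_{\iota\in S_m}W^\iota_{B^m}\sigma_{B^m}W^{\iota\dagger}_{B^m}\in\mc{S}_{\rm{sym}}(B^m)$ and aim to show
\[
D_\alpha^{\flat}\big(\rho^t\big\|\pi^t_{A^m}\ox\bar\sigma_{B^m}\big)
\leq D_\alpha^{\flat}\big(\rho^t\big\|\pi^t_{A^m}\ox\sigma_{B^m}\big),
\]
where $\rho^t:=\mc{P}_{\sigma^u_{B^m}}\circ\mc{N}^{\ox m}(\Psi^t_{A^mA'^m})$. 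This inequality would combine (a) the unitary invariance of $D_\alpha^{\flat}$ under the simultaneous conjugation of both arguments by $W^\iota_{A^m}\ox W^\iota_{B^m}$, which makes every term $D_\alpha^{\flat}\big(\rho^t\big\|\pi^t_{A^m}\ox W^\iota_{B^m}\sigma_{B^m}W^{\iota\dagger}_{B^m}\big)$ equal to $D_\alpha^{\flat}\big(\rho^t\big\|\pi^t_{A^m}\ox\sigma_{B^m}\big)$, with (b) the convexity of $D_\alpha^{\flat}$ in the second argument (Proposition~\ref{prop:renyid}(v), valid for all $\alpha\in[0,+\infty)$), applied to the uniform average over $S_m$.

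To make step (a) go through, I need the joint permutation invariance of $\rho^t$ and the permutation invariance of $\pi^t_{A^m}$. These would follow by composing four symmetries: the type-class state $\Psi^t_{A^mA'^m}$ is $W^\iota_{A^m}\ox W^\iota_{A'^m}$-invariant since $T_m^t$ is closed under the action of $S_m$; the channel $\mc{N}^{\ox m}$ is permutation-covariant, converting $W^\iota_{A'^m}$-conjugation on the input into $W^\iota_{B^m}$-conjugation on the output; the pinching $\mc{P}_{\sigma^u_{B^m}}$ commutes with $W^\iota_{B^m}(\cdot)W^{\iota\dagger}_{B^m}$ because $\sigma^u_{B^m}\in\mc{S}_{\rm{sym}}(B^m)$ forces each of its spectral projections to be individually permutation-invariant; and $\pi^t_{A^m}$, being the normalized projector onto $\supp(\Psi^t_{A^m})$, is automatically permutation-invariant since that subspace is. Together these give $(W^\iota_{A^m}\ox W^\iota_{B^m})\rho^t(W^\iota_{A^m}\ox W^\iota_{B^m})^\dagger=\rho^t$ and $W^\iota_{A^m}\pi^t_{A^m}W^{\iota\dagger}_{A^m}=\pi^t_{A^m}$ for every $\iota\in S_m$.

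The main point deserving care is the pinching step. Each spectral projection $P_i$ of $\sigma^u_{B^m}$ corresponds to a distinct eigenvalue $\lambda_i$, and conjugation by $W^\iota_{B^m}$ sends $P_i$ to a spectral projection of $W^\iota_{B^m}\sigma^u_{B^m}W^{\iota\dagger}_{B^m}=\sigma^u_{B^m}$ with the same eigenvalue $\lambda_i$; uniqueness of the spectral projection for a given eigenvalue then forces $W^\iota_{B^m}P_iW^{\iota\dagger}_{B^m}=P_i$, so $\mc{P}_{\sigma^u_{B^m}}$ genuinely intertwines with the permutation action. All remaining steps are routine applications of unitary invariance and convexity of $D_\alpha^{\flat}$, so I do not anticipate further obstacles.
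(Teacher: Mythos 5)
Your proposal is correct and follows essentially the same route as the paper's proof: the trivial inclusion direction, then symmetrization of $\sigma_{B^m}$ over $S_m$ using the joint permutation invariance of $\mc{P}_{\sigma^u_{B^m}}\circ\mc{N}^{\ox m}(\Psi^t_{A^mA'^m})$ and $\pi^t_{A^m}$, unitary invariance of $D_\alpha^\flat$, and convexity in the second argument (Proposition~\ref{prop:renyid}~(v)). Your extra justification that the spectral projections of $\sigma^u_{B^m}$ are individually permutation-invariant fills in a detail the paper only asserts, and is a welcome addition.
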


\begin{proof}
Since $\mc{S}_{\rm{sym}}(B^m)\subset\mc{S}(B^m)$, the ``$\leq$'' part is obvious. In the
following, we prove the opposite direction. For a permutation $\iota \in S_m$, let
$\mc{W}^\iota_{A^m}$ be the unitary operation that acts as $\mc{W}^\iota_{A^m}(X_{A^m})
=W_{A^m}^\iota X_{A^m}W_{A^m}^{\iota \dagger }$, where $W^\iota_{A^m}$ is given in
Eq.~\eqref{eq:permutation}. Also let $\mc{W}^\iota_{{A'}^m}$ and $\mc{W}^\iota_{B^m}$
be defined similarly. It is easy to see that
\beq
 \mc{P}_{\sigma_{B^m}^u}\circ\mc{N}^{\ox m}\circ\mc{W}^\iota_{{A'}^m}
=\mc{W}^\iota_{B^m}\circ\mc{P}_{\sigma_{B^m}^u}\circ\mc{N}^{\ox m}
\eeq
holds for any permutation $\iota$. This, together with the fact that $\Psi^t_{A^mA'^m}\in\mc{S}_{\rm{sym}}(A^mA'^m)$, ensures that
\beq\label{eq:symmetry-1}
\mc{P}_{\sigma_{B^m}^u}\circ\mc{N}^{\ox m}
\left(\Psi^t_{A^mA'^m}\right)\in\mc{S}_{\rm{sym}}(A^mB^m).
\eeq
Eq.~\eqref{eq:symmetry-1} and the fact that $\pi^t_{A^m}\in\mc{S}_{\rm{sym}}(A^m)$ lead to
\begin{align}
 &D_{\alpha}^{\flat}\left(\mc{P}_{\sigma_{B^m}^u}\circ\mc{N}^{\ox m}
    \left(\Psi^t_{A^mA'^m}\right)\big\|\pi^t_{A^m}\ox \sigma_{B^m}\right) \nb\\
=&D_{\alpha}^{\flat}\left(\mc{W}^\iota_{A^m}\ox\mc{W}^\iota_{B^m}
    \left(\mc{P}_{\sigma_{B^m}^u}\circ\mc{N}^{\ox m}
    \left(\Psi^t_{A^mA'^m}\right)\right)\big\|\mc{W}^\iota_{A^m}
    \left(\pi^t_{A^m}\right)\ox \mc{W}^\iota_{B^m}\left(\sigma_{B^m}\right)\right) \nb\\
=&D_{\alpha}^{\flat}\left(\mc{P}_{\sigma_{B^m}^u}\circ\mc{N}^{\ox m}
    \left(\Psi^t_{A^mA'^m}\right)\big\|\pi^t_{A^m}\ox \mc{W}^\iota_{B^m}
    \left(\sigma_{B^m}\right)\right) \label{eq:symmetry-2}
\end{align}
for any $\iota \in S_m$ and $\sigma_{B^m}\in\mc{S}(B^m)$.
Proposition~\ref{prop:renyid}~(\romannumeral5) and Eq.~(\ref{eq:symmetry-2})
let us obtain
\begin{align}
     &D_{\alpha}^{\flat}\left(\mc{P}_{\sigma_{B^m}^u}\circ\mc{N}^{\ox m}
      \left(\Psi^t_{A^mA'^m}\right)\big\|\pi^t_{A^m}\ox \sigma_{B^m}\right)  \nb\\
  =  &\sum_{\iota \in S_m}\frac{1}{|S_m|}D_{\alpha}^{\flat}
      \left(\mc{P}_{\sigma_{B^m}^u}\circ\mc{N}^{\ox m}\left(\Psi^t_{A^mA'^m}\right)
      \big\|\pi^t_{A^m}\ox \mc{W}^\iota_{B^m}\left(\sigma_{B^m}\right)\right) \nb\\
\geq &D_{\alpha}^{\flat}\Big(\mc{P}_{\sigma_{B^m}^u}\circ\mc{N}^{\ox m}
      \left(\Psi^t_{A^mA'^m}\right)\big\|\pi^t_{A^m}\ox \sum_{\iota\in S_m}
      \frac{1}{|S_m|}\mc{W}^\iota_{B^m}\left(\sigma_{B^m}\right)\Big).
\end{align}
At last, noticing that $\sum_{\iota \in S_m}\frac{1}{|S_m|}\mc{W}^\iota_{B^m}
\left(\sigma_{B^m}\right)\in\mc{S}_{\rm{sym}}(B^m)$, we complete the proof.
\end{proof}

\medskip
\begin{lemma}
  \label{lem:convexity}
For any state $\rho_{AB}\in\mc{S}(AB)$, the function
\beq
g(\lambda) = \lambda I_{\frac{1}{1-\lambda}}^*(A:B)_\rho
\eeq
is convex and continuous on $(-1,1)$.
\end{lemma}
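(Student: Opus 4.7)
The plan is to reformulate $g(\lambda)$ as a log Schatten-norm expression and then exploit complex interpolation to obtain convexity on the open interval $(-1,1)$; continuity there will follow automatically from convexity.

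First I would recast the objective in Schatten-norm form. Setting $\alpha = 1/(1-\lambda) \in (1/2,\infty)$, a short computation using $\lambda/(\alpha-1) = 1/\alpha$ together with the identity $\tr(\sigma^{s}\rho\sigma^{s})^{\alpha} = \|\sigma^{s}\rho^{1/2}\|_{2\alpha}^{2\alpha}$ yields, for every admissible $\sigma_B \in \mc{S}(B)$,
\begin{equation*}
\lambda\, D_\alpha^*\!\big(\rho_{AB}\,\|\,\rho_A\ox\sigma_B\big) \;=\; 2\log\big\|(\rho_A\ox\sigma_B)^{-\lambda/2}\rho_{AB}^{1/2}\big\|_{2/(1-\lambda)}.
\end{equation*}
Hence $g(\lambda) = \inf_{\sigma_B\in\mc{S}(B)}\phi(\lambda,\sigma_B)$, where $\phi(\lambda,\sigma_B)$ denotes the right-hand side.

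For each fixed $\sigma_B$, convexity of $\phi(\cdot,\sigma_B)$ on $(-1,1)$ would be established via the Stein--Hirschman complex interpolation theorem applied to the operator-valued analytic family $T_z := (\rho_A\ox\sigma_B)^{-z/2}\rho_{AB}^{1/2}$ on the vertical strip $\{z : |\mathrm{Re}\,z| < 1\}$. The key observation is that $(\rho_A\ox\sigma_B)^{-iy/2}$ is unitary on the relevant support, so $\|T_{\lambda+iy}\|_{2/(1-\lambda)}$ does not depend on $y$; the three-lines inequality between two vertical lines $\mathrm{Re}\,z=\lambda_0,\lambda_1$ then immediately delivers convexity of $\lambda\mapsto\phi(\lambda,\sigma_B)$.

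The main obstacle is that the infimum of a family of convex functions is not convex in general, so the preceding step alone does not suffice. To overcome this I would substitute the closed-form Sibson-type minimizer $\sigma_B^\star \propto \big(\tr_A[\rho_A^{-\lambda/2}\rho_{AB}\rho_A^{-\lambda/2}]^{1/(1-\lambda)}\big)^{1-\lambda}$, leading to the explicit reduced expression
\begin{equation*}
g(\lambda) \;=\; \log\tr_B\!\Big[\big(\tr_A[\rho_A^{-\lambda/2}\rho_{AB}\rho_A^{-\lambda/2}]^{1/(1-\lambda)}\big)^{1-\lambda}\Big],
\end{equation*}
which is a nested Schatten $L_p(L_q)$-norm of the single analytic operator family $(\rho_A^{-z/2}\ox I_B)\rho_{AB}^{1/2}$. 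A second, nested application of complex interpolation---interpolating the inner $L_{1/(1-\lambda)}$-norm on system $A$ and the outer $L_{1-\lambda}$-norm on system $B$ simultaneously along a horizontal segment in the strip, and again using that the imaginary powers of $\rho_A$ are unitary on $A$---then yields convexity of $g$ on the full open interval $(-1,1)$. This nested interpolation in the reduced Sibson form is the technically delicate part; continuity of $g$ on the open interval is then an automatic consequence.
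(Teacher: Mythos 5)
Your opening reduction is sound: the identity $\lambda\,D^*_{1/(1-\lambda)}(\rho_{AB}\|\rho_A\ox\sigma_B)=2\log\|(\rho_A\ox\sigma_B)^{-\lambda/2}\rho_{AB}^{1/2}\|_{2/(1-\lambda)}$ is correct, $1/p_\lambda=(1-\lambda)/2$ is affine in $\lambda$, and the Stein--Hirschman argument does give convexity of $\lambda\mapsto\phi(\lambda,\sigma_B)$ for each \emph{fixed} $\sigma_B$. You also correctly identify that this is not enough, since an infimum of convex functions need not be convex. The gap is in how you try to close this hole. For the \emph{sandwiched} R\'enyi divergence there is no known closed-form Sibson-type minimizer for $\min_{\sigma_B}D^*_\alpha(\rho_{AB}\|\rho_A\ox\sigma_B)$: the formula $\sigma_B^\star\propto\big(\tr_A[\rho_A^{-\lambda/2}\rho_{AB}\rho_A^{-\lambda/2}]^{1/(1-\lambda)}\big)^{1-\lambda}$ and the resulting ``reduced expression'' for $g(\lambda)$ are the correct answer only in the commutative (classical) case and for the Petz divergence, where the optimization over $\sigma_B$ decouples from the trace. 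In the sandwiched case $\sigma_B$ enters through the non-commuting factor $\sigma_B^{(1-\alpha)/(2\alpha)}$ sandwiching $\rho_{AB}$, and the nested-trace identity you assert is not a theorem --- indeed, the absence of such a closed form is precisely why \cite{HayashiTomamichel2016correlation} had to develop an asymptotic pinching representation to handle $I^*_\alpha(A:B)_\rho$ in the first place. Even granting a genuine $L_p(L_q)$ (amalgamated-norm) representation of the optimized quantity, which does exist in the operator-space literature but is defined variationally rather than by your nested trace, the ``nested application of complex interpolation'' is exactly the step that would carry the whole proof, and you leave it as a sketch. As written, the argument does not establish convexity of $g$.

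For contrast, the paper's proof sidesteps the non-commutativity entirely. It first observes that for commuting $\varrho,\omega$ one has $D^*_\alpha=D^\flat_\alpha$, so the variational formula of Proposition~\ref{prop:renyid}~(iii) gives $\lambda D^*_{1/(1-\lambda)}(\varrho\|\omega)=\max_{\tau}\{\lambda D(\tau\|\omega)-D(\tau\|\varrho)\}$, a pointwise maximum of affine functions of $\lambda$, hence convex. It then invokes the pinching representation of \cite{HayashiTomamichel2016correlation} (their Proposition~8), $I^*_\alpha(A:B)_\rho=\frac1n D^*_\alpha\big(\mc{P}_{\rho_A^{\ox n}\ox\sigma^u_{B^n}}(\rho_{AB}^{\ox n})\big\|\rho_A^{\ox n}\ox\sigma^u_{B^n}\big)+O(\log n/n)$ with an error uniform in $\alpha$; since the pinched state commutes with the reference state, $g$ is a uniform limit of convex functions and hence convex. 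If you want to salvage your interpolation route, you would need to either prove the $L_p(L_q)$ representation together with its interpolation property in this setting, or find another mechanism that commutes the infimum over $\sigma_B$ with the convexity argument; the pinching reduction is the standard and far more economical way to do this.
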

\begin{proof}
Let $\varrho\in\mc{S}(\mc{H})$ and $\omega\in\mc{S}(\mc{H})$ be two states that are
commutative. We consider
\beq
g_{\varrho,\omega}(\lambda):=\lambda D^*_{\frac{1}{1-\lambda}}(\varrho\|\omega).
\eeq
$g_{\varrho,\omega}(\lambda)$ is obviously continuous on $(-1,1)$. Since $\varrho$ and
$\omega$ commute, we have $D^*_\alpha(\varrho\|\omega)=D^\flat_\alpha(\varrho\|\omega)$
and the variational expression of Proposition~\ref{prop:renyid}~(\romannumeral3) applies.
So
\begin{align}
g_{\varrho,\omega}(\lambda)
&=s(\lambda)\lambda\max_{\tau\in\mc{S}_{\varrho}(\mc{H})}s(\lambda)
  \big\{D(\tau\|\omega)-\frac{1}{\lambda}D(\tau\|\varrho)\big\} \nb\\
&=\max_{\tau\in\mc{S}_{\varrho}(\mc{H})}\big\{\lambda D(\tau\|\omega)-D(\tau\|\varrho)\big\},
  \label{eq:convexity-1}
\end{align}
where $s(\lambda)=1$ for $\lambda\in(0,1)$ and $s(\lambda)=-1$ for $\lambda\in(-1,0)$. When
$\lambda=0$, Eq.~\eqref{eq:convexity-1} still holds without the intermediate step. From Eq.~\eqref{eq:convexity-1}, it is easy to verify that $g_{\varrho,\omega}(\lambda)$ is convex
on $(-1,1)$.

Now, we turn to the consideration of $g(\lambda)$. It is proved
in~\cite[Proposition 8]{HayashiTomamichel2016correlation} that for $\alpha\in[\frac{1}{2},\infty)$,
\beq
I^*_\alpha(A:B)_\rho=\frac{1}{n}D^*_\alpha
\left(\mc{P}_{\rho_A^{\ox n}\ox\sigma^u_{B^n}}\big(\rho_{AB}^{\ox n}\big)\big\|
\rho_A^{\ox n}\ox\sigma^u_{B^n}\right)+O\big(\frac{\log n}{n}\big),
\eeq
where the underlying constants in the term $O\big(\frac{\log n}{n}\big)$ are independent
of $\alpha$. Thus, we have
\beq\label{eq:convexity-2}
g(\lambda)=\frac{1}{n}\lambda D^*_{\frac{1}{1-\lambda}}
\left(\mc{P}_{\rho_A^{\ox n}\ox\sigma^u_{B^n}}\big(\rho_{AB}^{\ox n}\big)\big\|
\rho_A^{\ox n}\ox\sigma^u_{B^n}\right)+O\big(\frac{\log n}{n}\big)\lambda.
\eeq
The second term of the right hand side of Eq.~\eqref{eq:convexity-2} vanishes uniformly in
$\lambda$, as $n$ increases. So, the result obtained above for $g_{\varrho,\omega}(\lambda)$
lets us complete the proof.
\end{proof}

\section{Conclusion and discussion}
\label{sec:discussion}
We have determined the strong converse exponents for both entanglement-assisted and
quantum-feedback-assisted communication over quantum channels, building on previous
works of~\cite{GuptaWilde2015multiplicativity} and~\cite{CMW2016strong}. The formulas,
being the same for these two tasks, are expressed in terms of the sandwiched R{\'e}nyi
information $I_{\alpha}^*(\mc{N}_{A \rar B})$ of the channel $\mc{N}_{A \rar B}$
with $\alpha>1$, providing a complete operational interpretation for this quantity.
We point out that in~\cite{LiYao2021reliable}, an operational interpretation for
$I_{\alpha}^*(\mc{N}_{A \rar B})$ with $\alpha\in(1,2]$ has been found by the authors,
in characterizing the reliability function of quantum channel simulation.

Our results reinforce the viewpoint that the theory of entanglement-assisted
communication is the natural quantum generalization of Shannon's theory of classical
communication. On the one hand, the obtained strong converse exponents take the form
similar to Arimoto's exponent for classical channels~\cite{Arimoto1973converse}. On
the other hand, these results imply that additional classical or quantum feedback does
not change the strong converse exponents of entanglement-assisted communication, in
full analogy to the classical situation~\cite{Augustin1978noisy, CsiszarKorner1982feedback, DueckKorner1979reliability}, too.

At last, we comment that, to achieve the strong converse exponents for entanglement-assisted
communication, the shared entanglement can be restricted to the form of maximally entangled
states. This is in agreement with the fact that maximally entangled states suffice to
achieve the entanglement-assisted capacities~\cite{BSST2002entanglement}. Inspecting our
proof, we see that the entanglement we used is an ensemble of maximally entangled states.
However, for any fixed transmission rate $R$, a particular one from the ensemble works,
although the ensemble and the particular maximally entangled state vary with $R$.

\section*{Acknowledgements}
We are grateful to the anonymous referees for their valuable suggestions, which have
helped us improve the manuscript. In particular, one of the anonymous referees has
suggested Lemma~\ref{lem:pinching-sre} to us.

{\appendix[Auxiliary Lemmas]
The following is Sion's minimax theorem~\cite{Komiya1988elementary, Sion1958general}.
\begin{lemma}
\label{lemma:minimax}
Let $X$ be a compact convex set in a topological vector space $V$ and $Y$ be a convex
subset of a vector space $W$. Let $f : X \times Y \rar \mb{R}$ be such that
\begin{enumerate}[(i)]
  \item $f(x,\cdot)$ is quasi-concave and upper semi-continuous on $Y$ for each $x \in X$, and
  \item $f(\cdot, y)$ is quasi-convex and lower semi-continuous on $X$ for each $y \in Y$.
\end{enumerate}
Then, we have
\begin{equation}
\label{eq:mini}
\inf_{x \in X} \sup_{y \in Y} f(x,y)= \sup_{y \in Y} \inf_{x \in X} f(x,y),
\end{equation}
and the infima in Eq.~(\ref{eq:mini}) can be replaced by minima.
\end{lemma}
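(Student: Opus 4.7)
The plan is to prove Sion's theorem by contradiction, manipulating the level sets of $f(\cdot,y)$ and $f(x,\cdot)$ together with the compactness of $X$. The inequality $\sup_y \inf_x f(x,y) \leq \inf_x \sup_y f(x,y)$ is immediate from the definitions, so only the reverse direction requires work.

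Suppose, for contradiction, that there exist reals $a<b$ with $\sup_y \inf_x f(x,y) < a < b < \inf_x \sup_y f(x,y)$. For each $y\in Y$ the set $U_y := \{x\in X : f(x,y) > a\}$ is open in $X$ by lower semi-continuity of $f(\cdot,y)$, and the strict lower bound $\inf_x \sup_y f(x,y) > a$ forces $\{U_y\}_{y\in Y}$ to cover $X$. Compactness then yields a finite subcover $U_{y_1},\ldots,U_{y_n}$. Simultaneously, the upper bound $\sup_y \inf_x f(x,y) < a$ combined with the quasi-convexity and lower semi-continuity of $f(\cdot,y)$ ensures that each sub-level set $C_y := \{x\in X : f(x,y) \leq a\}$ is a nonempty, closed, convex subset of $X$. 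The goal is to exhibit some $x^* \in \bigcap_{i=1}^n C_{y_i}$; such $x^*$ satisfies $f(x^*,y_i)\leq a$ for every $i$ and therefore lies outside every $U_{y_i}$, contradicting the fact that they cover $X$.

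The crux of the proof is a two-point reduction which, given $y_1,y_2\in Y$, uses upper semi-continuity and quasi-concavity of $f(x,\cdot)$ along the segment $[y_1,y_2]\subset Y$ (convexity of $Y$ being needed here) to locate an interior point $y_0 \in \mathrm{conv}\{y_1,y_2\}$ such that $C_{y_0}$ relates to $C_{y_1}\cup C_{y_2}$ in a controlled way. Iterating this reduction inductively on the cardinality of the finite subcover collapses the family $\{C_{y_i}\}_{i=1}^n$ to a single nonempty closed convex set, producing the desired $x^*$. The supplementary statement that the infima in the equality can be replaced by minima is then a direct consequence of compactness of $X$ and lower semi-continuity of $f(\cdot,y)$.

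The main obstacle will be the two-point reduction step, where all three nontrivial hypotheses on $f(x,\cdot)$ (quasi-concavity, upper semi-continuity, convexity of $Y$) must be invoked at once to ``interpolate'' level sets continuously along a segment. This is precisely where Sion's theorem departs from the linear case of von Neumann and is the reason why elementary proofs (such as Komiya's) must either use the Knaster--Kuratowski--Mazurkiewicz lemma — itself equivalent to Brouwer's fixed-point theorem — or a connectedness-plus-separation argument on convex subsets of $X$. Once that lemma is established, the remainder of the argument is essentially bookkeeping.
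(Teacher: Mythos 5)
The paper does not actually prove this lemma; it is Sion's minimax theorem, quoted with citations to Komiya and Sion, so your sketch can only be judged on its own terms. On those terms, the skeleton is the right one and matches Komiya's elementary proof: the trivial inequality, the contradiction hypothesis $\sup_y\inf_x f<a<b<\inf_x\sup_y f$, the open cover $\{U_y\}_{y}$ of $X$ via lower semi-continuity, the finite subcover via compactness, the closed convex sublevel sets $C_y$, and the identification of a two-point interpolation lemma along segments of $Y$ (quasi-concavity plus upper semi-continuity of $f(x,\cdot)$ plus a connectedness/separation argument) as the crux. The final remark that the infima become minima (the map $x\mapsto\sup_y f(x,y)$ is a supremum of lower semi-continuous functions, hence lower semi-continuous on the compact set $X$) is also correct.

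There are, however, two genuine gaps. First, the endgame is misstated: you aim to produce $x^*\in\bigcap_{i=1}^n C_{y_i}$, but the two-point reduction you describe cannot deliver such a point. Quasi-concavity of $f(x,\cdot)$ gives $f(x,y_0)\geq\min\{f(x,y_1),f(x,y_2)\}$ for $y_0$ on the segment, hence only $C_{y_0}\subseteq C_{y_1}\cup C_{y_2}$; connectedness of the convex set $C_{y_0}$ then places it inside \emph{one} of two disjoint closed sets, never inside their intersection, so iterating can at best nest a single sublevel set inside a single $C_{y_i}$, which contradicts nothing. The correct conclusion of the two-point lemma is of a different shape: for any $\alpha<\inf_x\max\{f(x,y_1),f(x,y_2)\}$ there exists $y_0\in\mathrm{conv}\{y_1,y_2\}$ with $\inf_x f(x,y_0)\geq\alpha$; since the finite subcover gives $\inf_x\max_i f(x,y_i)>a$ (a minimum of a lower semi-continuous function on a compact set), iterating produces $y_0\in\mathrm{conv}\{y_1,\dots,y_n\}\subseteq Y$ with $\inf_x f(x,y_0)\geq a$, contradicting $\sup_y\inf_x f<a$ directly --- no common point $x^*$ is ever exhibited. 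Second, the two-point lemma itself is left entirely unproven and declared ``the main obstacle''; it carries essentially all of the content of the theorem (in particular, the closedness of the two pieces $I_j=\{z\in[y_1,y_2]:\ C_z\subseteq C^{b}_{y_j}\}$ of the segment is exactly where upper semi-continuity of $f(x,\cdot)$ and the auxiliary level $b$ must be used). As written, the proposal is a plan whose decisive step is missing and whose concluding contradiction is aimed at the wrong target; both defects are repairable by following Komiya's argument, but neither is currently supplied.
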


\medskip
The following lemma is adapted from~\cite{Holevo2000reliability} and~\cite{SchumacherWestmoreland1997sending}, and it constitutes the main technical part of the
Holevo-Schumacher-Westmoreland theorem (see also, e.g.,~\cite[Chapter 20]{Wilde2013quantum}).
\begin{lemma}
\label{lem:HSW}
Let $\{p(x),\rho_x\}_{x\in\mc{X}}$ be an ensemble of quantum states on a Hilbert space
of finite dimension, where $p$ is a probability distribution on $\mc{X}$. Let
\beq
R<\chi(\{p(x),\rho_x\}_x)\equiv\sum_xp(x)D\big(\rho_x\|\sum_xp(x)\rho_x\big)
\eeq
be fixed. Denote $\mc{S}_n=\{\rho_{x_1}\ox\cdots\ox\rho_{x_n}\}_{x^n\in\mc{X}^{\times n}}$
and let $p^n(x^n)=\prod_{i=1}^np(x_i)$ be the product distribution on $\mc{X}^{\times n}$.
For each $n\in\mathbb{N}$, there exist a set of quantum states $\mc{O}_n\equiv\{\omega_n^{(m)}\}_{m=1}^M\subseteq \mc{S}_n$ and a quantum measurement $\mc{D}_n\equiv\{\Lambda_n^{(m)}\}_{m=1}^M$ such that $M=\lfloor2^{nR}\rfloor$ and
\beq
\widetilde{P_s}((\mc{O}_n,\mc{D}_n))
:=\frac{1}{M}\sum_{m=1}^M\tr\omega_n^{(m)}\Lambda_n^{(m)}
\rar 1, \quad \text{as } n\rar\infty.
\eeq
The set $\mc{O}_n$ can be constructed by randomly choosing each $\omega_n^{(m)}$ from
$\mc{S}_n$ according to the probability distribution $p^n$.
\end{lemma}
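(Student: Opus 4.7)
The plan is to follow the standard random coding argument of Holevo and Schumacher-Westmoreland, combined with the Hayashi-Nagaoka operator inequality. Let $\bar\rho:=\sum_{x\in\mc{X}} p(x)\rho_x$ denote the average state, and fix a small $\delta>0$. The key analytic objects are the $\delta$-typical projector $\Pi_n^\delta$ of $\bar\rho^{\ox n}$ (with respect to the eigenbasis of $\bar\rho$) and, for each sequence $x^n\in\mc{X}^{\times n}$, the conditionally typical projector $\Pi_n^{x^n,\delta}$ associated with $\rho_{x^n}:=\rho_{x_1}\ox\cdots\ox\rho_{x_n}$. For $n$ large enough, these satisfy the standard properties: $\tr\bar\rho^{\ox n}\Pi_n^\delta\geq 1-\epsilon_n$; $\mb{E}_{x^n\sim p^n}\tr\rho_{x^n}\Pi_n^{x^n,\delta}\geq 1-\epsilon_n$; the operator inequality $\bar\rho^{\ox n}\Pi_n^\delta\leq 2^{-n(H(\bar\rho)-\delta)}\Pi_n^\delta$; and the dimension bound $\mb{E}_{x^n\sim p^n}\tr\Pi_n^{x^n,\delta}\leq 2^{n(\sum_x p(x)H(\rho_x)+\delta)}$, with $\epsilon_n\to 0$.

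Next I would draw $M=\lfloor 2^{nR}\rfloor$ codewords $X^n(m)\sim p^n$ independently and set $\omega_n^{(m)}:=\rho_{X^n(m)}$. Define
\beq
S_n^{(m)}:=\Pi_n^\delta\,\Pi_n^{X^n(m),\delta}\,\Pi_n^\delta,
\eeq
and take the decoding POVM to be the pretty-good measurement
\beq
\Lambda_n^{(m)}:=\Big(\sum_{m'=1}^M S_n^{(m')}\Big)^{-1/2} S_n^{(m)} \Big(\sum_{m'=1}^M S_n^{(m')}\Big)^{-1/2}.
\eeq
The Hayashi-Nagaoka operator inequality then gives, for each $m$,
\beq
\tr\omega_n^{(m)}(\1-\Lambda_n^{(m)}) \leq 2\,\tr\omega_n^{(m)}(\1-S_n^{(m)}) + 4\sum_{m'\neq m}\tr\omega_n^{(m)}S_n^{(m')}.
\eeq

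I would then take expectation over the random codebook. The first term tends to zero uniformly by a gentle-measurement argument combining the two typicality estimates above. For the second term, independence of distinct codewords yields $\mb{E}\,\tr\omega_n^{(m)}S_n^{(m')}=\tr(\bar\rho^{\ox n}\Pi_n^\delta\bar\Pi_n^\delta\Pi_n^\delta)$, where $\bar\Pi_n^\delta:=\mb{E}_{x^n\sim p^n}\Pi_n^{x^n,\delta}$. Applying the spectral bound on $\bar\rho^{\ox n}\Pi_n^\delta$ followed by $\tr\bar\Pi_n^\delta\leq 2^{n(\sum_x p(x)H(\rho_x)+\delta)}$ produces the estimate $\mb{E}\,\tr\omega_n^{(m)}S_n^{(m')}\leq 2^{-n(\chi-2\delta)}$, where $\chi=H(\bar\rho)-\sum_x p(x)H(\rho_x)$ is the Holevo information. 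Multiplying by $(M-1)\leq 2^{nR}$, the second term in the Hayashi-Nagaoka bound is at most $2^{n(R-\chi+2\delta)}$ in expectation, which vanishes as $n\to\infty$ once $\delta$ is chosen so that $R-\chi+2\delta<0$. Hence the expected average success probability tends to $1$, so at least one deterministic realization of the codebook achieves $\widetilde{P_s}\to 1$, giving the desired $\mc{O}_n$ and $\mc{D}_n$.

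The main obstacle, compared to Shannon's classical random coding argument, is the non-commutativity of $\Pi_n^\delta$ and $\Pi_n^{X^n(m),\delta}$: their product is no longer a projector, and the naive union-bound style analysis breaks down. The Hayashi-Nagaoka inequality is the clean tool that handles this, absorbing the non-commutativity into the multiplicative constants $2$ and $4$ above. Once that inequality is in hand, the remaining analysis parallels the classical Shannon coding argument, with $\chi$ playing the role of classical mutual information.
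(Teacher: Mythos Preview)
Your argument is correct and is essentially the standard textbook proof of the HSW achievability bound. Note, however, that the paper does not actually supply a proof of this lemma: it simply states the lemma as ``adapted from~\cite{Holevo2000reliability} and~\cite{SchumacherWestmoreland1997sending}'' and refers to~\cite[Chapter 20]{Wilde2013quantum} for details. What you have written is precisely the argument one finds in that last reference---random i.i.d.\ codebook, sandwiched conditionally-typical projectors $\Pi_n^\delta\Pi_n^{x^n,\delta}\Pi_n^\delta$, the square-root (pretty-good) measurement, and the Hayashi--Nagaoka operator inequality to split the error into a ``miss'' term and a ``collision'' term. The original Schumacher--Westmoreland and Holevo papers predate the Hayashi--Nagaoka inequality and use somewhat messier estimates, so your presentation is in fact the cleaner modern version rather than a different route. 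One minor remark: the operator inequality you quote should read $\Pi_n^\delta\,\bar\rho^{\ox n}\,\Pi_n^\delta\leq 2^{-n(H(\bar\rho)-\delta)}\Pi_n^\delta$ (with the projector on both sides), which is exactly what your collision-term computation actually uses.
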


\medskip
The following lemma is a trivial generalization of~\cite[Lemma 3]{HayashiTomamichel2016correlation},
where the case $\mc{P}=\mc{P}_\sigma$ was proven.
\begin{lemma}
\label{lem:pinching-sre}
let $\rho\in\mc{S}(\mc{H})$ be a quantum state and $\sigma\in\mc{L}(\mc{H})_+$
be positive semidefinite. Let $\{P_i\}_{i=1}^M$ be a set of projections such that $\sum_iP_i=\1$
and for each $i$, $\supp(P_i)$ is contained in an eigenspace of $\sigma$.
Then for $\alpha\geq 0$ and the CPTP map $\mc{P}:X\mapsto\sum_iP_iXP_i$, we have
\beq
D_\alpha^*(\rho\|\sigma)\leq D_\alpha^*(\mc{P}(\rho)\|\sigma)+f_\alpha(M),
\eeq
where
\beq
f_\alpha(M)=
\begin{cases}
\log M,  &\text{  if } \alpha\in[0,2], \\
2\log M, &\text{  if } \alpha>2.
\end{cases}
\eeq
\end{lemma}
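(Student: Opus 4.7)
My plan is to extend the original argument of Hayashi and Tomamichel, who proved the special case $\mc{P}=\mc{P}_\sigma$ with $M=v(\sigma)$, and to verify that both of its key ingredients remain valid after replacing $v(\sigma)$ by $M$. The first ingredient is the generalized pinching inequality $\rho\leq M\,\mc{P}(\rho)$. I set $V:=\sum_{j=1}^{M} e^{2\pi\mathrm{i}j/M}P_j$, which is unitary because the $P_j$ are orthogonal and sum to $\1$, and verify the identity $\mc{P}(X)=\frac{1}{M}\sum_{k=0}^{M-1} V^k X (V^\dagger)^k$. Since every conjugate $V^k\rho(V^\dagger)^k$ is positive semidefinite, the $k=0$ term is dominated by the sum, giving $\rho\leq M\,\mc{P}(\rho)$. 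The second ingredient is the commutation $\mc{P}(\sigma^a X\sigma^b)=\sigma^a\,\mc{P}(X)\,\sigma^b$ for all real $a,b$, which follows at once from $[P_i,\sigma]=0$, itself a direct consequence of $\supp(P_i)$ lying in an eigenspace of $\sigma$.

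With $s:=(1-\alpha)/(2\alpha)$ and $A:=\sigma^s\rho\sigma^s$, conjugating the pinching inequality yields $A\leq M\,\mc{P}(A)$ with $\mc{P}(A)=\sigma^s\,\mc{P}(\rho)\,\sigma^s$. For $\alpha>2$, monotonicity of $X\mapsto\tr X^\alpha$ on positive operators gives $\tr A^\alpha\leq M^\alpha\tr\mc{P}(A)^\alpha$; taking logarithms and dividing by $\alpha-1>0$ produces $D_\alpha^*(\rho\|\sigma)\leq\frac{\alpha}{\alpha-1}\log M+D_\alpha^*(\mc{P}(\rho)\|\sigma)\leq 2\log M+D_\alpha^*(\mc{P}(\rho)\|\sigma)$, using $\frac{\alpha}{\alpha-1}\leq 2$ for $\alpha\geq 2$. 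For $\alpha\in(1,2]$ this crude estimate is too weak; I would instead use the operator monotonicity of $x\mapsto x^{\alpha-1}$ (valid since $\alpha-1\in(0,1]$) to lift $A\leq M\,\mc{P}(A)$ to $A^{\alpha-1}\leq M^{\alpha-1}\mc{P}(A)^{\alpha-1}$. Because $\mc{P}(A)^{\alpha-1}$ commutes with every $P_i$, the identity $\tr XY=\tr\mc{P}(X)\,Y$ (valid whenever $Y$ lies in the commutant of $\{P_i\}$) then gives
\[
\tr A^\alpha \;=\; \tr A\cdot A^{\alpha-1} \;\leq\; M^{\alpha-1}\tr A\cdot\mc{P}(A)^{\alpha-1} \;=\; M^{\alpha-1}\tr\mc{P}(A)\cdot\mc{P}(A)^{\alpha-1} \;=\; M^{\alpha-1}\tr\mc{P}(A)^\alpha,
\]
which after division by $\alpha-1>0$ delivers $D_\alpha^*(\rho\|\sigma)\leq\log M+D_\alpha^*(\mc{P}(\rho)\|\sigma)$.

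The main obstacle I expect is the regime $\alpha\in[0,1]$, where $\alpha-1\leq 0$ flips all of the monotonicity arguments above into the wrong direction: data processing already furnishes $D_\alpha^*(\mc{P}(\rho)\|\sigma)\leq D_\alpha^*(\rho\|\sigma)$ for $\alpha\in[1/2,1]$, and the task becomes bounding the gap from above by $\log M$. The cleanest route I see is to apply Rotfel'd's inequality -- which asserts $\tr f(\sum_k X_k)\leq\sum_k\tr f(X_k)$ for any concave $f:[0,\infty)\to\mb{R}$ with $f(0)\geq 0$ and any positive operators $X_k$ -- to $f(x)=x^\alpha$ and the unitary decomposition $M\,\mc{P}(A)=\sum_{k=0}^{M-1} V^k A V^{-k}$. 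Using the unitary invariance of $\tr X^\alpha$, this yields
\[
M^\alpha\,\tr\mc{P}(A)^\alpha \;=\; \tr\Big(\sum_{k=0}^{M-1} V^k A V^{-k}\Big)^{\!\alpha} \;\leq\; \sum_{k=0}^{M-1}\tr(V^k A V^{-k})^\alpha \;=\; M\,\tr A^\alpha,
\]
so $\tr\mc{P}(A)^\alpha\leq M^{1-\alpha}\tr A^\alpha$. Taking $\log$ and dividing by $\alpha-1<0$ reverses the direction and produces $D_\alpha^*(\rho\|\sigma)\leq\log M+D_\alpha^*(\mc{P}(\rho)\|\sigma)$, as required; the endpoint $\alpha=0$ is then handled by continuity, completing the proof.
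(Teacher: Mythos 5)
Your proof is correct and takes essentially the same approach as the paper, whose proof of this lemma consists precisely of the remark that the Hayashi--Tomamichel argument goes through once $\rho\le v(\sigma)\mc{P}_\sigma(\rho)$ is replaced by the generalized pinching inequality $\rho\le M\,\mc{P}(\rho)$; you have verified exactly the two ingredients (that inequality, via the unitary $V$, and the commutation of $\mc{P}$ with powers of $\sigma$) that make this substitution legitimate, and your case analysis in $\alpha$ reproduces the original one. The only cosmetic deviation is the regime $\alpha\in(0,1)$, where you invoke Rotfel'd subadditivity of $\tr x^\alpha$ applied to $M\mc{P}(A)=\sum_k V^kAV^{-k}$, whereas the original argument gets the same estimate $\tr\mc{P}(A)^\alpha\le M^{1-\alpha}\tr A^\alpha$ from the operator anti-monotonicity of $t^{\alpha-1}$ applied to $A\le M\mc{P}(A)$; both are valid.
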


The proof of~\cite[Lemma 3]{HayashiTomamichel2016correlation} works here, Simply using the pinching
inequality $\rho\leq M\mc{P}(\rho)$ instead of $\rho\leq v(\sigma)\mc{P}_\sigma(\rho)$.
}

\end{document}